\tikzset{font={\fontsize{3pt}{12}\selectfont}}
\colorlet{mylinkcolor}{blue!66!black!80}
\newcommand{\bra}[1]{\langle#1|}
\newcommand{\ket}[1]{|#1\rangle}
\DeclareSymbolFontAlphabet{\mathscrsfs}{rsfs}
\newcommand{\dS}{\dot{S}}
\newcommand{\dSM}{\dot{S}_{\rm M}}
\newcommand{\hX}{\hat{X}}
\newcommand{\hGamma}{\hat{\Gamma}}
\newcommand{\rP}{{\mathbb P}}
\newcommand{\Ps}{P^{\rm s}}
\newcommand{\Prob}[1]{\ensuremath{
    {\operatorname{\mathbb{P}}\left[ {#1} \right]}
    }
}
\newcommand{\Oh}[1]{\ensuremath{\mathcal{O}\left(#1\right)}}
	\newcommand{\oh}[1]{\ensuremath{o\left(#1\right)}}
\definecolor{black}{RGB}{0,0,255}
\newtheorem{lemma}{Lemma}
\newtheorem{corollary}{Corollary}
\theoremstyle{plain}
\newtheorem{observation}{Observation}
\newtheorem*{observation*}{Observation}  
\newtheorem{assumption}{Assumption}
\newcommand{\mc}{\mathcal}
\definecolor{mygreen}{RGB}{0, 152, 0}
\NewDocumentCommand{\1}{g}{%
	\ensuremath{\mathds{1}%
		\IfNoValueTF{#1}
		{}
		{_{#1}}
	}%
}
	\RenewDocumentCommand{\Pr}{g}{%
		\ensuremath{\operatorname{\mathbb{P}}%
		\IfNoValueTF{#1}
			{}
			{\left[ #1 \right]}
		}%
	}
\renewcommand{\d}{{\rm d}}
\NewDocumentCommand{\E}{m g}{%
		\ensuremath{\operatorname{\mathbb{E}}%
		\IfNoValueTF{#2}
			{}
			{_{#1}}
		\left[%
		\IfNoValueTF{#2}
			{#1}
			{#2}
		\right]%
		}%
	}
	\NewDocumentCommand{\Var}{m g}{%
		\ensuremath{\operatorname{Var}%
		\IfNoValueTF{#2}
			{}
			{_{#1}}
		\left[%
		\IfNoValueTF{#2}
			{#1}
			{#2}
		\right]%
		}%
	}
\newtheorem{theorem}{Theorem}
\newcommand{\func}[1]{{\ensuremath{\mathsf{#1}}}}
\newcommand{\poly}{\func{poly}}
\newcommand{\R}{\mathbb{R}}
\newcommand{\N}{\mathbb{N}}
\newcommand{\tridiagmatrix}[3]{%
	\begin{pmatrix}
		#1 & #2    &        &        &         \\
		#3 & #1    & #2     &        &         \\
		& \ddots & \ddots & \ddots &         \\
		&        & #3     & #1     & #2      \\
		&        &        & #3     & #1
	\end{pmatrix}
}
\definecolor{changeColour}{RGB}{0,0,0}
\definecolor{mygreen}{RGB}{0, 152, 0}
\long\def\@makecaption#1#2{%
  \vskip\abovecaptionskip
  \begingroup
    \parindent=0pt
    \justifying
    \noindent \small #1.\ \ignorespaces#2\par
  \endgroup
  \vskip\belowcaptionskip
}
\begin{document}
 	\title{Consistent time reversal and reliable and accurate inference in the presence of memory}
 
\author{Tassilo Schwarz}
\affiliation{Mathematical bioPhysics group, Max Planck Institute for Multidisciplinary Sciences, 37077 G\"{o}ttingen, Germany}
\affiliation{Mathematical Institute, University of Oxford, 
Oxford, OX2 6GG, United Kingdom}

\author{Anatoly B. Kolomeisky}
\affiliation{Department of Chemistry, Rice University, Houston,Texas 77005, USA}
\affiliation{Center for Theoretical Biological Physics, Department of Chemical and Biomolecular Engineering and Department of Physics and Astronomy, Rice University, Houston,Texas 77005, USA}

\author{Alja\v{z} Godec}
\email{agodec@mpinat.mpg.de}
\affiliation{Mathematical bioPhysics group, Max Planck Institute for Multidisciplinary Sciences, 37077 G\"{o}ttingen, Germany}

	\begin{abstract}
\noindent 
Thermodynamic inference from coarse observations remains a key challenge. Memory, in particular 
correlations between consecutively observed 
mesostates, blur signatures of irreversibility and must be accounted for in defining physical time-reversal, which remains an open problem. We derive an experimentally accessible $k$-th order estimator for the entropy production rate. Using novel measure-theoretic techniques we prove necessary and sufficient conditions for \emph{guaranteed} lower bounds on the 
dissipation even in the strongly non-Markovian setting. The proof reveals that estimators saturated in the order unravel the duration of memory which needs to be considered in defining physically consistent time-reversal.~We show 
that 
Markovian estimators in absence of a time-scale separation lead to artifacts, 
which convey no physical meaning. Similarly, estimators \emph{not} saturated in the order may
overestimate the dissipation.
The necessity of \emph{correctly} accounting for memory in thermodynamic inference from strongly non-Markovian observations underscores the still underappreciated challenges and intricacies in defining and understanding 
irreversibility in
presence of memory. Our results     
will 
hopefully stimulate experiments systematically considering thermodynamic inference on multiple scales \emph{consistently} accounting for memory.
\end{abstract}
 	\maketitle

Measurements in complex systems usually have finite resolution~\cite{Gladrow,Battle,Dieball_PRL,Dieball_PRR} or do \emph{not} resolve at all relevant degrees of freedom---they just probe some low-dimensional projections of the dynamics, where many microscopic states become ``lumped'' onto the same observable ``state''. Examples are macroscopic observables in condensed matter systems, such as magnetization~\cite{Ocio}, dielectric response~\cite{dielectric}, and diverse order parameters~\cite{Satya_1,Eli_p,Eli_x}, as well as other observables (e.g.\ molecular extensions or FRET efficiencies and lifetimes) in neural~\cite{grandpreDirectEstimatesIrreversibility2024}, single-molecule~\cite{order,Woly,Xie,Dima,Brujic,Hagen,Igor,Olivier,Dima_test,Toolbox} and particle-tracking~\cite{Krueger,Narinder} experiments.

It is well known that such coarse graining introduces memory in the dynamics of the observable quantities\textcolor{changeColour}{~\cite{Zwanzig,Haenggi_1,Grigolini,rahavFluctuationRelationsCoarsegraining2007}}
(see~\cite{Dieball_PRL} for the effect of a finite resolution). Much less is understood about 
out-of-equilibrium systems that display memory. In particular, a consistent formulation\textcolor{changeColour}{~\cite{wangDetailedBalanceReversibility2007a,Mehl,Massi,Puglisi,Teza,Talkner,hartichEmergentMemoryKinetic2021a,David,van2022thermodynamic,rahavFluctuationRelationsCoarsegraining2007,dieballPerspectiveTimeIrreversibility2025}} and inference~\cite{van2022thermodynamic,Polettini,Challenge,Snippets,Blom_2024,Udo_PNAS} of dissipation in the presence of slow dissipative hidden degrees of freedom is often a daunting task.~It is very difficult to reliably infer violations of time reversibility (i.e., the detailed balance) from experiments on individual molecules~\cite{Challenge,ratzke2014four,dieballPerspectiveTimeIrreversibility2025}.         

Violations of detailed balance on the level of individual stochastic trajectories $\Gamma_t\equiv(x_\tau)_{0\le\tau\le t}$ may be quantified via the steady-state dissipation (entropy production) rate $\dS$. Given a \emph{physically consistent} time-reversal operation $\theta$~accounting for the highly-nontrivial (anti-)persistence due to memory~\cite{dieballPerspectiveTimeIrreversibility2025},   $\dS$ is defined as the relative entropy between the probability measure of a path $\rP[\Gamma_t]$ and its time reverse $\rP[\theta\Gamma_t]$ per unit time 
\cite{Spohn,Gaspard,seifertStochasticThermodynamicsFluctuation2012}, i.e.\
\begin{equation}
  \dS[\Gamma_t]=k_{\rm B}\lim_{t\to\infty}\frac{1}{t}\left\langle
  \ln\frac{\mathrm d \Pr{\Gamma_t}}{\mathrm d \Pr{\theta\Gamma_t}}\right\rangle,
\label{S}  
\end{equation}
where $\langle\cdot\rangle$ denotes the average over
${\rm d}\rP[\Gamma_t]$ and $k_{\rm B}$ is Boltzmann's constant. When $x_t$ is an overdamped Markov dynamics, we simply have $\theta\Gamma_t=(x_{t-\tau})_{0\le\tau\le
  t}$.
  Specifically, for an ergodic Markov process on a discrete state space $V$ with transition rates $L_{ji}$ from state $i$ to $j$ and stationary probabilities $\Ps$,
Eq.~\eqref{S} becomes \cite{Spohn,Gaspard,seifertStochasticThermodynamicsFluctuation2012}
\begin{equation}
  \dSM[\Gamma_t]=k_{\rm B}\sum_{i<j}(L_{ji}\Ps_i-L_{ij}\Ps_j)\ln\frac{L_{ji}}{L_{ij}}.
\label{SJ}  
\end{equation}
For thermodynamic consistency we must always have ${{L_{ji}>0}\implies {L_{ij}>0}}$ for any ${i,j\ne i\in V}$\cite{seifertStochasticThermodynamicsFluctuation2012}. We henceforth express all estimates of $\dS$ in units of $k_{\rm B}$.

When a coarse-grained version $\hX_t$ of the process is observed, where several microscopic states are lumped into the same observable state, provided a physically consistent time reversal $\theta$ \cite{wangDetailedBalanceReversibility2007a,hartichEmergentMemoryKinetic2021a,hartichCommentInferringBroken2022,Blom_2024}, the definition in Eq.~\eqref{S} with lumped paths $\hGamma_t=(\hX_\tau)_{0\le\tau\le t}$
may still hold, whereas Eq.~\eqref{SJ} does \emph{not}.
The reason is that $\hX_t$ displays memory due to coarse-graining, i.e., correlations between consecutively observed mesostates, which must be taken explicitly into account in the time-reversal
\cite{martinezInferringBrokenDetailed2019,hartichEmergentMemoryKinetic2021a,David,van2022thermodynamic,Blom_2024,Zhao}.
Determining the precise manner in which memory must enter the consistent time reversal remains elusive. 
If one simply replaces the rate $L_{ji}$ between microscopic states in the Markovian result~\eqref{SJ} 
with the effective rates between (mesoscopic) lumped states 
averaged over $\Ps$, i.e.\ implicitly assuming the existence of a (potentially unphysical~\cite{David}) infinite timescale
separation between dynamics within and between lumped states 
one finds~\cite{Massi} ${\dSM[\hGamma_t]\le\dSM[\Gamma_t]}$.~However,
 this inequality does \emph{not} imply that a dependence of
$\dSM[\hGamma_t]$ on the properties of the lumped state space carries any physical meaning since $\dSM[\hGamma_t]$ is \emph{not} the dissipation rate of the coarse-grained process, ${\dSM[\hGamma_t]\ne
  \dS[\hGamma_t]}$.~In absence of an infinite timescale separation
  thermodynamic interpretations of $\dSM[\hGamma_t]$ (see, e.g.\ 
\cite{yuInversePowerLaw2021,yuStatespaceRenormalizationGroup2022}) may lead to misconceptions and 
guaranteed higher-order estimators $\dS^{\rm est}_{k}$ are required, as we explicitly show below.~A time-scale separation on some scale does \emph{not} imply a 
separation on coarser scales.~Under time-scale separation on \emph{all} scales, $\dSM$ does \emph{not} display any scale dependence, see Appendix~D.

\begin{figure}[ht!]
  \centering
    \includegraphics[width=\linewidth]{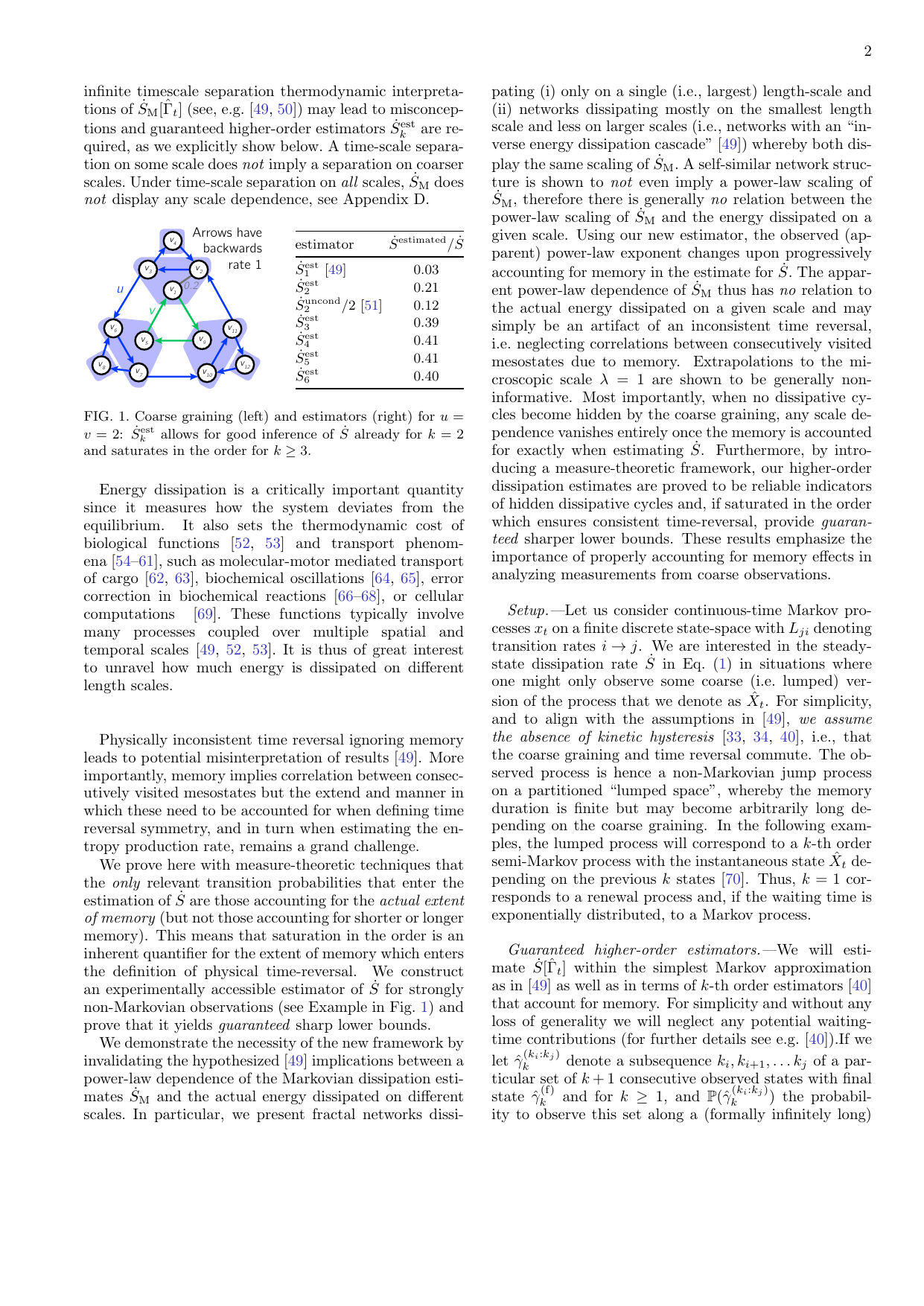}
  \caption{Coarse graining (left) and estimators (right) for $u=v=2$: $\dS^{\mathrm{est}}_{k}$ allows for good inference of $\dS$ already for $k=2$ and saturates in the order for $k\geq 3$. 
  }
  \label{fig:double-cycle-and-estimates}
\end{figure}

Energy dissipation is a critically important quantity since it measures how the system deviates from the equilibrium. It also sets the thermodynamic cost of 
biological functions~\cite{Qian,Astumian} and transport phenomena~\cite{CSL_3,CSL_4,CSL_5,Saito,Massi-dissipation,Ito2020Stochastic,VanVu2023PRX,Dieball_2024}, such as molecular-motor mediated
transport of cargo~\cite{Tolya,Udo},
biochemical
oscillations~\cite{Barato,Tu_osci}, error correction in biochemical reactions~\cite{Hopfield,Ninio,mallory2020we}, or cellular computations
~\cite{comput}.~These functions typically involve
many processes coupled over multiple spatial and temporal scales
\cite{Qian,Astumian,yuInversePowerLaw2021}.~It is thus of great interest to unravel how much energy is dissipated on
different length scales.\\

\indent

Physically inconsistent time reversal ignoring memory leads to potential misinterpretation of results \cite{yuInversePowerLaw2021}. More importantly, memory implies correlation between consecutively visited mesostates but the extend and manner in which these need to be accounted for when defining time reversal symmetry, and in turn when estimating the entropy production rate, remains a grand challenge.

We prove here with 
measure-theoretic techniques that the \emph{only} relevant transition probabilities that enter the estimation of $\dS$ are those accounting for the \emph{actual extent of memory} (but not those accounting for shorter or longer memory). This means that saturation in the order is an inherent quantifier for the extent of memory which enters the definition of physical time-reversal. 
We construct an experimentally accessible estimator of $\dS$ for strongly non-Markovian observations (see Example in Fig.~\ref{fig:double-cycle-and-estimates}) and prove that it yields \emph{guaranteed} sharp lower bounds.

We demonstrate the necessity of the new framework by invalidating the hypothesized~\cite{yuInversePowerLaw2021} implications between a power-law dependence of the Markovian dissipation estimates $\dSM$ and the actual energy dissipated on different scales.~In particular, we present fractal networks dissipating (i) only on a single (i.e., largest) length-scale and (ii) 
networks dissipating mostly on the smallest length
scale and less on larger scales (i.e., networks with an ``inverse energy dissipation cascade'' \cite{yuInversePowerLaw2021})   whereby both display the same scaling of $\dSM$.~A self-similar network structure is shown to \emph{not} even imply a power-law scaling of $\dSM$, therefore there is generally \emph{no} relation between the power-law scaling of $\dSM$ and the energy dissipated on a given scale. Using our new estimator, the observed (apparent) power-law exponent changes upon progressively accounting for memory in the estimate for $\dS$.~The apparent power-law dependence of $\dSM$ thus has \emph{no} relation to the actual energy dissipated on a given scale and may simply be an artifact of an inconsistent time reversal, i.e.\ neglecting correlations between consecutively visited mesostates due to memory. Extrapolations to the microscopic scale $\lambda=1$ are shown to be generally non-informative. Most importantly, when no dissipative cycles become hidden by the coarse~graining, any scale dependence vanishes entirely once the memory is accounted for exactly when estimating $\dS$. Furthermore, by introducing a measure-theoretic framework, our higher-order dissipation estimates are proved to be reliable indicators of 
hidden dissipative cycles and, if saturated in the order which ensures consistent time-reversal, provide \emph{guaranteed} sharper lower bounds. These results emphasize the importance of properly accounting for memory effects in analyzing measurements from coarse observations.

\indent \emph{Setup.---}Let us consider continuous-time Markov processes $x_t$ on
a finite discrete state-space with $L_{ji}$ denoting transition rates ${i\to j}$. We are interested in the steady-state dissipation rate $\dS$ in Eq.~\eqref{S} in situations where one might only observe some coarse
(i.e.\ lumped) version of the process that we denote as $\hX_t$. For simplicity, and to align with the assumptions in \cite{yuInversePowerLaw2021}, \emph{we assume the absence of kinetic hysteresis}
\cite{hartichEmergentMemoryKinetic2021a,David,Blom_2024}, i.e., that the coarse~graining and time reversal commute. The observed process is hence a non-Markovian jump process on a partitioned ``lumped space'', whereby the memory duration is finite but may become arbitrarily long depending on the coarse~graining. In the following examples, the lumped process will correspond to a $k$-th order semi-Markov process with the
instantaneous state $\hX_t$ depending on the previous $k$ states \cite{zhaoMarkovstateHolography2025}. Thus, $k=1$ corresponds to a renewal process and, if the waiting time is exponentially distributed, to a Markov process.

\emph{Guaranteed higher-order estimators.---}We will estimate $\dS[\hGamma_t]$ within the simplest Markov approximation as in \cite{yuInversePowerLaw2021} as well as in terms of $k$-th order estimators \cite{Blom_2024} that account for memory.  For simplicity and without any loss of generality
we will neglect any potential waiting-time contributions (for further details see e.g.~\cite{Blom_2024}).If we let $\hat{\gamma}^{(k_i:k_j)}_k$ denote a subsequence $k_i,k_{i+1},\ldots k_{j}$ of a particular set of $k+1$ consecutive observed
states with final state $\hat{\gamma}_k^{({\rm f})}$ and for $k\ge 1$, and $\mathbb{P}(\hat{\gamma}^{(k_i:k_j)}_k)$ the probability to observe this set along a (formally infinitely long) trajectory, then  the $k$-th order
estimator of $\dS$ is given by
\begin{align}
\dS^{\rm est}_k\equiv
\frac{1}{T}\sum_{\hat{\gamma}_k}\rP[\hat{\gamma}^{(1:k+1)}_k]\ln\frac{\rP[\hat{\gamma}_k^{({\rm f})} |\hat{\gamma}_k^{(1:k)}]}{\rP[(\theta\hat{\gamma}_k)^{({\rm f})} | (\theta\hat{\gamma}_k)^{(1:k)}]},
\label{estimator}
\end{align}  
where $T$ is the average waiting time per state in the observed trajectories, $\theta\hat{\gamma}_k$ is the corresponding reversed set of $k+1$ states, and the upper indices refer to the discrete state sequence \footnote{That is, the sequence of states of the embedded Markov chain.}.~We prove in \cite{SM} that $\dS^{\rm uncond}_{k} = \sum_{i=1}^k \dS^{\rm est}_{i}$ and highlight in Appendix~A pitfalls of using $\dS^{\rm uncond}_{k}$.
If $\hGamma_t$ is a $n$-th order semi-Markov process then $\dS^{\rm est}_{k}=\dS^{\rm est}_{n},\,\forall k\geq n$ (see proof in~\cite{SM}), and if $\hGamma_t=\Gamma_t$ is a Markov process we have  $\dS^{\rm est}_{1}=\dS$. 
Notably, if the observed semi-Markov process $\hat \Gamma$ is of order $n$, then 
$\dS^{\rm est}_k$ for $k\ge n$ provides a lower bound on the microscopic $\dS$
\begin{align}
\dS(\Gamma) \geq \dS^{\rm est}_k(\hat \Gamma) \qquad \forall k \geq n,
\label{ineQQ}
\end{align}
implying, alongside $\dS^{\rm est}_{k}=\dS^{\rm est}_{n}\; \forall k\geq n$, that the estimator  will reach a plateau upon increasing $k$ once reaching the correct order $n$; at this order (and not before) consistent time-reversal is obtained, correctly accounting for correlations between consecutive mesostates. Furthermore, $\dS^{\rm est}_{n}$ (but \emph{not} $\dS^{\rm est}_{k<n}$) always 
provides a lower bound on $\dS(\Gamma)$ (see our proof in~\cite{SM}; this has been stated, but not proven, for \emph{discrete} time in \cite{roldanEstimatingDissipationSingle2010}). As shown in Appendix~E, in continuous time mesoscopic paths may be a projection of uncountably many microscopic paths, creating the need for a novel measure-theoretic technique that we develop here. Our proof includes the discrete time result.~Crucially,  \emph{no} statement can be made about the monotonicity of $\dS^{\rm est}_{k}$ as a function of $k$ for $k < n$; the inequality~\eqref{ineQQ} may be violated (see example in Appendix~A). 
Inferring the exact 
$\dS$ from projected observations is generally not possible because of dissipative cycles hidden in lumps.

\emph{Sharpening
inference at strong memory.---}First, we provide a system in Fig.~\ref{fig:double-cycle-and-estimates} that illustrates the benefits of accounting for long-range memory in thermodynamic inference: We observe that accounting for progressively longer memory $k$ yields a better estimate of $\dS$ via $\dS^{\mathrm{est}}_{k}$, until we reach saturation for $k \geq 3$.
Therefore, physically consistent time-reversal requires accounting for $k=3$ past states.
Note the substantial improvement over the Markovian estimate $\dSM = \dS^{\rm est}_1$ \cite{yuInversePowerLaw2021}. Further, while   $\dS^{\rm uncond}_{2}/2$ \cite{yuDissipationLimitedResolutions2024}
provides a lower bound  (this estimator was originally developed for the broader setting of blurred transitions \cite{ertelEstimatorEntropyProduction2024}), we recover with $\dS^{\mathrm{est}}_{2}$ in Fig.~\ref{fig:double-cycle-and-estimates} $75\%$ more at the same statistical need, and get saturation for $k\geq 3$ (upto statistical fluctuations).~Further parameters 
are considered in Appendix~F.
An efficient GPU-implementation of the estimator is provided.
We now demonstrate the necessity of accounting for memory in time reversal.

\begin{figure*}
    \centering
    \includegraphics[width=0.85\textwidth]{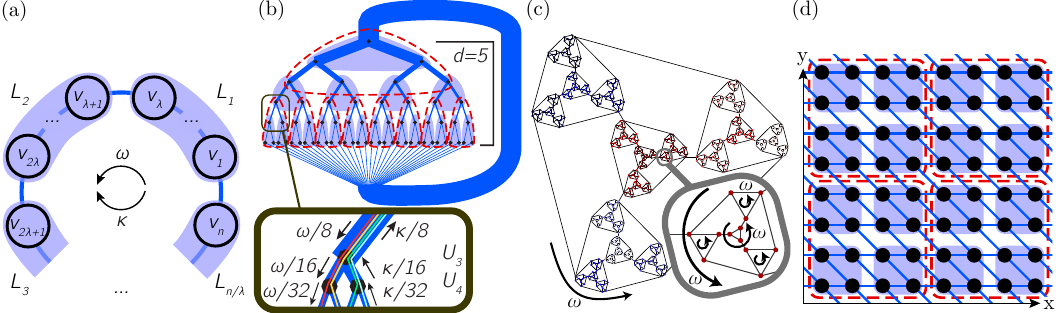}
\caption{ Model systems:~(a) Cycle graph with transition rates $\omega$ and $\kappa$ in counter-clockwise ($+$) and clockwise ($-$) direction, respectively. Lumps $L_1, L_2, \ldots, L_{n/\lambda}$ are highlighted in blue.(b)~Tree of depth $d=5$ with lumps of depth $l=1$ (shaded in blue) and depth $l=2$ (circled in dashed red).~Transition rates down $+$ and up $-$ (cyclic with respect to the thick edge) are set  ${\propto \omega}$ and ${\propto \kappa}$, respectively, chosen such that the cumulative transition rate between each level is ${\omega - \kappa}$.~Self-similarity becomes obvious upon noticing that each tree consists of multiple smaller trees.~Inset:~Four fundamental cycles passing through the highlighted subgraph on levels $U_3$, $U_4$ and corresponding transition rates.~(c)~Self-similar Sierpinski-type graph with lumps of size $\lambda=12$ indicated by the respective vertex coloring.~A stationary current $\omega$ runs through the outer side of each polygon, whereby the size of said polygons depends on the respective recursive depth.~The depicted graph has recursion depth $5$ and $n=768$ vertices.~Inset:Furthermore, a corresponding sequence of increasing jumptimes in $[0,t]$ must exist.lement of recursion-depth $2$ with all irreversible currents indicated by arrows.~(d)~Brusselator depicted as a grid graph as in \cite{yuInversePowerLaw2021}.~A portion
of the total $n = 450 \times 450$ states is shown with lumps of size
$\lambda = 2 \times 2$ (shaded in blue) and $\lambda = 4 \times 4$ (circled in dashed red).}  
    \label{Fig1}
\end{figure*}

\emph{Example 1:~Cycle graph.---}Consider Markovian
dynamics on a (trivially self-similar) simple cycle with $n\ge 4$ states with forward and backward transition
rates $\omega$ and $\kappa\ne\omega$, respectively (see Fig.~\ref{Fig1}a). The system 
has a single dissipative scale with an exact total dissipation rate
\begin{align}
  \dS = ( \omega - \kappa) \ln\left(\frac{\omega}{\kappa}  \right).
  \label{S_cycle}
\end{align}
As in \cite{yuInversePowerLaw2021}, we decimate the state space in $n/\lambda$ observable mesostates (we assume $n =0\, {\rm  mod}\,\lambda$),  with ``Markovian'' forward and backward rates according to
\cite{yuInversePowerLaw2021} given by $\Omega (\lambda) =
\omega/\lambda$ and ${\rm K}(\lambda) = \kappa/\lambda$, respectively. We obtain a single dissipative cycle with homogeneous steady-state
current $J=(\lambda/n)[\Omega (\lambda)- {\rm K}(\lambda)] =(\omega -\kappa)/n$ and affinity $A=(n/\lambda)\ln(\omega/\kappa)$, yielding the Markov estimate
	\begin{align}
		\dSM = J A =  \frac{1}{\lambda}  \left(\omega
                -\kappa\right)\ln \left(\frac{\omega}{\kappa}\right)
                \propto \lambda^{-1},
                 \label{eq:ring:Smist}
	\end{align}	
indeed an inverse power law according to \cite{yuInversePowerLaw2021}. However, the microscopic system has a single dissipative scale, the entire cycle. Thus, the scaling has \emph{no} implications for the energy dissipation on distinct length scales. In fact, we now show that it is merely an artifact of inconsistent time reversal in the presence of memory.

The coarse process is non-Markovian, to be precise, it is a $2^{\rm nd}$-order semi-Markov process (see also \cite{David}). This means that the waiting time within, 
and probability for a forward/backward transition from, a lump depends on the previous state.~The exact $\dS$
may be determined in terms of $J$ and the two-step affinity $A_2$~\cite{martinezInferringBrokenDetailed2019,David,Blom_2024}
 \begin{align}
\dS_2^{\rm est}&=JA_2=J\ln\left(\frac{\Phi_{+|+}}{\Phi_{-|-}}\right)=\dS,
 \label{2_cycle}
 \end{align}
where $\Phi_{\pm|\pm}$ denotes the conditional stationary probability for a step in the $\pm$ direction given that the previous step was also in the $\pm$ direction, and the proof of the last equality is given in the SM. These results are verified via simulations in Fig.~\ref{Fig2}a. Equality, as opposed to a lower bound, emerges because no dissipative cycles become hidden, and the waiting time contribution vanishes (see SM). Consistent time
reversal and hence thermodynamics thus remove the
artefactual scaling behavior.

\begin{figure}
    \centering
    \includegraphics[width=0.45\textwidth]{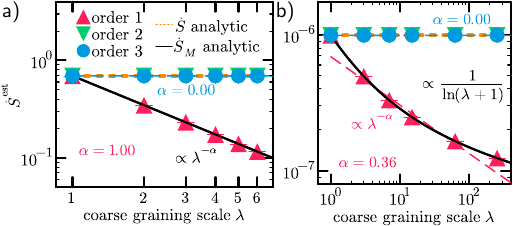}
    \caption{Entropy-production estimates for different coarse-graining scales:~(a)~$\dot S^{\rm est}_k$ for the ring graph with $n = 60$
  states. Higher-order estimators ($k \geq
  2$) allows for correct inference of $\dS$ (green, blue),
  while omitting memory leads to an artefactual power law (black and red lines). Analytical results are fully corroborated by simulations.~(b)~$\dot S^{\rm est}_k$ for the tree of depth $d=23$,  i.e.\ $n = 16777215$. Higher-order estimators allow for correct
  inference of $\dS$ (green, blue), whereas ignoring memory leads to a spurious inverse energy cascade (black line), which may be easily mistaken
  for a power law (red dashed fit). In both cases, we considered $100$ independent stationary trajectories each visiting  $10^9$ microscopic states.}
\label{Fig2}
\end{figure}

\emph{Example 2:~Tree graph.---} We now consider a perfectly self-similar network with a single dissipative \emph{scale} (i.e. the system size) but multiple (equivalent) dissipative \emph{cycles} each with a stationary current $J_c$~(see Fig.~\ref{Fig1}b). In this binary tree of depth $d$, every edge is bidirectional with rate $\propto\omega$ in the ``down'' ($+$) and
$\propto \kappa$ in the ``up'' ($-$) direction, respectively, such that the cumulative transition rate between each level is ${\omega-\kappa}$.  This way, the rates along adjacent edges in consecutive levels in both
directions $\pm$ are halved (see Inset in ~\ref{Fig1}b). 
The detailed mathematical construction is
described in the~SI.~The exact entropy production of the system in the cycle basis $\mathcal{C}$ with $2^d$ basis cycles reads (see proof in SM) 
	\begin{align}
	\dS = \sum_{c \in \mathcal C} J_c A_c = \frac{d+1}{2^{d+1}-1}(\omega-\kappa)\ln\left(\frac{\omega}{\kappa}\right),
                \label{Tree_exact}
	\end{align}
where the basis-cycle current (see \cite{schnakenbergNetworkTheoryMicroscopic1976} for details on cycle bases) is $J_c=(\omega -\kappa)/2^d(2^{d+1}-1)$. 

We lump the tree into mesoscopic states each having a depth $l<d$ (see
Fig.~\ref{Fig1}b). The Markovian estimate $\dSM$ for such a lumped system reads (see derivation in~\cite{SM})
\begin{align}
\!\!\dSM\!&=\!\frac{d+1}{2^{d+1}-1} \frac{\ln(2)}{\ln(\lambda+1)} (\omega - \kappa)
\ln\left(\frac{\omega}{\kappa}\right)\!\propto \frac{1}{\ln(\lambda+1)},
\label{Tree_Markov}
\end{align}
where $\lambda=n/n_s= 2^{l+1}-1$.~Now there is in fact
strictly \emph{no} power-law scaling of $\dSM$ in $\lambda$ despite a perfectly self-similar network structure.~However, one may in practice
identify a power-law ${\sim \lambda^{-0.36}}$~(see~Fig.~\ref{Fig2}b)~as in~\cite{yuInversePowerLaw2021}. Meanwhile, the microscopic system has a single dissipative scale, the system size. An observed power-law thus has \emph{no}
relation with energy dissipation on distinct scales and may \emph{not} even be a power law. It is an artifact of inconsistent time reversal in the presence of memory.

The lumping yields a $2^{\rm nd}$-order semi-Markov process
and hides \emph{no} dissipative cycles.
By accounting for memory via~$\dS_{k\ge 2}^{\rm
est}$~\cite{martinezInferringBrokenDetailed2019,David,Blom_2024} we, therefore, recover the exact microscopic result (see derivation in~\cite{SM})
 \begin{align}
   \dS_2^{\rm
     est}&=\frac{\omega-\kappa}{2^{d+1}-1}\frac{d+1}{l+1}\ln\left(\frac{\Phi_{+|+}}{\Phi_{-|-}}\right)=
   \dS_{k >2}^{\rm est}=\dS,
 \label{2_tree}
 \end{align}
where $\Phi_{\pm|\pm}$ is the conditional probability for
a transition between two lumps (at given coarseness) in the $\pm$ direction given that the preceding step occurred in the $\pm$ direction.~According to the last equality, consistent time reversal (and hence correct thermodynamics) removes the
artefactual scaling behavior.~The analytical results are confirmed by simulations~(see~\ref{Fig2}b)~and generalized to the more complex tree-diamond graph in Appendix~C.

\emph{Example 3:~Sierpinski-type~graph.---} As our next example, we consider a self-similar graph composed of nested polygons in the spirit of the Sierpinski fractal (see Fig.~\ref{Fig1}c). The rates are set such that each nested polygon is driven with $\omega$. Hence a polygon 
with recursive depth $l$ has affinity $A_l=3l\ln\omega$ (see derivation in \cite{SM}). The depth $l=1$ corresponds to innermost triangles.~The detailed construction is described in~\cite{SM}. In contrast to the previous examples, this network is designed to indeed display a scale-dependent energy dissipation rate $\dS$, whereby most
energy is dissipated on the smallest and less energy on larger scales (for quantitative statements see~\cite{SM}). For networks with a recursive depth larger than 3, we are not able to determine $P^{\rm s}_i$ (and hence $\dS$) analytically and therefore resort to numerical methods to compute $\dS$ (see~\cite{SM} for details).  
The system with recursive depth 7 (12288 microscopic states) was coarse-grained into lumps with recursive depth $l$, such that the
lumped graph perfectly preserves the original fractal structure (see Fig.~\ref{Fig1}c). However, in stark contrast to the previous examples, the coarse~graining here hides dissipative cycles. We determine $\dSM$
via Eqs.~(\ref{SJ}-\ref{estimator}) (with $k=1$)  and
$\dS^{\rm est}_{k\ge2}$ using Eq.~\eqref{estimator} from 108 simulated steady-state trajectories each with a total duration $5\times 10^8$ steps.   

Despite having properties that are ``orthogonal'' to the
previous examples---featuring a scale-dependent energy
dissipation rate and dissipative cycles becoming hidden by the lumping---it \emph{resembles} the same power-law scaling of the Markovian estimate $\dSM\propto \lambda^{-\alpha}$ (red line Fig.~\ref{Fig3}a)  in fact with the same exponent $\alpha\approx 1$ as in Eq.~\eqref{eq:ring:Smist}. This shows once more that a
power-law scaling of $\dSM$, and in particular also the exponent, does \emph{not} give any indication about dissipation on different lengths scales. Moreover,
the exponent $\alpha$ changes to $\approx 0.9$  upon accounting for memory via $\dS^{\rm est}_{k\ge 2}$ (blue line Fig.~\ref{Fig3}a), and the lumping yields a $2^{\rm nd}$-order semi-Markov process, since adjacent lumps are connected by exactly one microscopic transition (see proof in~\cite{SM}).
Accordingly, the numerical results show ${\dS^{\rm est}_{1} \neq \dS^{\rm est}_{2}=\dS^{\rm est}_{3}=\dS^{\rm est}_{4}}$. The fact that $\dS^{\rm est}_{k\ge 2}<\dS$ (i.e.\ equality not reached) is because dissipative cycles are hidden by the coarse graining.

\begin{figure}
    \centering
    \includegraphics[width=0.45\textwidth]{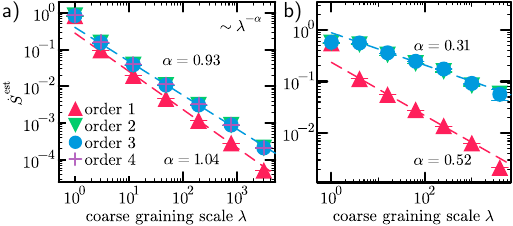}
\caption{Entropy-production estimates for different coarse-graining scales:
(a) $\dS^{\rm est}$ for Sierpinski-type graph for recursion depth 7 (${n =
    12288}$ vertices).~We find $\dSM\propto\lambda^{-1}$ while
  higher-order estimators $\dS^{\rm est}_{k\ge 2}$
  yield a power law with ${\alpha=0.93}$, corroborating numerically that the lumped process is a  $2^{\rm
  nd}$ order semi-Markov process.~(b)~$\dS^{\rm est}$ for the
  Brusselator with $n = 202500$ vertices. The virtual power law for
  the Markov estimate ${\dS_1^{\rm est}\propto \lambda^{-0.52}}$ is
  faster decaying than $\dS^{\rm est}_{k\ge 2}\propto\lambda^{-0.31}$
  obtained by accounting for memory. For details, see~\cite{SM}.}  
    \label{Fig3}
\end{figure}

 \emph{Example~4:~Brusselator.---}Finally, let us consider the Brusselator model~\cite{Nicolis,Qian_2002,Brussel_Udo} (see Fig.~\ref{Fig1}d) as in \cite{yuInversePowerLaw2021}, here with 202500
 microscopic states. We perform the coarse graining as was done in \cite{yuInversePowerLaw2021},  
 two levels of coarse~graining are highlighted in Fig.~\ref{Fig1}d.~We estimate $\dS$, $\dSM$, and $\dS^{\rm est}_{k\ge 2}$ from 25 simulated stationary trajectories each
  having $10^9$ steps (skipping $10^9$ initial steps).~Using Eq.~\eqref{estimator} with ${k=1}$ we infer $\dSM$ that agrees with the results in \cite{yuInversePowerLaw2021} (red triangles in
  Fig.~\ref{Fig3}b). In particular, we reproduce the apparent power-law scaling $\dSM\propto\lambda^{-\alpha}$ with the same exponent as in
 \cite{yuInversePowerLaw2021}, i.e.\ $\alpha\approx 0.52$.~Coincidentally, it is of the same order as the \emph{virtual} power-law of the tree and tree-diamond with $\alpha\approx 0.3$ (see Appendix~C), which in reality is \emph{not} a power law at all (see Eq.~\eqref{Tree_Markov}).

We now account for memory in the time-reversal via  $\dS^{\rm est}_{k\ge 2}$.~The lumped process is apparently
${2^{\rm nd}}$ or ${3^{\rm  rd}}$~order semi-Markov  as  ${\dS^{\rm est}_{2,3}=\dS^{\rm
  est}_{4}}$.~The fact that ${\dS^{\rm est}_{k\ge 2}<\dS}$
 is a consequence of 
 dissipative cycles that become hidden by the coarse~graining.~Higher-order estimates also display an apparent power-law scaling albeit with a smaller exponent ${\dS^{\rm
  est}_{\ge 2}\propto\lambda^{-0.31}}$ for considered $\omega,\kappa$.~However, according to previous
examples, we must conclude that these power laws
in fact have \emph{no} implications about energy dissipation on different length scales (see Section~IX.\ \cite{SM}).

\emph{Conclusion.---}A novel, \emph{experimentally accessible} dissipation estimator for coarse-grained observations in continuous-time was introduced and proved that it yields a guaranteed lower-bound. The estimator is applicable to strongly non-Markovian observations (i.e.\ beyond weakly Markovian \cite{brandnerDynamicsMicroscaleNanoscale2025,brandnerDynamicsMicroscaleNanoscale2025a} and correlation-time expansions \cite{foxCorrelationTimeExpansion1983,foxFunctionalcalculusApproachStochastic1986}).
For the first time, the estimator provides insight into the extent of memory that needs to be considered when defining physically consistent time-reversal.
The necessity of correctly accounting for memory in time-reversal 
was demonstrated by invalidating a recently proposed relation between an apparent power-law dependence of Markovian estimates for the energy dissipation rate $\dSM$ from coarse-grained observations and the actual dissipation on different length scales were presented and analyzed. We constructed fractal networks with a single dissipative scale and self-similar networks with an ``inverse energy dissipation cascade''~\cite{yuInversePowerLaw2021}, where most energy is dissipated on the smallest and less energy on larger scales. Both systems exhibit the same scaling behavior of the energy dissipation at different scales. The apparent scaling exponent gradually reduces upon accounting for memory in the dissipation estimates and only if saturated in the order, provides information on hidden dissipative cycles (see~\cite{SM}); scaling laws ignoring memory reflect a mixture of memory and hidden cycles.~The hypothesized power-law scaling~\cite{yuInversePowerLaw2021}~has already been considered in experiment but a power law has not been validated~\cite{fosterDissipationEnergyPropagation2023};
~we hope that our novel framework stimulates experiments systematically considering multiple scales.~When no dissipative
cycles become hidden by the coarse~graining, any scale dependence vanishes as soon as the memory is accounted for exactly in the time-reversal operation and we can infer the microscopic dissipation. Therefore, an inverse power-law dependence of the dissipation rate generally has \emph{no} implications for scale-dependent energy dissipation and may simply be an artifact of inconsistent time reversal in
presence of memory.
However, it may still be possible to infer meaningful information about the microscopic dynamics, e.g.\ the presence of hidden dissipative cycles, via higher-order dissipation estimators established here.~These~incorporate a consistent time reversal in the presence of memory and, if (and only if) 
saturated in the order, yield \emph{experimentally accessible~\cite{vollmarModelfreeInferenceMemory2024,schornerConformationalMemoryProtein2015}, feasible~\cite{songNonMarkovModelsSinglemolecule2023, salnikovUsingHigherorderMarkov2016a}, sharper} and \emph{guaranteed} lower bounds on dissipation, further highlighting
the importance and benefits of accounting for memory in analyzing data from coarse observations.

\emph{Acknowledgments.---}Financial support
from the European Research Council (ERC) under the European Union’s Horizon Europe research and innovation program (Grant Agreement No. 101086182 to A.~G.) and from the Studienstiftung des Deutschen Volkes and the Rhodes Trust (to T.~S.) are acknowledged. ABK also acknowledges the support from the Welch Foundation (Grant C-1559) and the Center for Theoretical Biological Physics sponsored by the NSF (PHY-2019745).

\emph{Appendix~A:~Overestimation of $\dS$ with insufficient-order estimators.---}While we proved that the saturated-order estimator always provides a lower bound on the microscopic entropy production rate, this was \emph{not} possible for estimators of insufficient order. It is thus natural to ask whether a too-low-order estimator may lead to an \emph{over}estimation. Indeed, we construct a process in Fig.~\ref{Fig4}a where overestimation occurs: The entropy production rate of the $n>3$ order lumped process is clearly overestimated with $\dS^{\rm  est}_2$.
In particular, we see that this is not an undersampling artifact (occurring for estimators $\dS^{\rm  est}_k$ of higher order $k \geq 8$, where the standard deviation increases and estimates from a trajectory with $N_{steps}=10^7$ state changes diverge from those inferred from a trajectory of $N_{steps}=10^9$).\\
\indent  While we prove that $\dS^{\rm est,uncond}_k$ (see SI) is monotone in $k$, the estimator $\dS^{\rm est}_k$ is in particular \emph{not}, as this example shows. However, it is necessary to use---as defined in Eq.~\ref{estimator}---the splittings rather the joint probabilities, so that we have robustness to overestimation (see SI).

 \begin{figure}[ht!]
    \centering
    \includegraphics[width=0.45\textwidth]{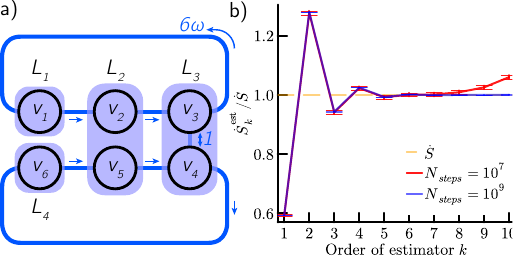}
\caption{
(a)~6-state Markov process coarse-grained to process of order $n>3 $. The order is at least 3 because the next transition out of $L_3$ depends on whether we arrived from lump $L_1$ or $L_4$ two state-changes before. Since multiple jumps between $L_2 \leftrightarrow L_3$ may have occurred before, the order is in fact higher, but the influence of these back-and-forth jumps vanishes exponentially.~(b)~Entropy production rate estimated with $\dS^{\rm  est}_k$ as a function of $k$ from trajectories of different length. While undersampling effects occur for higher orders (red and blue lines do not agree for $k\geq 8$), the overshoot for lower orders is \emph{not} an artifact of undersampling. This example explicitly shows that second-order estimators are not necessarily a lower bound.\ Notably, the numerical algorithm for determining \emph{all estimators up to order 10} completes within a few seconds (red line) and minutes (blue line) on a normal desktop computer; for details see~\cite{SM}.
}
    \label{Fig4}
\end{figure}

\emph{Appendix~B~\textcolor{changeColour}:~Problematic extrapolation to microscopic scales.---}Assuming that the power-law scaling of dissipation rate truly reflects dissipation on distinct length scales, one may attempt to extrapolate
$\dSM\propto\lambda^{-\alpha}$ and $\dS^{\rm  est}_{k\ge 2}\propto \lambda^{-\alpha}$ to microscopic scales, that is, to vanishing coarse graining $\lambda \to 1$ (see
\cite{yuInversePowerLaw2021}). In the case of the
cycle-graph which has a single (i.e.\ macroscopic) dissipative length scale this would coincidentally (and conceptually somewhat paradoxically) yield the
exact microscopic result (compare Eqs.~\eqref{S_cycle} and
\eqref{eq:ring:Smist} in the limit $\lambda\to 1$). In all other examples, extrapolating $\dSM$ to $\lambda=1$ would yield a lower bound. This may be useful for bounding $\dS$ from below (the bound $\dSM\le\dS$ was
obtained in~\cite{Massi}). However, there are
strictly self-similar networks (e.g.\ the tree and tree-diamond graph) which do \emph{not} display a power-law scaling of energy dissipation rate but where one may nevertheless (erroneously) identify a power law. It is conceivable that such examples may lead to an overestimation of $\dS$. Extrapolated values should thus be interpreted with great care, especially when the 
underlying microscopic topology of the network is not
known. Similarly, our examples in Fig.~\ref{Fig3} show that one must also \emph{not} extrapolate apparent power-law scalings of higher-order estimates $\dS^{\rm est}_{k\ge 2}$ as these may lead to overestimating $\dS$.

\begin{figure}[ht!!]
    \centering
    \includegraphics[width=0.45\textwidth]{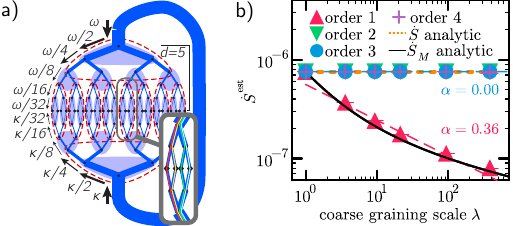}
\caption{(a)~Tree diamond graph of depth $d=5$ with lumps of depth $l=1$ (shaded in blue) and depth $l=2$ (circled in dashed red).~Transition rates in down ($+$) and up ($-$) direction (cyclic with respect to the thick blue edge) are taken to be   $\propto \omega$ and $\propto \kappa$, respectively, chosen such that the cumulative transition rates between each level is $\omega - \kappa$.~The self-similarity becomes obvious upon noticing that each tree diamond consists of multiple smaller tree diamonds.~Inset: Four fundamental cycles pass through the highlighted subgraph of depth $d=2$.~(b)~Entropy-production estimates for different coarse-graining scales; Estimators $\dS^{\rm est}_k$ provide an exact estimation of $\dS$ for higher orders $(k\geq 2$), matching the analytical prediction (yellow line).~Ignoring memory leads to a \emph{virtual} energy dissipation cascade, which has \emph{no} implications
for dissipation on different scales.~The virtual cascade is actually \emph{not} even a power law, but may be mistaken for one.}  
    \label{Fig5}
\end{figure}

\emph{Appendix~C:~Tree diamond graph.---}We now consider a
non-trivially self-similar network with a single dissipative scale, the system size (see discussion in~\cite{SM}), but multiple (equivalent) dissipative cycles each with a stationary
current $J_c$ (see Fig.~\ref{Fig5}a). It corresponds to a ``tree diamond''  obtained by merging two binary trees of depth $d$ such that they have the same set of leaves.
Every edge is bidirectional with a rate $\propto\omega$ in $+$ and $\propto \kappa$ in $-$ direction, respectively, such that the cumulative transition rate between each level is $\omega-\kappa$.  This way, the edge entering the root in $+$ direction has rate $\omega$, while the rates along the remaining edges are split evenly amongst the branches, (see Fig.~\ref{Fig5}a).  The detailed mathematical construction is described in~\cite{SM}. The exact $\dS$ of the system in the cycle basis $\mathcal{C}$ with $2^d$ basis cycles reads (proof in~\cite{SM}) 
	\begin{align}
		\dS = \sum_{c\in \mathcal{C}}=J_cA_c=\frac{2d+1}{n(d)}(\omega - \kappa)\ln
                \left(\frac{\omega}{\kappa}\right).
                \label{Diamond_tree_exact}
	\end{align}
where $n(d)=2^{d+1}-1+2^{d}-1=3\cdot 2^d-2$ is the number of states in a diamond with depth $d$, and the basis-cycle current (see \cite{schnakenbergNetworkTheoryMicroscopic1976}) reads $J_c=(\omega - \kappa)/n(d)2^d$.

We lump the network into lumps of depth ${l<d}$, whereby the $2^{d-l}$ distinct lumps on the ``equator'' are treated separately (see~Fig.~\ref{Fig5}b). The Markovian estimate of such a lumped system is given by (see~\cite{SM})
\begin{align}
		\!\!\dSM\! &=\! \frac{\omega - \kappa
                }{n(d)} 
                \frac{2d-l+1}{l+1}
                \ln \left(\frac{\omega}{\kappa}\right)\overset{{d\gg l\gg
                    1}}{\propto}\frac{1}{l}\propto \frac{1}{\ln(\lambda/3)},
                \label{Diamond_tree_Markov}
\end{align}
where $\lambda=n/n_s$ now corresponds, by definition, to the average lump size.~As in the tree graph, there is in fact strictly \emph{no} power-law scaling of $\dSM$ in $\lambda$ despite a perfectly self-similar network structure. Moreover, as shown in Fig.~\ref{Fig5}b, one may in practice easily identify a power-law $\sim \lambda^{-0.33}$  in agreement with
\cite{yuInversePowerLaw2021}. As in the tree graph, the microscopic system has a single dissipative scale, further underscoring that an observed power law has \emph{no} implications for energy dissipation on distinct scales and may \emph{not} even be a power law. It is an artifact of inconsistent time reversal in the presence of memory.

The lumping yields a $2^{\rm nd}$-order semi-Markov process and no dissipative cycles become hidden. By accounting for memory via $\dS_{k\ge 2}^{\rm est}$ we
recover the microscopic result (see derivation in\cite{SM})
 \begin{align}
   \!\!\!\dS_2^{\rm est}&\!=\!\frac{\omega-\kappa}{n(d)}\!\left[2\frac{d-l}{l+1}\ln\left(\frac{\Phi^{\triangle}_{+|+}}{\Phi^{\triangle}_{-|-}}\right)\!+\!\ln\left(\frac{\Phi^{\diamondsuit}_{+|+}}{\Phi^{\diamondsuit}_{-|-}}\right)
     \right]\!=\!\dS,
 \label{2_diamond_tree}
 \end{align}
where $\Phi^{\triangle}_{\pm|\pm}$ is the conditional probability for a transition between two lumps (at given coarseness) in the $\pm$ direction given that the preceding step occurred in the $\pm$ direction in the top and bottom part of the network (the factor of two is due to symmetry), and $\Phi^{\diamondsuit}_{\pm|\pm}$ the corresponding conditional probability for
a transition in the $\pm$ direction in any of the equatorial lumps. By the last equality
consistent time reversal (and thus thermodynamics) removes the artefactual scaling behavior.

{
\color{changeColour}
\emph{Appendix~D:~Nonexistence of scaling of $\dSM$ under time-scale separation.---}We already showed that a power-law scaling of $\dSM$ has no physical implications in absence of a time-scale separation due to inconsistent time reversal. In the presence of a time-scale separation the time reversal becomes consistent~\cite{Massi}, but the power-law scaling disappears, as we now show. 

Suppose there is a time-scale separation, i.e.\ suppose the dynamics within each element (lump) of a lumping $U$ relaxes fast enough for the transitions \emph{between} elements of $U$ to proceed as in Ref.~\cite{Massi} and (consistently) apply a Markovian estimator $\dS^{\rm est}_1$. 
We now argue that time-scale separation in $U$ implies time-scale separation in each refined scale. Consider $W$ to be a refinement of $U$ \footnote{That is, $W$ is a lumping that refines the partition $U$.}. Due to time-scale-separation in $U$ any transition between and within lumps of $W$ happens at least as fast as in $U$. Hence, we also have time-scale separation in $W$.

Hence, in presence of time-scale separation, the Markovian time reversal is consistent, and the Markovian estimate gives the exact result \cite{Massi} for scale $U$ and any refined scale. This naturally implies the absence of (power law) scaling of $\dS_1^{\mathrm est}$.

\begin{figure}[ht!!]
    \centering
    \includegraphics[width=0.45\textwidth]{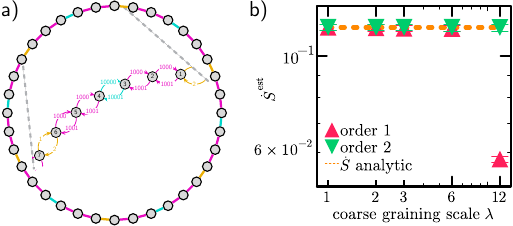}
\caption{\textcolor{changeColour}{
(a)~Cycle graph $C_{36}$ with transition rates chosen as indicated in the inset and detailed in \cite{SM} (Section~XI.) such that the dynamics exhibit time-scale separation.~(b)~Entropy-production rate estimates for $C_{36}$ with time-scale separation between lumps of size $6$: 
The time-scale separation assumption implies that any refined scale is Markovian ($\dS_1^{est}=\dS_2^{est}$). This also implies, however, that there is \emph{no} scaling of the entropy production rate (observation).
}
}  
    \label{Fig6}
\end{figure}

To illustrate this, consider the cycle graph $C_{36}$ with rates chosen to have time-scale separation (the detailed parameters are given in Fig.~S7 in \cite{SM}).
At a lumping of size $6$ (first lump are vertices $1, \ldots , 6$, second lump vertices $7, \ldots, 12$ etc.), we observe a time-scale separation. Thus, we may use a Markovian estimator for a lumping of size $6$. To get the correct entropy production rate at a lumping of size $12$, however, we must use an estimator $\dS_k^{\rm est}$ of order $k\geq 2$ (as argued in the letter). As we see in Fig.~\ref{Fig6}, any refinement of the lumping of size $6$ will be Markovian \emph{and result in the same estimate}. Thus, there is no power-law scaling under time-scale separation.

\emph{Appendix~E:~Uncountability of microscopic paths.---}We finally show that the number of microscopic paths projecting onto a mesoscopic path in any 
time window $[0,t]$ (with $t>0$)  is generally not countable. To that end, suppose that a mesostate $M$ contains at least two neighboring microscopic states $\{0,1\}$. Take some real number $r \in [0,1]$, and consider its binary encoding, say $\operatorname{bin}(r)$. Take the corresponding microscopic trajectory of jumps between states $0,1$. Furthermore, a corresponding sequence of increasing jumptimes in $[0,t]$ exists. This is a unique mapping to a microscopic path, but all those paths correspond to the mesoscopic path of staying in $M$. By uncountability of $\R$, the claim follows.

\emph{Appendix~F:~Other parameter choices.---}
We present here how other choices of $u,v$ in  Fig.~\ref{fig:double-cycle-and-estimates} affect the estimators' quality: Any $k\geq 2$ is better than the Markovian estimate $\dSM= \dS^{\mathrm{est}}_{1}$ \cite{yuInversePowerLaw2021}. We always observe saturation for $k\geq 3$ (upto statistical fluctuations) and our estimate is already at $k=2$ better than $\frac{1}{2}\dS^{\rm uncond}_{2}/\dS$ \cite{yuDissipationLimitedResolutions2024}. Note how saturation happens for $u=1$ already at $k\geq 2$, indicating that at $u=1$ we have an order-2 process: While $\dS^{\mathrm{est}}_{2}$ allows for $100\%$ recovery, using in that setting $\frac{1}{2}\dS^{\rm uncond}_{2}/\dS$ \cite{yuDissipationLimitedResolutions2024} would allow only for recovery of $82\%$ and $83\%$ respectively, with the remaining recovery only possible by through the statistically very costly (binning of a continuous-time random variable for integral computation) inference from the waiting time distribution asymmetry. Note that  $\sigma_2^\ell$ in Ref.~\cite{yuDissipationLimitedResolutions2024} is---in contrast to $\sigma_2^t$---statistically equally feasible as $\dS_2^{\rm est}$.

\FloatBarrier
\begin{table}
    \centering

\begin{tabular}{lccccc}
\toprule
\textbf{parameter u} & 2.0 & 1.0 & 2.0 & 1.0 & 1.5\\
\textbf{parameter v} & 2.0 & 2.0 & 1.0 & 3.0 & 2.0\\
\midrule
$\dS^{\mathrm{est}}_{1}/\dS$ \cite{yuInversePowerLaw2021}  & 0.03 & 0.58 & 0.06 & 0.63 & 0.14\\
$\dS^{\mathrm{est}}_{2}/\dS$   & 0.21 & 1.05 & 0.09 & 1.03 & 0.45\\
$\frac{1}{2}\dS^{\rm uncond}_{2}/\dS$ \cite{yuDissipationLimitedResolutions2024}  & 0.12 & 0.82 & 0.07 & 0.83 & 0.30\\
$\dS^{\mathrm{est}}_{3}/\dS$   & 0.39 & 1.02 & 0.20 & 1.02 & 0.61 \\
$\dS^{\mathrm{est}}_{4}/\dS$   & 0.41 & 1.02 & 0.22 & 1.02 & 0.63\\
$\dS^{\mathrm{est}}_{5}/\dS$   & 0.41 & 1.00 & 0.21 & 1.01 & 0.62\\
$\dS^{\mathrm{est}}_{6}/\dS$   & 0.4 & 0.99 & 0.21 & 1.00 & 0.61\\
\bottomrule
\label{tab:appendix:more-u-v-values}
\end{tabular}

\caption{Ability of different estimators under more parameter choices for the system of Fig.~\ref{fig:double-cycle-and-estimates}: We observe a consistently better approximation of $\dS$ through $\dS_{k}^{\rm est}$ for every $k\geq 2$ compared to $\dS_M  = \dS_1^{\rm est}$ \cite{yuInversePowerLaw2021} and compared to $\dS_2^{\rm uncond}/2 = \frac{1}{2}(\dS_1^{\rm est} + \dS_2^{\rm est}) = \sigma_2^\ell$ in \cite{yuDissipationLimitedResolutions2024}.}
\end{table}
\FloatBarrier

	\let\oldaddcontentsline\addcontentsline
\renewcommand{\addcontentsline}[3]{}

\let\addcontentsline\oldaddcontentsline
 
\clearpage
\newpage
\onecolumngrid
\renewcommand{\thefigure}{S\arabic{figure}}
\renewcommand{\theequation}{S\arabic{equation}}
\setcounter{equation}{0}
\setcounter{figure}{0}
\setcounter{page}{1}
\setcounter{section}{0}

\begin{center}\textbf{Supplemental Material for:\texorpdfstring{\\}{ }Consistent time reversal and reliable and accurate inference in the presence of memory}\\[0.2cm]
Tassilo Schwarz$^{1,2}$, Anatoly B. Kolomeisky$^{3,4}$, and Alja\v{z} Godec$^{1}$\\
\emph{$^{1}$Mathematical bioPhysics Group, Max Planck Institute for Multidisciplinary Sciences, 37077 G\"ottingen, Germany}\\
\emph{$^{2}$Mathematical Institute, University of Oxford, 
Oxford, OX2 6GG, United Kindgom}\\
\emph{$^{3}$Department of Chemistry, Rice University, Houston,Texas 77005, USA}\\
\emph{$^{4}$Center for Theoretical Biological Physics, Department of Chemical and Biomolecular Engineering and Department of Physics and Astronomy, Rice University, Houston,Texas 77005, USA}\\[0.6cm]\end{center}

\begin{quotation}
 In this Supplemental Material we provide further details on the models, the underlying calculations, and some further technicalities. In addition, we give details on the model parameters and numerical considerations. 
\end{quotation}
 \tableofcontents
\newpage
 \onecolumngrid      


\section{Notation}
Throughout we use the following conventions:
\begin{itemize}
	\item For $n\in \N$, we define $[n]\equiv \{1, \ldots n\}$.
 \item $L$ corresponds to the generator of the considered underlying/microscopic Markov process.
 \item For a graph $G$, V(G) denotes its vertex set.
 \item We omit $k_B$ and the temperature here for readability.
\end{itemize}

\section{Analysis of higher-order estimators}

In this section, we show first that the defined higher-order estimators (saturated in the order) applied to the observed (lumped) dynamics always provide a lower bound to the true entropy production rate $\dS$ of the \emph{unobserved} (detailed) Markov process.

We then show that using a higher-order estimator beyond the intrinsic (semi-)Markov order of observed process is \emph{not} ``harmful'', i.e.\ that a $(k+n)$-th order estimator for a $k$-th order process yields the same result for any $n \in \N_0$. The higher-order estimators are thus robust against over-estimation.

\subsection{Estimator yields lower bound on microscopic \texorpdfstring{$\dS$}{dS} from observed dynamics}

\label{sec:rigorous-proof}
\begin{theorem}\label{thm:estimator-gives-lower-bound}
    For an observed process $\hat \Gamma$ of (semi-)Markov order $k$, the estimator satisfies
    $$ \dS( \Gamma) \geq \dS_k^{est}(\hat \Gamma),$$
    i.e.\ the estimator recovers from the coarsely observed, lumped dynamics a lower bound on the entropy production rate of the microscopic process.
\end{theorem}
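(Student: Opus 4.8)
The plan is to prove the chain of inequalities $\dS(\Gamma)\ge\dS(\hat\Gamma)\ge\dS^{\rm est}_k(\hat\Gamma)$, where $\dS(\hat\Gamma)\equiv\lim_{t\to\infty}t^{-1}D(\rP[\hat\Gamma_t]\,\|\,\rP[\theta\hat\Gamma_t])$ denotes the \emph{exact} entropy-production rate of the lumped path measure. Thermodynamic consistency ($L_{ji}>0\Leftrightarrow L_{ij}>0$) makes $\rP[\Gamma_t]$ and $\rP[\theta\Gamma_t]$ mutually absolutely continuous, hence so are their pushforwards, and the limit defining $\dS(\hat\Gamma)$ exists because $\hat\Gamma$ has finite semi-Markov order. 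The first inequality is the data-processing inequality for relative entropy applied to the pointwise-in-time lumping map $\ell$: one has $\ell_*\rP[\Gamma_t]=\rP[\hat\Gamma_t]$ and, because coarse graining and time reversal commute by the standing assumption (absence of kinetic hysteresis), also $\ell_*\rP[\theta\Gamma_t]=\rP[\theta\hat\Gamma_t]$; therefore $D(\rP[\hat\Gamma_t]\,\|\,\rP[\theta\hat\Gamma_t])=D(\ell_*\rP[\Gamma_t]\,\|\,\ell_*\rP[\theta\Gamma_t])\le D(\rP[\Gamma_t]\,\|\,\rP[\theta\Gamma_t])$, and dividing by $t$ and sending $t\to\infty$ gives $\dS(\hat\Gamma)\le\dS(\Gamma)$ (the microscopic limit equals Eq.~\eqref{SJ} by the standard result of Refs.~\cite{Spohn,Gaspard}).

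For the second inequality I would invoke the chain rule for relative entropy to decompose the lumped-path divergence into the relative entropy of the \emph{embedded jump chain} plus the expectation, over jump sequences, of the relative entropy between the forward and time-reversed waiting-time densities given that sequence. The second piece is an average of relative entropies, hence non-negative; it is precisely the waiting-time contribution neglected in Eq.~\eqref{estimator}. The first piece, per unit time, equals $\dS^{\rm est}_k(\hat\Gamma)$: since $\hat\Gamma$ has semi-Markov order $k$, its embedded chain on the lump alphabet is a genuine $k$-th order Markov chain, and factorizing the probability of a length-$N$ word and of its time reverse --- all initial/boundary terms being absorbed, by stationarity, into an $O(1)$ remainder --- identifies its per-step relative-entropy rate against its reverse with $\sum_{\hat\gamma_k}\rP[\hat\gamma_k^{(1:k+1)}]\ln\frac{\rP[\hat\gamma_k^{({\rm f})}|\hat\gamma_k^{(1:k)}]}{\rP[(\theta\hat\gamma_k)^{({\rm f})}|(\theta\hat\gamma_k)^{(1:k)}]}$; dividing by the mean dwell time $T$ converts ``per step'' into ``per time'' and reproduces Eq.~\eqref{estimator}. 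Hence $\dS(\hat\Gamma)=\dS^{\rm est}_k(\hat\Gamma)+(\text{non-negative waiting-time term})\ge\dS^{\rm est}_k(\hat\Gamma)$, and chaining the two displays proves the theorem.

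I expect the second step to be the main obstacle. One must keep careful track that ``order $k$'' refers to the \emph{embedded discrete} chain, so that the denominator $\rP[(\theta\hat\gamma_k)^{({\rm f})}|(\theta\hat\gamma_k)^{(1:k)}]$ genuinely is the transition kernel of the time-reversed $k$-th order chain --- the reverse of a stationary $k$-th order Markov chain is again $k$-th order Markov, but matching the conditional $\rP[x_1\mid x_2,\dots,x_{k+1}]$ to the reversed word requires care with the index reversal --- and that the boundary terms in the length-$N$ factorization really do contribute only $O(1)$ and drop out of the rate. One should also state explicitly that discarding the waiting-time contribution can only lower the estimate, i.e.\ that this term is a bona fide (conditional) relative entropy, which is again the chain rule for the Kullback--Leibler divergence. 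By contrast the data-processing step is routine; its only substantive input is the commutation of lumping and time reversal, precisely the hypothesis assumed throughout.
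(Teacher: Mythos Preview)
Your two-step chain $\dS(\Gamma)\ge\dS(\hat\Gamma)\ge\dS^{\rm est}_k(\hat\Gamma)$ differs structurally from the paper's argument: the paper never isolates $\dS(\hat\Gamma)$ as an intermediate object but instead partitions the \emph{microscopic} path space $\Omega$ by the sequence of the first $m$ observed lumps and applies Jensen's inequality on each cell to bound $\dS(\Gamma)$ from below directly by the estimator. Your route has the appeal of cleanly separating the waiting-time contribution as a manifestly non-negative piece; the paper's avoids having to establish the existence of $\dS(\hat\Gamma)$ and its chain-rule decomposition as a limit.

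That said, there is a genuine gap in your second step, and it is precisely the difficulty the paper singles out. You write that ``dividing by the mean dwell time $T$ converts `per step' into `per time'\,'', but for a path on $[0,t]$ the number of lump transitions $\hat n_t$ is \emph{random} and coupled to the waiting times; the relative entropy $D(\rP[\hat\Gamma_t]\,\|\,\rP[\theta\hat\Gamma_t])$ is not simply $\hat n_t$ copies of a per-step term plus $O(1)$. Turning the rate $\lim_{t\to\infty}t^{-1}D(\cdot\,\|\,\cdot)$ into the per-step sum of Eq.~\eqref{estimator} requires controlling the law of $\hat n_t/t$ and showing that paths with atypically few jumps carry negligible weight in the rate. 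The paper does this by fixing a deterministic truncation $m(t)=t/\tau-t^{2/3}$, proving $\rP[\hat n_t<m(t)]\to 0$ via Chebyshev (finite-variance dwell times), and verifying that the residual contribution from paths with fewer than $m(t)$ jumps is separately non-negative. Your diagnosis of the obstacle---index reversal in the $k$-th-order kernel and $O(1)$ boundary terms in a length-$N$ word---misses this: the real issue is the interplay of continuous $t$ with the random discrete count $\hat n_t$. The paper in fact cites an earlier attempt along the lines you sketch as unsound at exactly this step.
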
\vspace{0.2cm}
The proof of Thm.~\ref{thm:estimator-gives-lower-bound} requires a careful technical treatment. The reason for this is as follows: While the dynamics evolve in continuous time, the history is manifested in the discrete state sequence---e.g., for a $k$-th order process, we need a tuple of $k+1$ observed mesostates, say $\hat \gamma_{k}$. Suppose for now that we fix the time window of observation to $[0,t]$. But the number of microscopic trajectories occurring in $[0,t]$ corresponding to the mesoscopic observation $\hat \gamma_k$, is in general infinite. That is, in general there can be any number of unobserved microscopic trajectories $\gamma$ corresponding to the observation $\gamma_k$. Fixing one such $\gamma$, its length  $m\equiv |\gamma|$  (number of microscopic states) is unbounded. This is because there can be arbitrarily many unobservable microscopic transitions within any observable lumped mesostate.
To obtain the probability that such a microscopic $\gamma$ occurs in time $[0,t]$, we need an $m$-fold convolution of the exponential microscopic waiting times. That is, we have $m$ many integrals to convolve over the time span $[0,t]$.\\
\indent With this in mind, we recall that the entropy production rate (for the microscopic process) is defined as 
\begin{align}
    \dS[\Gamma_t] \equiv \lim_{t \rightarrow \infty} \frac{1}{t} \E{\ln(\frac{\Pr{\Gamma_t}}{\Pr{\theta \Gamma_t}})}. \label{def:epr-microscopic}
\end{align}

\paragraph*{Motivation and need for a measure-theoretic approach} 
If we coarse-grain the above from the microscopic dynamics $\Gamma_t$ to the mesoscopic dynamics $\hat \Gamma_t$ we need an expression for the probability of all microscopic $\gamma$ mappings to a particular mesoscopic $\hat \gamma_k$. One may then try to use the log-sum inequality on the embedded Markov chain. However, as described above, finding an expression for the probability of all microscopic $\gamma$ mapping to a particular mesoscopic $\hat \gamma_k$ involves an infinite sum over an infinite number of integrals over the time span $[0,t]$. Further, we must only capture microscopic trajectories of length $t$ in this probability. After finding such an expression, one would have to take the log ratios etc. and divide by $t$. Afterwards, evaluating the limit $t \rightarrow \infty$ of this complicated object would still be required, which is technically very hard. Such an approach was attempted for $k \leq 2$ in ref.~\cite{SUPmartinezInferringBrokenDetailed2019}. However, the mathematical proof therein is not sound (see eqn.~(41) and (50) therein);~One \emph{must not} simply replace the upper boundary in each of the infinitely many integrals entering the convolution with ``$\infty$'' and simply let the time go to infinity without controlling the behavior the unbounded-fold integral. In particular, when replacing the $t\to\infty$ limit by the number of discrete jumps going to infinity, one has to carefully consider how these quantities interdepend and scale.\vspace{0.2cm}\\
\indent Here, we present a proof of Theorem~\ref{thm:estimator-gives-lower-bound} for any $k$ which is guaranteed to be consistent with Eq.~\eqref{def:epr-microscopic}. To do so, we use a measure-theoretic approach. The key point is that once this framework has been established, we can work directly with measures and \emph{not} with densities which already for fixed $k$ may only be expressed by an infinite number of integrals---an approach that would become extremely hard (at least) when evaluating the limit $t\to\infty$.

\begin{proof}[Proof of Theorem.~\ref{thm:estimator-gives-lower-bound}].
We first set up the probability space we are working in, then mathematically define "lumping", and finally work in the probability space under lumping.
\subsubsection{Filtered measure space}
We consider continuous-time discrete-space Markov Chains.  Let $E$ be the discrete state space of the chain. We make the ("thermodynamic consistence") assumption:

\begin{assumption} \label{assumption:each-transition-reversible}
	A state transition $u \rightarrow v$ has positive probability iff $v \rightarrow u$ does, for any $u,v \in E$.
\end{assumption}
 Our measure space is 
\begin{align}
	(\Omega, \mc F, \Pr),
\end{align}
where $\Omega \equiv D[0,\infty) \equiv D(E \times [0,t) )$ is the path space of càdlàg functions from $[0,\infty)$ to $E$.
Let $\xi\equiv \mc P(E)$ be the power set of $E$, such that $(E,\xi)$ is a measurable space.
The sigma-algebra 
 $\mc F$ is on the path space defined as follows: For time $s\in[0,\infty)$ consider the \emph{evaluation mapping} 
\begin{align}
	\pi_s: D[0,\infty) &\rightarrow E\\ 
	\pi_s(\omega)&\equiv \omega(s) \quad \text{for a path }\omega \in D[0,\infty),
\end{align}
and note that $( \pi_s: s\in [0,\infty))$ is a family of random variables. Then $\mc F$ is the $\sigma$-algebra generated by the family,
i.e.\ $\mc F$ is the smallest $\sigma$-algebra such that for each $s \in \R_{\geq 0}$, $\pi_s$ is $\xi$-measurable:
\begin{align}
	\forall l \in \xi: \quad \{\omega \in D[0,\infty): \; \pi_s(\omega) \in l\}\in \mc F.
\end{align} 
$\mc F_t$ is the filtration generated from that family of random variables:
\begin{align}
	\forall l \in \xi: \quad \{\omega \in D[0,\infty): \; \pi_s(\omega) \in l\}\in \mc F_s.
\end{align}

In the following, we consider the resulting filtered probability space 
\begin{align}
	(\Omega = D(E \times [0,\infty)), \mc F, (\mc F_t)_{t \geq 0}, \Pr).
\end{align}
This induces the family of probability measures $(\Pr_t)_{t \geq 0}$ with 
\begin{align}
	\Pr_t: \quad \mc F_t &\rightarrow [0,1]\\
	\forall F \in \mc F_t: \quad \Pr_t[F] &\equiv \Pr{F}.
\end{align}

\subsubsection{Time Reversal}
For a fixed time $t\in [0,\infty)$, we can define the measure of the reversed path $\Pr^\theta_t$ as
\begin{align}
	\Pr_t^{\theta}(\omega) &\equiv \Pr_t(\theta \omega)\\
	\text{where } \theta \omega(s)&\equiv \omega(t-s) \text{ is the reversed path}
\end{align} We note here that we take for $\theta \omega$ a modification that yields a càdlàg function.

We define the Radon-Nikodym derivative of $\Pr_t$ with respect to $\Pr_t^\theta$:
\begin{align}
	X_t(\omega)\equiv\frac{\d \Pr_t}{\d \Pr_t^\theta}\left(\omega \right)
\end{align} 
and note that this exists since by Assumption~\ref{assumption:each-transition-reversible} (i.e.\ thermodynamic consistency), we have absolute continuity: $\Pr_t \ll \Pr_t^\theta$ (we even have equivalence, i.e.\ $\Pr_t^\theta \ll \Pr_t$ and hence $\Pr_t^\theta \sim \Pr_t$). By the Radon-Nikodym-Theorem, the function
\begin{align}
	\frac{\d \Pr_t}{\d \Pr_t^\theta}
\end{align}
exists indeed, and is a $\mc F_t$-measurable function $\Omega \rightarrow [0,\infty)$.
Note that we may have $\Pr_t(\omega)=0$ (and hence $\Pr_t^\theta(\omega)=0$) for some realization $\omega$. This may be avoided (w.l.o.g.) either by limiting the integration     to the support of  $\Pr_t(\omega)$ (and by equivalence to that of $\Pr_t^\theta$) or by setting $\frac{\d \Pr_t}{\d \Pr_t^\theta}(\omega_0)=1$ for all samples $\omega_0$ with $\Pr_t(\omega_0)=0$. For notational convenience we will therefore suppress this point in the notation. 
Note also that we do not have issues with negativity, since measures are always non-negative, so the Radon-Nikodym derivative is as well always non-negative.

\subsubsection{Lumping}
In the lumped setting we cannot observe the full state space $E$, but rather partitions thereof. This means, the measurable space that the evaluation mapping maps onto is only 
\begin{align}
	(E, \hat \xi).
\end{align}
Notably, $E$ remains the same as in the microscopic setting, but we now have a sub-sigma-algebra: $\hat \xi \subset \xi = \mc P(E)$ contains only the partial information we observe. We call the filtration induced by lumping on path space as $\hat{\mc F_t}$. Note that $\hat{\mc F_t} \subseteq \mc F_t$. This shows directly that any $\hat{\mc F_t}$-measurable random variable is also $\mc F_t$-measurable, as desired;~A quantity expressible via the lumped dynamics can in particular be determined if we observe all the microscopic dynamics.

\begin{observation}
	We note that our sigma algebras behave ``nicely'' under time reversal. Since the lumping operation and time-reversal (by the construction of the lumping) commute, we have that both $\Pr$ and $\Pr^\theta$ work on the same sigma-algebra, so in particular on the same measurable space $(D, \mc{\hat F_T} )$, so that $\Pr \ll \Pr^\theta$ is well defined, an the Radon Nikodym derivative is well defined.
\end{observation}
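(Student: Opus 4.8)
The plan is to reduce the Observation to two checkable facts: (i) the reversal involution $\theta$ maps the lumped filtration $\hat{\mc F}_t$ onto itself, so that the pushforward $\Pr_t^\theta$ lives on the \emph{same} measurable space $(\Omega,\hat{\mc F}_t)$ as $\Pr_t$; and (ii) the microscopic equivalence $\Pr_t\sim\Pr_t^\theta$, already secured on $\mc F_t$ via Assumption~\ref{assumption:each-transition-reversible}, descends to the coarser $\hat{\mc F}_t\subseteq\mc F_t$, whence Radon--Nikodym applies. First I would make the lumping explicit: let $\Lambda:E\to\hat E$ be the fixed quotient map sending each microstate to its lump, so that $\hat\xi=\{\Lambda^{-1}(B):B\subseteq\hat E\}\subset\xi$ is the sub-$\sigma$-algebra of observable state events, and $\hat{\mc F}_t=\sigma(\pi_s^{-1}(\hat l):s\in[0,t],\,\hat l\in\hat\xi)$. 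The content of ``lumping commutes with time reversal'' is then that $\Lambda$ is applied \emph{pointwise in time} and is \emph{time-independent}: the same partition is used at every instant and reversal never relabels states. This is precisely what the absence of kinetic hysteresis guarantees.

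For step (i) I would use the identity $\pi_s\circ\theta=\pi_{t-s}$ on the window $[0,t]$, immediate from $\theta\omega(s)=\omega(t-s)$. Consequently the preimage under $\theta$ of a generator, $\theta^{-1}\{\pi_s\in\hat l\}=\{\pi_{t-s}\in\hat l\}$, is again a generator of $\hat{\mc F}_t$, because $s\mapsto t-s$ permutes $[0,t]$ and the lump family $\hat\xi$ carries no time index. Since generators map to generators and $\theta$ is an involution, $\theta^{-1}(\hat{\mc F}_t)=\hat{\mc F}_t$; thus $\theta$ is a bimeasurable self-map of $(\Omega,\hat{\mc F}_t)$, the pushforward $\Pr_t^\theta\defeq\theta_*\Pr_t$ restricts to a genuine probability measure on $\hat{\mc F}_t$, and $\Pr_t,\Pr_t^\theta$ indeed share one measurable space. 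The technical point to dispatch is that the raw reversal of a càdlàg path is càglàd, so $\theta\omega$ must be taken in its càdlàg modification; I would handle this by generating $\hat{\mc F}_t$ from a countable dense set of times closed under $s\mapsto t-s$ (legitimate by right-continuity) and noting that for each fixed $s$ the jump times of the chain avoid $s$ almost surely, so the modification alters $\pi_s\circ\theta$ only on a $\Pr_t$-null set and affects neither the generated $\sigma$-algebra nor the measure.

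For step (ii) I would recall from the preceding ``Time Reversal'' construction that Assumption~\ref{assumption:each-transition-reversible} yields $\Pr_t\sim\Pr_t^\theta$ on the microscopic $\mc F_t$: a trajectory with jump sequence $x_0\to\cdots\to x_m$ has positive path weight iff every reversed transition $x_i\to x_{i-1}$ has positive rate, which holds iff every forward transition does, so forward and reversed path measures have identical support. Restricting equivalent measures to the sub-$\sigma$-algebra $\hat{\mc F}_t\subseteq\mc F_t$ preserves equivalence---if $F\in\hat{\mc F}_t$ has $\Pr_t^\theta(F)=0$ then, viewing $F$ as an element of $\mc F_t$, also $\Pr_t(F)=0$---so $\Pr_t|_{\hat{\mc F}_t}\sim\Pr_t^\theta|_{\hat{\mc F}_t}$. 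By the Radon--Nikodym theorem the derivative $X_t=\d\Pr_t/\d\Pr_t^\theta$ then exists as a finite, $\Pr_t^\theta$-a.e.\ strictly positive, $\hat{\mc F}_t$-measurable function, which is exactly the assertion of the Observation.

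I expect the main obstacle to be the càdlàg-modification bookkeeping in step (i): one must verify that replacing $\theta\omega$ by its right-continuous modification is harmless at the level of the generated filtration \emph{and} consistent with the definition of $\Pr_t^\theta$ entering the entropy-production functional, rather than silently changing path weights at jump instants. Once reversal is shown to be a measurable involution of $(\Omega,\hat{\mc F}_t)$, the equivalence and Radon--Nikodym steps are routine, as they only invoke the already-established microscopic equivalence together with the elementary fact that equivalence is inherited under restriction to a coarser $\sigma$-algebra.
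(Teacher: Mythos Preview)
Your proposal is correct and in fact considerably more rigorous than the paper's own treatment. The paper does not supply a formal proof here: the Observation is stated with its justification embedded in a single sentence (``Since the lumping operation and time-reversal (by the construction of the lumping) commute, we have that both $\Pr$ and $\Pr^\theta$ work on the same sigma-algebra\ldots''), relying entirely on the standing assumption of no kinetic hysteresis and the microscopic equivalence already established via Assumption~\ref{assumption:each-transition-reversible}. You have unpacked precisely what this sentence is asserting: your step~(i) makes explicit the generator-level argument $\theta^{-1}\{\pi_s\in\hat l\}=\{\pi_{t-s}\in\hat l\}$ that underlies ``lumping commutes with reversal'', and your step~(ii) spells out the elementary but unstated fact that equivalence of measures descends to a sub-$\sigma$-algebra. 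Your attention to the c\`adl\`ag modification is a genuine refinement---the paper simply remarks earlier that ``we take for $\theta\omega$ a modification that yields a c\`adl\`ag function'' without verifying that this does not disturb the filtration or the pushforward measure, which you address. In short, you have supplied the proof the paper omits; there is no different route to compare against, only a difference in level of detail.
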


\subsubsection{Framework}
Let us denote the lumps in state space by $S_1, \ldots, S_o$ ($o$ being the number of lumps). These lumps induce the subfiltration $\hat{\mc F_t}$ as described above. We define the index set $\mc S^l$ as all sequences in $[o]^l$ without repetition in subsequent elements
\begin{align*}
	\mc S^l\equiv \{(i_1, \ldots, i_l) \in [o]^l \mid \forall 1 \leq j <n: i_j \neq i_{j+1} \}.
\end{align*}
This ensures that we indeed consider only observable state changes; since we are in continuous time dynamics, microscopic state changes within the same mesostate (i.e.\ without changing the mesostate) are invisible to the observer and simply prolong the   
residence time in the mesostate).

We consider the resulting probability space $(\Omega,\mc F, (\mc F_t)_t, \Pr)$ and $(\hat{\mc F_t})_t$. For $\omega \in \Omega$ we define 
\begin{align*}
	\hat n_t(\omega) \equiv \# \text{meso states in time }[0,t] \text{ in path } \omega.
\end{align*}
Note that $\hat n_t$ is $\hat{\mc F_t}$-measurable. We have (recall the suppressed convention on the support of $\Pr_t$)
\begin{align}
	&\int_\Omega \ln(\frac{\d \Pr_t}{\d \Pr_t^\theta}) \d \Pr_t \nonumber \\
	=& \int_\Omega \frac{\d \Pr_t}{\d \Pr_t^\theta} \ln(\frac{\d \Pr_t}{\d \Pr_t^\theta}) \d \Pr_t^\theta \tag{by Radon-Nikodym, noting that $\Pr_t \ll \Pr^\theta_t$}. \nonumber\\
	\intertext{We now consider the following elements of $\hat{\mc F_t}:$}
	& (S_{i_1}, \ldots, S_{i_m})_t^* \equiv \{\omega \in \Omega: \hat n_t(\omega) \leq m \land  \exists s <t  \text{ s.t. }\omega(0) \in S_{i_1}, \ldots, \omega(s) \in S_{i_m} \} \in \hat {\mc F_t}\nonumber
	\intertext{and}
	& (S_{i_1}, \ldots, S_{i_l})_t \equiv \{ \omega \in \Omega: \hat n_t(\omega) = l \land \omega(0) \in S_{i_1}, \ldots, \omega(t) \in S_{i_l} \} \in \hat {\mc F_t}\nonumber
	\intertext{Note that for any $m \in \N $ and $t \in \R^+$, the set}
	& \!\!\!\!\!\left\{(S_{i_1}, \ldots, S_{i_m})_t^* | (i_1, \ldots, i_m) \in \mc S^m \right\} \bigcup  \left\{(S_{i_1}, \ldots, S_{i_l})_t | 1 \leq l < m \; \land (i_1, \ldots, i_l) \in \mc S^l \right\}\nonumber
	\intertext{forms a partition of $\Omega$. We thus have}
	=&\int_\Omega \sum_{(i_1, \ldots, i_m)} \1_{(S_{i_1}, \ldots, S_{i_m})_t^*} \frac{\d \Pr_t}{\d \Pr_t^\theta} \ln(\frac{\d \Pr_t}{\d \Pr_t^\theta}) \d \Pr_t^\theta + \int_\Omega \sum_{\substack{l < m \\(i_1, \ldots, i_l)}} \1_{(S_{i_1}, \ldots, S_{i_l})_t} \frac{\d \Pr_t}{\d \Pr_t^\theta} \ln(\frac{\d \Pr_t}{\d \Pr_t^\theta}) \d \Pr_t^\theta \nonumber\\
	=&\underbrace{\sum_{(i_1, \ldots, i_m)} \int_{(S_{i_1}, \ldots, S_{i_m})_t^*} \frac{\d \Pr_t}{\d \Pr_t^\theta} \ln(\frac{\d \Pr_t}{\d \Pr_t^\theta}) \d \Pr_t^\theta}_{(I)} 
	+ \underbrace{\sum_{\substack{l < m \\(i_1, \ldots, i_l)}} \int_{(S_{i_1}, \ldots, S_{i_l})_t} \frac{\d \Pr_t}{\d \Pr_t^\theta} \ln(\frac{\d \Pr_t}{\d \Pr_t^\theta}) \d \Pr_t^\theta}_{(II)}\nonumber
\end{align}

One might then attempt to apply the log-sum inequality to the embedded Markov chain. However, as discussed earlier, evaluating the probability that all microscopic trajectories $\gamma$ corresponds to a specific mesoscopic trajectory $\hat{\gamma}_k$ requires dealing with an uncountably infinite series of integrals over the time interval $[0, t]$ (see Appendix~E in the main Letter). Hence, we require the measure-theoretic approach introduced above.

We treat both terms separately. For $(I)$, we observe that it is of the form $\phi(X_t)$ for the Radon-Nikodym derivative $X_t \equiv \frac{\d \Pr_t}{\d\Pr_t^\theta}$
\begin{align}
	(I) =& \sum_{(i_1, \ldots, i_m)} \int_{(S_{i_1}, \ldots, S_{i_m})_t^*} \frac{\d \Pr_t}{\d \Pr_t^\theta} \ln(\frac{\d \Pr_t}{\d \Pr_t^\theta}) \d \Pr_t^\theta\nonumber \\
	\equiv &  \sum_{(i_1, \ldots, i_m)} \int_{(S_{i_1}, \ldots, S_{i_m})_t^*} \phi(X_t) \d \Pr_t^\theta\nonumber
	\intertext{We change the measure in order to use Jensen's inequality}
	& \tilde \Pr^\theta_{t,(S_{i_1}, \ldots, S_{i_m})_t^*}(\omega) \equiv \frac{1}{\Pr_t^\theta[(S_{i_1}, \ldots, S_{i_m})_t^*]} \Pr_t^\theta(\omega),\nonumber
	\intertext{yielding}
	=&\sum_{(i_1, \ldots, i_m)} \Pr_t^\theta[(S_{i_1}, \ldots, S_{i_m})_t^*] \int_{(S_{i_1}, \ldots, S_{i_m})_t^*} \phi(X_t) \d \tilde \Pr_t^\theta \nonumber\\
    \intertext{We are now in a position to employ the \emph{measure theoretic} version of Jensen's inequality in path space. This allows us to cope with the challenges of continuous time.}
	\geq &\sum_{(i_1, \ldots, i_m)} \Pr_t^\theta[(S_{i_1}, \ldots, S_{i_m})_t^*] \phi\left(\int_{(S_{i_1}, \ldots, S_{i_m})_t^*} X_t \d \tilde \Pr_t^\theta \right) \tag{Jensen} \\
	=&\sum_{(i_1, \ldots, i_m)} \Pr_t^\theta[(S_{i_1}, \ldots, S_{i_m})_t^*] \int_{(S_{i_1}, \ldots, S_{i_m})_t^*} \frac{\d \Pr_t}{\d \Pr_t^\theta} \d \tilde \Pr_t^\theta \ln (\int_{(S_{i_1}, \ldots, S_{i_m})_t^*} \frac{\d \Pr_t}{\d \Pr_t^\theta} \d \tilde \Pr_t^\theta) \tag{def. of $\phi(\cdot)$ and of $X_t$} \\
	=&\sum_{(i_1, \ldots, i_m)} \int_{(S_{i_1}, \ldots, S_{i_m})_t^*} \frac{\d \Pr_t}{\d \Pr_t^\theta}(\omega) \d \Pr_t^\theta \ln (\frac{1}{\Pr_t^\theta[(S_{i_1}, \ldots, S_{i_m})_t^*]}\int_{(S_{i_1}, \ldots, S_{i_m})_t^*} \frac{\d \Pr_t}{\d \Pr_t^\theta}(\omega) \d \Pr_t^\theta) \tag{def. of $\tilde \Pr^\theta_{t, \ldots}$}\\
	=&\sum_{(i_1, \ldots, i_m)}\Pr_t[(S_{i_1}, \ldots, S_{i_m})_t^*] \ln(\frac{\Pr_t[(S_{i_1}, \ldots, S_{i_m})_t^*]}{\Pr_t^\theta[(S_{i_1}, \ldots, S_{i_m})_t^*]}). \tag{by Radon-Nikodym, and since $\d \Pr_t \ll \d \Pr_t^\theta$}
\end{align}

While the above holds for any $m\in \N, t \in \R^+$, we take $m \equiv m(t)$. We specify the precise dependence later. 

We note that since the underlying microscopic process is Markovian with discrete state space, we can describe the observed discrete sequence of lumps. This is exactly the embedded chain of the lumped process (which is non-Markov). When converting from a continuous time observation to the embedded chain observation---which is intrinsically discrete in time---we need to pay attention that the observed time interval was large enough. In particular, for a growing length $m$ of the observed embedded Markov chain, the continuous in time observation interval $[0,t)$ needs to grow sufficiently fast for enough jumps to be observed. It cannot grow too fast either, since otherwise we discard too many discrete jumps, and the lower bound we would get would converge to $0$.

We denote by $\Pr^\text{embed}[A]$ the probability of the embedded chain to observe discrete (lump) state $A$. This is well-defined, since the underlying microscopic Markov process is in its invariant measure. We further denote by $\Pr^\text{embed}[C \mid (A,B)]$ the probability of the embedded chain to observe state $C$ after observing the sequence $(A,B)$. Well-definedness thereof draws again from the invariance of the underlying microscopic processes. Note that the embedded process will have in general memory. However, once we have reached the (semi-)Markov order of the process, say $k$, it suffices to condition on the past $k$ states only. That is, for any $m>k$
\begin{align*}
	\Pr^\text{embed}[S_{i_{m}} \mid (S_{i_1}, S_{i_2}, \ldots, S_{i_{m-1}})] =  \Pr^\text{embed}[S_{i_{m}} \mid (S_{i_{m-k}}, S_{i_{m-k+1}}, \ldots, S_{i_{m-1}})]
\end{align*}
for processes of order at most $k$.

We may combine these observations to obtain for a process of order $k$
\begin{align*}
	& \Pr_t[(S_{i_1}, \ldots, S_{i_m(t)})_t^*] \\
	=&\Pr_t[(S_{i_1}, \ldots, S_{i_m(t)})_t^* \mid {\hat n(t)  \geq m(t)}] \Pr{{\hat n(t)  \geq m(t)}} \\
	=&\Pr^\text{embed}[(S_{i_1}, \ldots, S_{i_m(t)})] \Pr{{\hat n(t)  \geq m(t)}} \tag{using that the process is initiated the invariant measure}\\
	=& \Pr^\text{embed}[S_{i_1}] \Pr^\text{embed}[S_{i_2} \mid S_{i_1}] \Pr^\text{embed}[S_{i_3} \mid (S_{i_1},S_{i_2})] \cdots\Pr^\text{embed}[S_{i_m(t)} \mid (S_{i_1},\ldots S_{i_{m(t)-1}})]\\
    &\times  \Pr{{\hat n(t)  \geq m(t)}}  \\
	=& \Pr^\text{embed}[S_{i_1}] \Pr^\text{embed}[S_{i_2} \mid S_{i_1}] \Pr^\text{embed}[S_{i_3} \mid (S_{i_1},S_{i_2})] \cdots \Pr^\text{embed}[S_{i_m(t)} \mid (S_{i_{m(t)-k}},\ldots S_{i_{m(t)-1}})]\\
    &\times\Pr{{\hat n(t)  \geq m(t)}}. \tag{by assumption, the lumped process has order $k$}
\end{align*} 

Considering the log-ratio, this allows us to write
\begin{align}
	&\ln(\frac{\Pr_t[(S_{i_1}, \ldots, S_{i_m})_t^*]}{\Pr_t^\theta[(S_{i_1}, \ldots, S_{i_m})_t^*]})\nonumber\\
	=&\ln(\frac{\Pr_t[(S_{i_1}, \ldots, S_{i_m(t)})_t^* \mid {\hat n(t)  \geq m(t)}] \Pr{{\hat n(t)  \geq m(t)}} }{\Pr_t^\theta [(S_{i_1}, \ldots, S_{i_m(t)})_t^* \mid {\hat n(t)  \geq m(t)}] \Pr{{\hat n(t)  \geq m(t)}} })\nonumber \\
	=&\ln(\frac{\Pr^\text{embed}[(S_{i_1}, \ldots, S_{i_m(t)}) ] }{\Pr^\text{embed} [(S_{i_m(t)}\, \ldots, S_{i_1}) ]  }) \nonumber\\
	=&\ln(\frac{\Pr^\text{embed}[S_{i_1}]}{\Pr^\text{embed}[S_{i_{m(t)}}]}) + \ln(\frac{\Pr^\text{embed}[S_{i_2}\mid S_{i_1}]}{\Pr^\text{embed}[S_{i_{m(t)}} \mid S_{i_{m(t)-1}}]}) + \cdots  \nonumber\\
    &+\ln(\frac{\Pr^\text{embed}[S_{i_{m(t)}}\mid (S_{i_{m(t)-k}},\ldots, S_{i_{m(t)-1}})]}{\Pr^\text{embed}[S_{i_{1}} \mid ( S_{i_{k+1}}, \ldots, S_{i_2})]}), \label{eq:fsdkjalfhjs}
\end{align}
where each summand in the last term captures more and more memory up to order $k$.\vspace{0.2cm}\\
\indent We now note that weighting this sum by $\Pr_t[(S_{i_1}, \ldots, S_{i_{m(t)}})_t^*]$ and summing over all tuples $(i_1, \ldots, i_{m(t)})$ corresponds to taking an expectation. Since we are in a (non-eq.) steady state system, we can align the indices for the summands in Eq.~(\ref{eq:fsdkjalfhjs}). We therefore obtain
\begin{align}
	&\frac{1}{t}\sum_{(i_1, \ldots, i_{m(t)})} \Pr_t[(S_{i_1}, \ldots, S_{i_{m(t)}})_t^*] \ln(\frac{\Pr_t[(S_{i_1}, \ldots, S_{i_{m(t)}})_t^*]}{\Pr_t^\theta[(S_{i_1}, \ldots, S_{i_{m(t)}})_t^*]})\nonumber\\
	=& \frac{1}{t} \Pr{{\hat n(t)  \geq m(t)}} \Bigg\{ \sum_{(i_1, i_2)}\Pr^\text{embed}[(S_{i_1}, S_{i_2})] \ln(\frac{\Pr^\text{embed}[(S_{i_2}\mid  S_{i_1})}{\Pr^\text{embed}[(S_{i_1}\mid  S_{i_2})}) + \ldots \nonumber \\ 
	&\,\, + (m(t) -k )\sum_{i_{1},\ldots, i_{k+1}}\Pr^\text{embed}[(S_{i_1}, \ldots, S_{i_{k+1}}) ]\ln(\frac{\Pr^\text{embed}[S_{i_{k+1}} \mid (S_{i_1}, \ldots, S_{i_k})]}{\Pr^\text{embed}[S_{i_1} \mid (S_{i_{k+1}} \ldots, S_{i_2})]}) \Bigg\} \label{eq:fdsajkfdlaj}
\end{align}

To bound the probability of making less than $m(t)$ lump jumps by time $t$, that is $\Pr[\hat n(t) < m(t)]$, from above, we observe that jumps between lumps have a weighted exponential density. This is a narrow distribution in the sense of~Refs.~\cite{SUPFeller_68,SUPgodrecheStatisticsOccupationTime2001a}. For bounding, we consider a system consisting of only copies of the "slowest lump" (i.e.\ the lump with the longest dwell time prior to an exit another lump from any possible entry point). Let us denote its counting process as $N(t)$. Then, we have (by construction) for any $t$
\begin{align*}
	\Pr{\hat n(t) < m(t)} \leq \Pr{N(t) < m(t)}
\end{align*}
We further have by~\cite{SUPFeller_68,SUPgodrecheStatisticsOccupationTime2001a} for large $t$
\begin{align*}
	\E{N(t)} =& \frac{t}{\E{\tau_\text{slow}}} + \mathcal{O}(1) \\
	\Var{N(t)} =& C t + \mathcal{O}(1)
\end{align*}
for a suitable constant $C$, where $\E{\tau_\text{slow}}$ is the expected waiting time between changes of $N(t)$ (that is, between jumps of the slowest lump process). Choosing now $m(t) = \E{N(t)}-t^{2/3}$ (recall that we are free to choose), we have
\begin{align*}
	&\Pr{N(t) < m(t)} \\
	=& \Pr{\E{N(t)}-N(t) > t^{2/3}}\\
	\leq & \Pr{\abs{N(t) - \E{N(t)}} >t^{2/3}} \\
	\leq & \frac{Ct+\mathcal{O}(1)}{t^{4/3}} \quad \rightarrow 0 \; \text{as t} \rightarrow \infty
\end{align*}
where we used Chebyshev's inequality in the last step. Here, $m(t)=\frac{t}{\E{\tau_\text{slow}}} - \oh{t}$. If we instead consider a chain consisting of only the fastest lumps, we would have $m(t)=\frac{t}{\E{\tau_\text{fast}}} - \oh{t}$. Thus, there is a constant $\tau$ such that $\E{\tau_\text{fast}}< \tau < \E{\tau_\text{slow}}$, so that with $m(t) = \frac{t}{\tau} - \oh{t}$ we have convergence. For the particular choice of $\tau$, note that $\hat n(t)$ will, by the central limit theorem (recall that we have a narrow distribution), concentrate around its mean. Therefore, the constant $\tau$ will be the average time per lump. 

Returning to Eq.~(\ref{eq:fdsajkfdlaj}) we have with $m(t)=\frac{t}{\tau} - t^{2/3}$:
\begin{align*}
	=& \Pr{\hat n(t) > m(t)} \Bigg[ \frac{1}{t}\left( \underbrace{ \sum_{(i_1, i_2)}\Pr^\text{embed}[(S_{i_1}, S_{i_2}) \ln(\frac{\Pr^\text{embed}[(S_{i_2}\mid  S_{i_1})}{\Pr^\text{embed}[(S_{i_1}\mid  S_{i_2})})] + \ldots }_\text{$k-1$ many terms} \right) \notag  \\ 
	&\qquad \qquad \qquad \quad \; + \frac{m(t) -k}{t} \sum_{i_{1},\ldots, i_{k+1}}\!\!\Pr^\text{embed}[(S_{i_1}, \ldots, S_{i_{k+1}}) ]\ln(\frac{\Pr^\text{embed}[S_{i_{k+1}} \mid (S_{i_1}, \ldots, S_{i_k})]}{\Pr^\text{embed}[S_{i_1} \mid S_{i_{k+1}} \ldots, S_{i_2})]})  \Bigg] \\
	=& \underbrace{\Pr{\hat n(t) > m(t)}}_{\rightarrow 1 \text{ as discussed above}} \left[\Oh{t^{-1}} + \left(\frac{1}{\tau} + \Oh{t^{-1/3}} \right) \right] \notag \\
	& \qquad \qquad \qquad  \sum_{i_{1},\ldots, i_{k+1}}\Pr^\text{embed}[(S_{i_1}, \ldots, S_{i_{k+1}}) ]\ln(\frac{\Pr^\text{embed}[S_{i_{k+1}} \mid (S_{i_1}, \ldots, S_{i_k})]}{\Pr^\text{embed}[S_{i_1} \mid S_{i_{k+1}} \ldots, S_{i_2})]})\\
	\rightarrow& \frac{1}{\tau} \sum_{i_{1},\ldots, i_{k+1}}\Pr^\text{embed}[(S_{i_1}, \ldots, S_{i_{k+1}}) ]\ln(\frac{\Pr^\text{embed}[S_{i_{k+1}} \mid (S_{i_1}, \ldots, S_{i_k})]}{\Pr^\text{embed}[S_{i_1} \mid S_{i_{k+1}} \ldots, S_{i_2})]}).
\end{align*}

This is exactly our estimator. Since we have for term $(II)$, where we can apply the same measure-theoretic steps, that
\begin{align}
    (II) =& \sum_{\substack{l < m \\(i_1, \ldots, i_l)}} \int_{(S_{i_1}, \ldots, S_{i_l})_t} \frac{\d \Pr_t}{\d \Pr_t^\theta} \ln(\frac{\d \Pr_t}{\d \Pr_t^\theta}) \d \Pr_t^\theta \tag{definition of $(II)$} \\
    \geq& \sum_{\substack{l < m \\(i_1, \ldots, i_m)}}\Pr_t[(S_{i_1}, \ldots, S_{i_l})_t] \ln(\frac{\Pr_t[(S_{i_1}, \ldots, S_{i_l})_t]}{\Pr_t^\theta[(S_{i_1}, \ldots, S_{i_l})_t]}) \tag{by the analogous measure-theoretic arguments as for $(I)$} \\
    =& \frac{1}{2} \left\{\sum_{\substack{l < m \\(i_1, \ldots, i_m)}}\left[\Pr_t[(S_{i_1}, \ldots, S_{i_l})_t] - \Pr_t[(S_{i_1}, \ldots, S_{i_l})_t] \right] \ln(\frac{\Pr_t[(S_{i_1}, \ldots, S_{i_l})_t]}{\Pr_t^\theta[(S_{i_1}, \ldots, S_{i_l})_t]})  \right\} \nonumber\\
    \geq& 0. \label{eq:II-geq-0}
\end{align}

We can thus conclude that

\begin{align*}
	\dot S \equiv& \lim_{t\rightarrow \infty} \frac{1}{t}\int_\Omega \ln(\frac{\d \Pr_t}{\d \Pr_t^\theta}) \d \Pr_t  \\
	\geq& \frac{1}{\tau} \sum_{i_{1},\ldots, i_{k+1}}\Pr^\text{embed}[(S_{i_1}, \ldots, S_{i_{k+1}}) ]\ln(\frac{\Pr^\text{embed}[S_{i_{k+1}} \mid (S_{i_1}, \ldots, S_{i_k})]}{\Pr^\text{embed}[S_{i_1} \mid S_{i_{k+1}} \ldots, S_{i_2})]})\\
	=&\dot S_k^\text{est},
	\end{align*}
	so that our $k$-th order estimator provides---given that the order of the process is $k$---always a \emph{guaranteed} lower bound.
\end{proof}

\subsection{Robustness of the estimator}

What happens if the estimator we apply to the observed mesoscopic trajectories, $\dS^\text{est}_k$ has a higher order $k > n$ than the memory of the mesoscopic process $n$?

It turns out that the estimator is robust against overestimation, meaning that:\vspace{0.2cm}
\begin{theorem}[Robustness against overestimation] \label{thm:robustness-to-overestimation}
    For an observed process $(\hat \Gamma_t)_{t \in \R^+}$ of (semi-)Markov order $n$, the estimator satisfies
    $$ \dS( \Gamma) \geq \dS_k^{est}(\hat \Gamma) \quad \forall k \geq n.$$
\end{theorem}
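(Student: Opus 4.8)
The plan is to reduce Theorem~\ref{thm:robustness-to-overestimation} to Theorem~\ref{thm:estimator-gives-lower-bound} via two observations, of which only the second carries real content. First, the proof of Theorem~\ref{thm:estimator-gives-lower-bound} uses the (semi‑)Markov order of the observed process \emph{only} through the truncation identity $\Pr^\text{embed}[S_{i_m}\mid(S_{i_1},\dots,S_{i_{m-1}})]=\Pr^\text{embed}[S_{i_m}\mid(S_{i_{m-k}},\dots,S_{i_{m-1}})]$, which it already states ``for processes of order at most $k$''. Hence the entire measure‑theoretic machinery — the partition of path space into the sets $(S_{i_1},\dots,S_{i_m})_t^\ast$ and $(S_{i_1},\dots,S_{i_l})_t$, the Jensen/log‑sum step, the Chebyshev control of $\Pr[\hat n(t)<m(t)]$ with $m(t)=t/\tau-\oh{t}$, and the non‑negativity of the term $(II)$ — carries over verbatim with the fixed parameter $k$ in place of $n$, and yields $\dS(\Gamma)\ge\dS^{\rm est}_k(\hat\Gamma)$ for \emph{any} process whose order is at most $k$, in particular for our order‑$n$ process whenever $k\ge n$. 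This already proves the stated inequality; the heavy analytic lifting need not be redone.

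Second, to substantiate that the estimator genuinely \emph{plateaus} for $k\ge n$ (rather than merely remaining a bound), I would establish the saturation identity $\dS^{\rm est}_k=\dS^{\rm est}_n$ for all $k\ge n$ directly from Eq.~\eqref{estimator}. The point is that both conditionals in the summand are \emph{forward} embedded‑chain transition probabilities — the numerator evaluated on the natural string $(s_1,\dots,s_{k+1})$, the denominator on its $\theta$‑reversed string $(s_{k+1},\dots,s_1)$ — and a forward chain of order $n$ satisfies $\Pr^\text{embed}[\text{next}\mid\text{last }k]=\Pr^\text{embed}[\text{next}\mid\text{last }n]$ for every $k\ge n$. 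Applying this to both strings, the numerator depends only on the trailing $(n+1)$‑tuple $(s_{k-n+1},\dots,s_{k+1})$ and the denominator only on the leading $(n+1)$‑tuple $(s_1,\dots,s_{n+1})$. Treating the two log‑terms in the sum separately and marginalising the stationary weight $\Pr[\hat\gamma_k^{(1:k+1)}]$ over the $k-n$ ``excess'' states irrelevant to each term (legitimate because the finite‑dimensional marginals of the embedded chain are consistent, and sequences with immediate repeats carry zero weight, so the index set $\mathcal S^{l}$ is respected automatically), each term collapses onto its order‑$n$ counterpart; since $T$, the mean dwell time per observed state, is intrinsic and $k$‑independent, we get $\dS^{\rm est}_k=\dS^{\rm est}_n$. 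Combining with Theorem~\ref{thm:estimator-gives-lower-bound} at order $n$ then gives simultaneously the inequality $\dS(\Gamma)\ge\dS^{\rm est}_k(\hat\Gamma)$ and the plateau for $k\ge n$.

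The main obstacle is conceptual rather than technical: one must be careful that the estimator is built from the \emph{splitting} (conditional) probabilities of Eq.~\eqref{estimator} and not from joint probabilities — the joint‑probability variant $\dS^{\rm est,uncond}_k$ is monotone in $k$ but is a genuinely different functional, so for it neither the saturation nor the ``guaranteed lower bound'' conclusion would follow in the same way. Beyond that, the only real bookkeeping is to verify the two marginalisation steps and the truncation of the reversed‑string conditional, which rely only on stationarity of the embedded chain and on the assumption that lumping commutes with $\theta$ (so that ``the reversed lumped trajectory'' is the lumping of the reversed microscopic trajectory, making the denominator a well‑defined forward‑measure conditional). With those in hand, the argument is short, since everything analytically delicate has already been discharged in the proof of Theorem~\ref{thm:estimator-gives-lower-bound}.
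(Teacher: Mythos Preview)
Your proposal is correct and follows essentially the same approach as the paper: the paper's proof of Theorem~\ref{thm:robustness-to-overestimation} is precisely your second route, namely apply Theorem~\ref{thm:estimator-gives-lower-bound} at the true order $n$ to get $\dS(\Gamma)\ge\dS^{\rm est}_n(\hat\Gamma)$, then invoke the saturation identity $\dS^{\rm est}_k=\dS^{\rm est}_n$ for $k\ge n$ (stated and proved as a separate Lemma in the paper, with the same marginalisation-plus-shift-invariance argument you sketch). Your first-paragraph observation --- that the proof of Theorem~\ref{thm:estimator-gives-lower-bound} already uses only ``order at most $k$'' in the truncation step and hence directly yields $\dS(\Gamma)\ge\dS^{\rm est}_k$ for any $k\ge n$ without passing through the saturation lemma --- is a valid shortcut the paper does not spell out, though it is implicit in the wording ``for processes of order at most $k$'' that you correctly identified.
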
\vspace{0.2cm}

We will prove Theorem~\ref{thm:robustness-to-overestimation} by first showing\vspace{0.2cm} 
\begin{lemma} \label{lem:estimator-consistent-in-order}
    If $(\hat \Gamma_t)_{t \in \R^+}$ is a $n$-th order semi-Markov process then $\dS^{\rm est}_{k}=\dS^{\rm est}_{n},\,\forall k \geq n$.
\end{lemma}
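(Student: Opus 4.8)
The plan is to reduce $\dS^{\rm est}_k$ to a functional of the stationary embedded chain of $\hat\Gamma_t$ and then apply the $n$-th order Markov property twice --- once to the ``forward'' conditional and once to the ``backward'' conditional appearing in Eq.~\eqref{estimator}. Since waiting-time contributions are neglected, $\dS^{\rm est}_k$ depends only on the law $\rP^{\rm embed}$ of the embedded chain, which by hypothesis is a stationary $n$-th order Markov chain. For a window $\hat\gamma_k=(s_1,\dots,s_{k+1})$ of $k+1$ consecutive observed states one has $\rP[\hat\gamma_k^{(\rm f)}\mid\hat\gamma_k^{(1:k)}]=\rP^{\rm embed}[s_{k+1}\mid(s_1,\dots,s_k)]$ and $\rP[(\theta\hat\gamma_k)^{(\rm f)}\mid(\theta\hat\gamma_k)^{(1:k)}]=\rP^{\rm embed}[s_1\mid(s_{k+1},s_k,\dots,s_2)]$; crucially, both are conditionals of the \emph{same forward} embedded chain evaluated on a forward resp.\ time-reversed state sequence, so no separate statement about time reversal preserving the order is needed.

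First I would truncate the conditioning sequences. For $k\ge n$ the order-$n$ property gives $\rP^{\rm embed}[s_{k+1}\mid(s_1,\dots,s_k)]=\rP^{\rm embed}[s_{k+1}\mid(s_{k-n+1},\dots,s_k)]$, and applied to the reversed sequence it gives $\rP^{\rm embed}[s_1\mid(s_{k+1},\dots,s_2)]=\rP^{\rm embed}[s_1\mid(s_{n+1},s_n,\dots,s_2)]$. Thus the logarithm in Eq.~\eqref{estimator} splits into a forward piece depending only on $(s_{k-n+1},\dots,s_{k+1})$ and a backward piece depending only on $(s_1,\dots,s_{n+1})$. Substituting $\rP[\hat\gamma_k^{(1:k+1)}]=\rP^{\rm embed}[(s_1,\dots,s_{k+1})]$ and summing, I would marginalize out the states that no longer appear --- $s_1,\dots,s_{k-n}$ in the forward term and $s_{n+2},\dots,s_{k+1}$ in the backward term --- which by marginal consistency of $\rP^{\rm embed}$ collapses each term to a sum over an $(n+1)$-state window.

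Then I would use stationarity of $\rP^{\rm embed}$ to relabel the forward window $(s_{k-n+1},\dots,s_{k+1})$ to a canonical $(u_1,\dots,u_{n+1})$, turning the forward term into $\tfrac{1}{T}\sum_{(u_1,\dots,u_{n+1})}\rP^{\rm embed}[(u_1,\dots,u_{n+1})]\ln\rP^{\rm embed}[u_{n+1}\mid(u_1,\dots,u_n)]$, while the backward term is already $-\tfrac{1}{T}\sum_{(s_1,\dots,s_{n+1})}\rP^{\rm embed}[(s_1,\dots,s_{n+1})]\ln\rP^{\rm embed}[s_1\mid(s_{n+1},\dots,s_2)]$. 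These are exactly the two contributions obtained by setting $k=n$ in Eq.~\eqref{estimator} (where the truncation is vacuous), and $T$, the mean dwell time per observed state, does not depend on $k$. Hence $\dS^{\rm est}_k=\dS^{\rm est}_n$ for every $k\ge n$.

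The step I expect to require the most care is the index bookkeeping: verifying that after truncation the forward and backward logarithms genuinely involve non-overlapping roles of the window coordinates, so that the two marginal sums can be carried out independently without dropping cross terms, and checking that the no-immediate-repetition constraint on admissible windows is respected throughout (it is, since forbidden windows carry zero weight). Everything else --- the Markov truncation, the marginalizations, and the stationarity relabeling --- is routine once the reduction to the embedded chain is in place, and Theorem~\ref{thm:robustness-to-overestimation} then follows by combining this lemma with Theorem~\ref{thm:estimator-gives-lower-bound}.
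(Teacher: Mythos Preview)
Your proposal is correct and follows essentially the same route as the paper's proof: truncate both the forward and the time-reversed conditionals using the $n$-th order property, split the logarithm, marginalize out the now-irrelevant window coordinates, and use stationarity (shift-invariance) of the embedded chain to relabel the forward block so that both terms combine into the $k=n$ expression. Your remark that the backward conditional is a conditional of the \emph{forward} embedded chain on a reversed sequence---so that the order-$n$ truncation applies without a separate statement about the order of the reversed process---is exactly the point the paper handles tacitly under the tag ``$n$-th order process''.
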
\vspace{0.2cm}

For proving this Lemma it suffices to work in discrete time. This allows us to make the proof accessible without using measure theory. To that end, we use the simpler notation from the Letter (around Eq.~(3) therein) and recall that $\hat \gamma_k$ is a particular tuple of $k+1$ consecutively observed states.
 We further introduce notation to work on the process's embedded Markov chain: For a trajectory $\Gamma_t\equiv(x_\tau)_{0\le\tau\le t}$, we denote the random variables of the embedded Markov chain by $X^{(1)},X^{(2)}, \cdots$. For readability purposes, we access the $i$-th embedded Markov chain state of $ \hat \gamma_k$ by $\hat  \gamma_k^{(i)}$ for $i \in [k+1]$ and index a sequence of states by $\hat \gamma_k^{(i:j)} \equiv ( \gamma_k^{(i)}, \gamma_k^{(i+1)}, \ldots, \hat \gamma_k^{(j)})$. If $i>j$, the index sequence is decreasing. We use the same indexing for the random variables.
 We further denote by 
 \begin{align*}
 	\Prob{\hat{\gamma}_{k}^{(k+1)} \mid  \hat{\gamma}_{k}^{(1:k)}} &\equiv \Prob{X^{(k+1)} = \hat{\gamma}_{k}^{(k+1)} \mid X^{(1:k)} = \hat{\gamma}_{k}^{(1:k)}} \\&= \Prob{X^{(k+2)} = \hat{\gamma}_{k}^{(k+1)} \mid X^{(2:k+1)} = \hat{\gamma}_{k}^{(1:k)}},
 \end{align*}
 where the first equality is a definition, and the second equality demonstrates the shift-invariance, which holds since we are operating in the steady state and since the underlying process is in continuous time (thus, we get no periodicity problems). If we refer to the entire sequence of $k+1$ discrete states of $\hat \gamma_k$, we omit indices.

 We note that Lemma~\ref{lem:estimator-consistent-in-order} and Corollary~\ref{cor:s-cond-vs-uncond-recurrence} have been argued for in discrete time systems \cite{SUProldanEstimatingDissipationSingle2010}. We provide a first-principles-based proof and thereby (i) extend to our setting of \emph{continuous-time} Markov Chains and (ii) prove further properties of the estimator. In particular the proof of a lower-bound for continuous time systems in non-trivial, as should be clear from Section~\ref{sec:rigorous-proof} and Ref.~\cite{SUPmartinezInferringBrokenDetailed2019}.

\begin{proof}[Proof of Lemma~\ref{lem:estimator-consistent-in-order}]
    Suppose $(\hat \Gamma_t)_{t \in \R^+}$  is a $n$-th order semi-Markov process. Let $k \geq n$. By definition, we have
    \begin{align*}
        \dS^{\rm est}_{k}(\hat \Gamma_t)
        &\equiv \frac{1}{T} \sum_{\hat{\gamma}_{k}}\Prob{\hat{\gamma}_{k}}
        \ln\frac{ \Prob{\hat{\gamma}_{k}^{(k+1)} \mid \hat{\gamma}_{k}^{(1:k)}}}{\Prob{\hat{\gamma}_{k}^{(1)} \mid \hat{\gamma}_{k}^{(k+1:2)}}} \tag{by definition} \\
        &= \frac{1}{T}  \sum_{\hat{\gamma}_{k}}\Prob{\hat{\gamma}_{k}}
        \ln\frac{ \Prob{\hat{\gamma}_{k}^{(k+1)} \mid \hat{\gamma}_{k}^{(k-n+1:k)}}}{\Prob{\hat{\gamma}_{k}^{(1)} \mid \hat{\gamma}_{k}^{(n+1:2)}}} \tag{$n$-th order process}\\
        &= \frac{1}{T}  \sum_{\hat{\gamma}_{k}}\Prob{\hat{\gamma}_{k}}
        \ln{ \Prob{\hat{\gamma}_{k}^{(k+1)} \mid \hat{\gamma}_{k}^{(k-n+1:k)}}}
	- \frac{1}{T}  \sum_{\hat{\gamma}_{k}}\Prob{\hat{\gamma}_{k}}
        \ln{ \Prob{\hat{\gamma}_{k}^{(1)} \mid \hat{\gamma}_{k}^{(n+1:2)}}}\\
        &= \frac{1}{T}  \sum_{\hat{\gamma}_{k}^{(k-n+1:k+1)}}\Prob{\hat{\gamma}_{k}^{(k-n+1:k+1)}}
        \ln{ \Prob{\hat{\gamma}_{k}^{(k+1)} \mid \hat{\gamma}_{k}^{(k-n+1:k)}}}\nonumber\\
	&- \frac{1}{T}  \sum_{\hat{\gamma}_{k}^{(1:n+1)}}\Prob{\hat{\gamma}_{k}^{(1:n+1)}}
        \ln{ \Prob{\hat{\gamma}_{k}^{(1)} \mid \hat{\gamma}_{k}^{(n+1:2)}}}   \tag{marginalisation}.  \\  
        \intertext{Using that the embedded Markov chain is in steady state and hence shift invariant, we may do a change of variable:}
        &= \frac{1}{T}  \sum_{\hat{\gamma}_{k}^{(1:n+1)}}\Prob{\hat{\gamma}_{k}^{(1:n+1)}}
        \ln{ \Prob{\hat{\gamma}_{k}^{(n+1)} \mid \hat{\gamma}_{k}^{(1:n)}}}
	- \frac{1}{T}  \sum_{\hat{\gamma}_{k}^{(1:n+1)}}\Prob{\hat{\gamma}_{k}^{(1:n+1)}}
        \ln{ \Prob{\hat{\gamma}_{k}^{(1)} \mid \hat{\gamma}_{k}^{(n+1:2)}}}   \\
        &= \frac{1}{T}  \sum_{\hat{\gamma}_{k}^{(1:n+1)}}\Prob{\hat{\gamma}_{k}^{(1:n+1)}}
        \ln{ \frac{\Prob{\hat{\gamma}_{k}^{(n+1)} \mid \hat{\gamma}_{k}^{(1:n)}}
        }{ \Prob{\hat{\gamma}_{k}^{(1)} \mid \hat{\gamma}_{k}^{(n+1:2)}}
        }
        }  \\
        &= \dS^{\rm est}_{n}(\hat \Gamma_t)
          \end{align*}
  \end{proof}

  We are now in a position to prove the robustness theorem.

  \begin{proof}[Proof of Theorem~\ref{thm:robustness-to-overestimation}]
      For an observed process $(\hat \Gamma_t)_{t \in \R^+}$ of order $n$ and estimator of order $k\geq n$, we have for the microscopic entropy production rate
      \begin{align}
          \dS(\Gamma)
          \geq& \dS^{\rm est}_n(\hat \Gamma_t) \tag{by Theorem~\ref{thm:estimator-gives-lower-bound}} \\
          =&  \dS^{\rm est}_k(\hat \Gamma_t) \tag{by Lemma~\ref{lem:estimator-consistent-in-order}}.
      \end{align}
  \end{proof}

 \subsection{Positivity and behavior for increasing order}
  
  We recall that the estimator in the main text $\dS^{\rm est}_{k}$ is the expectation of the log \emph{splitting} probability ratios. We here connect this to the expectation of the simple path probability ratios. To that end, we define the $k$-th order \emph{unconditional} entropy production rate estimator
  \begin{align}
  	 \dS^{\rm est, uncond}_{k}\equiv \frac{1}{T}  \sum_{\hat{\gamma}_{k}}\Prob{\hat{\gamma}_{k}}
        \ln\frac{ \Prob{\hat{\gamma}_{k}}}{\Prob{\theta \hat{\gamma}_{k}}}.
  \end{align}
  
  The following result shows that, irrespective of the order of the (semi-)Markov process, the unconditional estimator can be expressed as the sum of conditional estimators (that is, as a sum of the estimators with splitting probabilities from the main text). That is\vspace{0.2cm}
  \begin{lemma} \label{lem:uncond-vs-cond}
  	For all $n>0$:
  	$$\dS^{\rm est, uncond}_{n} = \sum_{l=1}^n \dS^{\rm est}_{l}.$$
  \end{lemma}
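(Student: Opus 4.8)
The plan is to prove the identity by telescoping the joint probability of an observed state sequence with the ordinary chain rule, then taking the path expectation term by term and collapsing each term by marginalization together with shift-invariance of the steady-state embedded chain. As in the proof of Lemma~\ref{lem:estimator-consistent-in-order}, it suffices to work in discrete time on the embedded Markov chain; since the microscopic state space is finite, for fixed $n$ there are only finitely many sequences $\hat{\gamma}_{n}$ and every sum below is a finite sum, so no measure-theoretic care is needed here.

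First I would record the exact factorizations, valid for \emph{any} process (no Markov assumption), writing $\hat{\gamma}_{n}=(\hat{\gamma}_{n}^{(1)},\dots,\hat{\gamma}_{n}^{(n+1)})$ and $\theta\hat{\gamma}_{n}=(\hat{\gamma}_{n}^{(n+1)},\dots,\hat{\gamma}_{n}^{(1)})$:
\[
\Prob{\hat{\gamma}_{n}}=\Prob{\hat{\gamma}_{n}^{(1)}}\prod_{l=1}^{n}\Prob{\hat{\gamma}_{n}^{(l+1)}\mid\hat{\gamma}_{n}^{(1:l)}},\qquad
\Prob{\theta\hat{\gamma}_{n}}=\Prob{\hat{\gamma}_{n}^{(n+1)}}\prod_{l=1}^{n}\Prob{\hat{\gamma}_{n}^{(n+1-l)}\mid\hat{\gamma}_{n}^{(n+1:n+2-l)}}.
\]
Taking the logarithm of the ratio then splits $\ln(\Prob{\hat{\gamma}_{n}}/\Prob{\theta\hat{\gamma}_{n}})$ into a boundary term $\ln(\Prob{\hat{\gamma}_{n}^{(1)}}/\Prob{\hat{\gamma}_{n}^{(n+1)}})$ plus, for each $l=1,\dots,n$, the difference $\ln\Prob{\hat{\gamma}_{n}^{(l+1)}\mid\hat{\gamma}_{n}^{(1:l)}}-\ln\Prob{\hat{\gamma}_{n}^{(n+1-l)}\mid\hat{\gamma}_{n}^{(n+1:n+2-l)}}$.

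Next I would apply $\tfrac{1}{T}\sum_{\hat{\gamma}_{n}}\Prob{\hat{\gamma}_{n}}(\cdot)$ to each piece. The boundary term vanishes because marginalizing $\Prob{\hat{\gamma}_{n}}$ onto its first, resp. last, coordinate gives in both cases the stationary single-state law of the embedded chain, so $\sum_{\hat{\gamma}_{n}}\Prob{\hat{\gamma}_{n}}\ln\Prob{\hat{\gamma}_{n}^{(1)}}=\sum_{\hat{\gamma}_{n}}\Prob{\hat{\gamma}_{n}}\ln\Prob{\hat{\gamma}_{n}^{(n+1)}}$. For the $l$-th difference, the first logarithm depends only on coordinates $1,\dots,l+1$ of $\hat{\gamma}_{n}$ and the second only on coordinates $n+1-l,\dots,n+1$, so each marginalizes to a sum over an $(l{+}1)$-block; shift-invariance of the steady-state embedded chain then lets me relabel the trailing block to positions $1,\dots,l+1$, turning the second conditional into $\Prob{\hat{\gamma}_{l}^{(1)}\mid\hat{\gamma}_{l}^{(l+1:2)}}$. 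What remains is exactly $T\,\dS^{\rm est}_{l}$. Summing over $l$ and dividing by $T$ yields $\dS^{\rm est, uncond}_{n}=\sum_{l=1}^{n}\dS^{\rm est}_{l}$.

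The one place to be careful — and the likeliest source of a slip — is matching indices after the reversal and again after the shift: one must verify that the reversed-path conditional $\Prob{\hat{\gamma}_{n}^{(n+1-l)}\mid\hat{\gamma}_{n}^{(n+1:n+2-l)}}$ really is the $\theta$-reversed factor $\Prob{(\theta\hat{\gamma}_{l})^{(l+1)}\mid(\theta\hat{\gamma}_{l})^{(1:l)}}$ appearing in the definition of $\dS^{\rm est}_{l}$. Everything else is routine bookkeeping; the case $n=1$ provides the sanity check $\dS^{\rm est, uncond}_{1}=\dS^{\rm est}_{1}$.
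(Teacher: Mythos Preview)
Your proof is correct and follows essentially the same route as the paper: apply the chain rule to both $\Prob{\hat{\gamma}_{n}}$ and $\Prob{\theta\hat{\gamma}_{n}}$, pair the order-$l$ forward factor with the order-$l$ reversed factor, observe that the boundary term $\ln(\Prob{\hat{\gamma}_{n}^{(1)}}/\Prob{\hat{\gamma}_{n}^{(n+1)}})$ vanishes under the stationary expectation, and use marginalization together with shift-invariance to identify each paired log-ratio with $\dS^{\rm est}_{l}$. The paper's proof is simply terser, jumping directly to the identification $\dS^{\rm est}_{k}+\dS^{\rm est}_{k-1}+\cdots+\dS^{\rm est}_{1}$ without spelling out the marginalization and relabeling steps that you (rightly) make explicit.
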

  
  \begin{proof}
  	By using 
  	$$\Prob{\hat{\gamma}_{k}} 
  	= \Prob{\hat{\gamma}_{k}^{(k+1)} \mid \hat{\gamma}_{k}^{(1:k)} }
  	\Prob{\hat{\gamma}_{k}^{(k)} \mid \hat{\gamma}_{k}^{(1:k-1)} }
  	\cdots 
  	\Prob{\hat{\gamma}_{k}^{(2)} \mid \hat{\gamma}_{k}^{(1)} }
  	\Prob{\hat{\gamma}_{k}^{(1)}}, $$
  	and the corresponding identity for the time reversal, we obtain 
  	\begin{align*}
  		\dS^{\rm est, uncond}_{k}
  		&= \frac{1}{T}  \sum_{\hat{\gamma}_{k}}\Prob{\hat{\gamma}_{k}}
        \ln\frac{ 
          \Prob{\hat{\gamma}_{k}^{(k+1)} \mid \hat{\gamma}_{k}^{(1:k)} }
  	\Prob{\hat{\gamma}_{k}^{(k)} \mid \hat{\gamma}_{k}^{(1:k-1)} }
  	\cdots 
  	\Prob{\hat{\gamma}_{k}^{(2)} \mid \hat{\gamma}_{k}^{(1)} }
  	\Prob{\hat{\gamma}_{k}^{(1)}}
        }{ 
          \Prob{\hat{\gamma}_{k}^{(1)} \mid \hat{\gamma}_{k}^{(k+1:2)} }
  	\Prob{\hat{\gamma}_{k}^{(2)} \mid \hat{\gamma}_{k}^{(k+1:3)} }
  	\cdots 
  	\Prob{\hat{\gamma}_{k}^{(k)} \mid \hat{\gamma}_{k}^{(k+1)} }
  	\Prob{\hat{\gamma}_{k}^{(k+1)}}
          }\\
  	&=\dS^{\rm est}_{k} + \dS^{\rm est}_{k-1} + \cdots + \dS^{\rm est}_{1} + \underbrace{\frac{1}{T} \sum_{\hat{\gamma}_{k}}\Prob{\hat{\gamma}_{k}} \ln \frac{\Prob{\hat{\gamma}_{k}^{(1)}}}{\Prob{\hat{\gamma}_{k}^{(k+1)}}}}_{0}\\
  	&= \sum_{l=1}^k \dS^{\rm est}_{l}
  	\end{align*}
  \end{proof}
  By choosing $n=1$ in Lemma~\ref{lem:uncond-vs-cond}, we get the following corollary, which is often implicitly used for Markov processes: In that case, $\dSM$ is some times defined with splitting probabilities and other times with length-2 path probabilities, both of which are equivalent:\vspace{0.2cm}
  \begin{corollary}
  	For processes of any order, we have 
  	$$\dS^{\rm est}_{1} = \dS^{\rm est,uncond}_{1}.$$
  \end{corollary}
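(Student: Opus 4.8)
The plan is to obtain the statement as the $n=1$ instance of Lemma~\ref{lem:uncond-vs-cond}, which has already been proved. Since that lemma asserts $\dS^{\rm est, uncond}_{n} = \sum_{l=1}^n \dS^{\rm est}_{l}$ for every $n>0$, setting $n=1$ collapses the right-hand sum to its single term $\dS^{\rm est}_{1}$, giving exactly $\dS^{\rm est}_{1} = \dS^{\rm est,uncond}_{1}$. So in the cleanest route there is essentially nothing left to do beyond citing the lemma.

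For completeness one could also spell out the argument directly, without invoking the lemma. At order $k=1$ a window $\hat{\gamma}_1$ consists of exactly two consecutive embedded-chain states $\hat{\gamma}_1^{(1)},\hat{\gamma}_1^{(2)}$, so the chain rule for the embedded Markov chain gives $\Prob{\hat{\gamma}_1} = \Prob{\hat{\gamma}_1^{(2)}\mid\hat{\gamma}_1^{(1)}}\,\Prob{\hat{\gamma}_1^{(1)}}$, together with the analogous factorization $\Prob{\theta\hat{\gamma}_1} = \Prob{\hat{\gamma}_1^{(1)}\mid\hat{\gamma}_1^{(2)}}\,\Prob{\hat{\gamma}_1^{(2)}}$ for the time-reversed window. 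Substituting these into the definition of $\dS^{\rm est,uncond}_{1}$ and expanding the logarithm splits the sum into a part built from the splitting probabilities, which is precisely $\dS^{\rm est}_{1}$, plus a boundary term $\tfrac{1}{T}\sum_{\hat{\gamma}_1}\Prob{\hat{\gamma}_1}\ln\big(\Prob{\hat{\gamma}_1^{(1)}}/\Prob{\hat{\gamma}_1^{(2)}}\big)$.

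The only point that genuinely needs a (one-line) justification — and the reason the corollary is not entirely vacuous — is that this boundary term vanishes. Marginalizing $\Prob{\hat{\gamma}_1}$ over the complementary index, it equals $\tfrac{1}{T}\big(\sum_i \Prob{X^{(1)}=i}\ln\Prob{X^{(1)}=i} - \sum_j \Prob{X^{(2)}=j}\ln\Prob{X^{(2)}=j}\big)$, and by stationarity (shift-invariance) of the embedded chain the single-site marginals at positions $1$ and $2$ coincide, so the two entropic sums cancel. This is exactly the cancellation already exploited inside the proof of Lemma~\ref{lem:uncond-vs-cond}, which is why invoking that lemma is the natural way to present the corollary. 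There is no substantive obstacle here: the result is a bookkeeping consequence of the chain rule together with stationarity of the embedded chain, and I would state it in one or two lines.
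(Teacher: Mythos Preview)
Your proposal is correct and matches the paper's approach exactly: the paper states the corollary as an immediate consequence of Lemma~\ref{lem:uncond-vs-cond} at $n=1$, and the boundary-term cancellation you spell out is precisely the one already handled inside that lemma's proof.
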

   Importantly, this identity is far from true for orders larger than 1. This difference can be directly quantified:\vspace{0.2cm}
   \begin{corollary} \label{cor:s-cond-vs-uncond-recurrence}
   	For processes of any order and for every $k\geq2$ we have
   	$$ \dS^{\rm est}_{k} = \dS^{\rm est,uncond}_{k}-\dS^{\rm est,uncond}_{k-1}.$$
   \end{corollary}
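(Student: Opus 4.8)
The plan is to read this identity off directly from Lemma~\ref{lem:uncond-vs-cond}, which already supplies the cumulative-sum representation $\dS^{\rm est,uncond}_{n}=\sum_{l=1}^{n}\dS^{\rm est}_{l}$ valid for every integer $n>0$. Since the hypothesis here is $k\ge 2$, both $k$ and $k-1$ lie in the admissible range $n>0$, so I would simply apply the lemma at order $k$ and at order $k-1$ and subtract.

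Concretely, first I would write $\dS^{\rm est,uncond}_{k}=\sum_{l=1}^{k}\dS^{\rm est}_{l}$ and $\dS^{\rm est,uncond}_{k-1}=\sum_{l=1}^{k-1}\dS^{\rm est}_{l}$. Subtracting the second from the first, every conditional contribution $\dS^{\rm est}_{l}$ with $1\le l\le k-1$ cancels in the telescoping difference, leaving precisely
\[
\dS^{\rm est,uncond}_{k}-\dS^{\rm est,uncond}_{k-1}=\dS^{\rm est}_{k},
\]
which is the asserted recurrence. The case $k=1$ is deliberately outside the statement because $\dS^{\rm est,uncond}_{0}$ is not covered by the lemma; there one instead has the separate identity $\dS^{\rm est}_{1}=\dS^{\rm est,uncond}_{1}$ obtained by taking $n=1$.

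I do not expect any genuine obstacle: all the substance is already in Lemma~\ref{lem:uncond-vs-cond}, whose proof rests on factorizing the joint probability of a consecutive $(k+1)$-tuple (and of its time reversal) into a product of conditional splitting probabilities and noticing that the leftover boundary ratio $\Prob{\hat{\gamma}_{k}^{(1)}}/\Prob{\hat{\gamma}_{k}^{(k+1)}}$ averages to zero in the steady state. The only things to keep straight are the summation ranges and the $k\ge 2$ hypothesis. As a payoff worth recording, combining the recurrence with the non-negativity and monotonicity of $\dS^{\rm est,uncond}_{k}$ (it is a Kullback--Leibler divergence between the law of an observed $(k+1)$-tuple and its time reverse, hence non-negative, and non-decreasing in $k$) immediately gives $\dS^{\rm est}_{k}\ge 0$ for all $k$, which is the positivity statement this subsection is building toward.
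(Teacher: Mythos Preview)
Your proposal is correct and takes essentially the same approach as the paper: the corollary is stated there without a separate proof, as it follows immediately from Lemma~\ref{lem:uncond-vs-cond} by subtracting the sum at order $k-1$ from that at order $k$. Your additional remarks on the $k=1$ case and the connection to positivity are accurate and align with the surrounding discussion in the paper, though note that the paper derives monotonicity of $\dS^{\rm est,uncond}_{k}$ \emph{from} the positivity of $\dS^{\rm est}_{k}$ (via the log-sum inequality) rather than the other way around as you sketch; both directions are valid.
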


Further, it is straightforward to see that $\dS^{\rm est,uncond}_k$ is non-negative by the following presentation:
\begin{align}
	 \dS^{\rm est, uncond}_{k}= \frac{1}{2T}  \sum_{\hat{\gamma}_{k}}\left(\Prob{\hat{\gamma}_{k}} - \Prob{\theta \hat{\gamma}_{k}}\right)
        \ln\frac{ \Prob{\hat{\gamma}_{k}}}{\Prob{\theta \hat{\gamma}_{k}}} \geq 0,
\end{align}  
since every summand is non-negative. Non-negativity of $\dS^{\rm est}$ requires a bit more work:\vspace{0.2cm}
\begin{corollary} 
For processes of any order and for every $k\geq1$ we have
	$$ \dS^{\rm est}_{k} \geq 0$$.
\end{corollary}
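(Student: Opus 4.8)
The plan is to reduce positivity of $\dS^{\rm est}_k$ to monotonicity of the \emph{unconditional} estimator $\dS^{\rm est,uncond}_k$ in its order $k$, and to establish that monotonicity by a log-sum (data-processing) argument on the embedded chain. The base case $k=1$ is immediate: by the corollary $\dS^{\rm est}_1=\dS^{\rm est,uncond}_1$ and the symmetric representation $\dS^{\rm est,uncond}_1=\frac{1}{2T}\sum_{\hat\gamma_1}\big(\Prob{\hat\gamma_1}-\Prob{\theta\hat\gamma_1}\big)\ln\frac{\Prob{\hat\gamma_1}}{\Prob{\theta\hat\gamma_1}}\ge 0$, since every summand is of the form $(a-b)\ln(a/b)\ge 0$; by Assumption~\ref{assumption:each-transition-reversible} a consecutive block has positive probability iff its reversal does, so all log ratios are finite. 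For $k\ge 2$ I would invoke Corollary~\ref{cor:s-cond-vs-uncond-recurrence}, $\dS^{\rm est}_k=\dS^{\rm est,uncond}_k-\dS^{\rm est,uncond}_{k-1}$, so it suffices to prove $\dS^{\rm est,uncond}_k\ge\dS^{\rm est,uncond}_{k-1}$ for all $k\ge 2$.

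To prove this monotonicity I would start from the symmetric form $2T\,\dS^{\rm est,uncond}_k=\sum_{\hat\gamma_k}\big(\Prob{\hat\gamma_k}-\Prob{\theta\hat\gamma_k}\big)\ln\frac{\Prob{\hat\gamma_k}}{\Prob{\theta\hat\gamma_k}}$ and group the $(k{+}1)$-tuples $\hat\gamma_k=(s_1,\dots,s_{k+1})$ by their first $k$ entries. Fixing $(s_1,\dots,s_k)$ and writing $a_{s_{k+1}}\equiv\Prob{(s_1,\dots,s_k,s_{k+1})}$ and $b_{s_{k+1}}\equiv\Prob{(s_{k+1},s_k,\dots,s_1)}$, the log-sum inequality applied to $\sum_i a_i\ln(a_i/b_i)$ and to $\sum_i b_i\ln(b_i/a_i)$, then added, gives $\sum_{s_{k+1}}(a_{s_{k+1}}-b_{s_{k+1}})\ln\frac{a_{s_{k+1}}}{b_{s_{k+1}}}\ge (A-B)\ln\frac AB$ with $A\equiv\sum_{s_{k+1}}a_{s_{k+1}}$ and $B\equiv\sum_{s_{k+1}}b_{s_{k+1}}$. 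Here $A=\Prob{(s_1,\dots,s_k)}$, whereas marginalizing the leading coordinate of the reversed block and using the shift-invariance of the embedded chain (steady state, continuous time) established around Lemma~\ref{lem:uncond-vs-cond} gives $B=\sum_{s_{k+1}}\Prob{(s_{k+1},s_k,\dots,s_1)}=\Prob{(s_k,\dots,s_1)}=\Prob{\theta(s_1,\dots,s_k)}$. Summing the grouped bound over $(s_1,\dots,s_k)$ then reproduces exactly $2T\,\dS^{\rm est,uncond}_{k-1}$ on the right-hand side, so $\dS^{\rm est,uncond}_k\ge\dS^{\rm est,uncond}_{k-1}$; together with the base case and Corollary~\ref{cor:s-cond-vs-uncond-recurrence} this yields $\dS^{\rm est}_k\ge 0$ for all $k\ge 1$.

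The main obstacle is the bookkeeping in the marginalization step: one must verify that summing $s_{k+1}$ out of $\Prob{\theta\hat\gamma_k}$ after grouping by the first $k$ entries produces precisely the time-reversed probability of the shorter block, which is where shift-invariance of the embedded chain is used. This is the only place the argument needs more than the pointwise inequality $(a-b)\ln(a/b)\ge 0$ and the log-sum inequality; everything else is a mechanical consequence of Lemma~\ref{lem:uncond-vs-cond} and Corollary~\ref{cor:s-cond-vs-uncond-recurrence}, so no new analytic input (in particular no measure theory) is required beyond what is already established.
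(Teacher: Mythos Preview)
Your proposal is correct and takes essentially the same route as the paper: reduce to $\dS^{\rm est,uncond}_k\ge\dS^{\rm est,uncond}_{k-1}$ via Corollary~\ref{cor:s-cond-vs-uncond-recurrence}, group the $(k{+}1)$-tuples by their first $k$ entries, and apply the log-sum inequality, using shift-invariance of the embedded chain to identify the marginal $\sum_{s_{k+1}}\Prob{(s_{k+1},s_k,\dots,s_1)}$ with $\Prob{(s_k,\dots,s_1)}$. The only cosmetic difference is that you work with the symmetric form and apply log-sum to both KL directions before adding, whereas the paper applies log-sum once to the asymmetric form $\sum\Prob{\hat\gamma_k}\ln\frac{\Prob{\hat\gamma_k}}{\Prob{\theta\hat\gamma_k}}$; the shift-invariance step you flag as the ``main obstacle'' is precisely the step the paper uses (implicitly) as well.
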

\begin{proof}
	The proof for $k=1$ follows trivially from $\dS^{\rm est, uncond}_{1} \geq 0$. 
	
	For $k\geq 2$, we have
	\begin{align*}
		\dS^{\rm est}_{k} 
		&= \dS^{\rm est,uncond}_{k}-\dS^{\rm est,uncond}_{k-1} \tag{by Corollary~\ref{cor:s-cond-vs-uncond-recurrence}}\\
		&= \frac{1}{T} \sum_{\hat{\gamma}_{k}^{(1:k)}} \sum_{\hat{\gamma}_{k}^{(k+1)}}\Prob{\hat{\gamma}_{k}}
        \ln\frac{ \Prob{\hat{\gamma}_{k}}}{\Prob{\theta \hat{\gamma}_{k}}} -\dS^{\rm est,uncond}_{k-1} \tag{by Definition of $\dS^{\rm est,uncond}_{k}$}\\
        &\geq \frac{1}{T} \sum_{\hat{\gamma}_{k}^{(1:k)}}\Prob{\hat{\gamma}_{k}^{(1:k)}}
        \ln\frac{ \Prob{\hat{\gamma}_{k}^{(1:k)}}}{\Prob{\theta \hat{\gamma}_{k}^{(1:k)}}} -\dS^{\rm est,uncond}_{k-1} \tag{log-sum inequality}\\ 
        &= \dS^{\rm est,uncond}_{k-1} - \dS^{\rm est,uncond}_{k-1} = 0,
	\end{align*}
	implying the corollary.
\end{proof}
Together with Corollary~\ref{cor:s-cond-vs-uncond-recurrence}, this allows for a stronger statement about the unconditional estimator: $\dS^{\rm est,uncond}_{k}$~is not only non-negative, but also non-decreasing in $k$:\vspace{0.2cm}
\begin{corollary}
	For processes of any order and for every $k\geq1$ we have
   	$$ \dS^{\rm est,uncond}_{k} \geq \dS^{\rm est,uncond}_{k-1}.$$
\end{corollary}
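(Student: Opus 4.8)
The plan is to prove this monotonicity as an immediate corollary, using one of two equivalent short arguments already set up above. Route (a): by Corollary~\ref{cor:s-cond-vs-uncond-recurrence} one has, for $k\ge 2$, the telescoping identity $\dS^{\rm est,uncond}_{k}-\dS^{\rm est,uncond}_{k-1}=\dS^{\rm est}_{k}$, and the preceding corollary gives $\dS^{\rm est}_{k}\ge 0$; combining the two yields $\dS^{\rm est,uncond}_{k}\ge\dS^{\rm est,uncond}_{k-1}$. Route (b): by Lemma~\ref{lem:uncond-vs-cond}, $\dS^{\rm est,uncond}_{k}=\sum_{l=1}^{k}\dS^{\rm est}_{l}$ is a partial sum whose summands are all non-negative (again by $\dS^{\rm est}_{l}\ge 0$), and partial sums of non-negative terms are non-decreasing in the number of terms. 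The two routes are of course the same computation read in opposite directions, so I would present whichever is cleaner given the surrounding exposition---most likely (a), since it reuses the recurrence that immediately precedes the statement.

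Carrying it out: for $k\ge 2$ I would write one line invoking Corollary~\ref{cor:s-cond-vs-uncond-recurrence} followed by one line invoking $\dS^{\rm est}_{k}\ge 0$, and conclude. For the base case $k=1$ I would either note that $\dS^{\rm est,uncond}_{0}$ is the sum over single observed states $\hat\gamma_0^{(1)}$ of $\Prob{\hat\gamma_0^{(1)}}\ln\!\bigl(\Prob{\hat\gamma_0^{(1)}}/\Prob{\hat\gamma_0^{(1)}}\bigr)=0$ (there being no transition to reverse), so that $\dS^{\rm est,uncond}_{0}=0$ and the inequality reduces to $\dS^{\rm est,uncond}_{1}\ge 0$, which is the already-proven non-negativity (equivalently $\dS^{\rm est,uncond}_{1}=\dS^{\rm est}_{1}\ge 0$); or, if the paper prefers not to assign a meaning to the index $0$, I would simply state monotonicity for $k\ge 2$ and append the separate fact $\dS^{\rm est,uncond}_{1}\ge 0$.

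I do not expect any genuine obstacle: this is a corollary in the strict sense, and the only point needing a sentence of care is the convention for $\dS^{\rm est,uncond}_{0}$ (empty sum, or absence of any transition to time-reverse) so that the $k=1$ instance is covered by the same statement rather than as an exception. Everything else is a two-line consequence of results established immediately above.
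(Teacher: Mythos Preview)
Your proposal is correct and takes essentially the same approach as the paper: the paper simply states that the result follows by combining Corollary~\ref{cor:s-cond-vs-uncond-recurrence} with the preceding non-negativity corollary $\dS^{\rm est}_{k}\ge 0$, which is exactly your Route~(a). Your treatment of the $k=1$ base case (via $\dS^{\rm est,uncond}_{0}=0$ or, equivalently, $\dS^{\rm est,uncond}_{1}=\dS^{\rm est}_{1}\ge 0$) is in fact more careful than the paper, which glosses over this point.
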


We were careful in stating in the corollaries that their scope are processes of \emph{any} order. In particular, the unconditional estimator $ \dS^{\rm est,uncond}_{k}$ is \emph{always} non-decreasing in $k$. $\dS^{\rm est}_{k}$, however, has been shown to stay constant once the underlying process's order, say $n$, has been reached (Lemma~\ref{lem:estimator-consistent-in-order}). The change of $\dS^{\rm est}_{k}$ for $k <n$ , however, remains unclear so far. This is equivalent to the ``change-of-change'' i.e.\ to a discrete 2nd order derivative version of $ \dS^{\rm est,uncond}_{k}$ (by Lemma~\ref{cor:s-cond-vs-uncond-recurrence}). Interestingly, as we show in the Appendix to the main Letter, no general statement is possible: $\dS^{\rm est}_{k}$  can increase as well as decrease before converging at~$k \geq n$.

\section{Apparent Power Law Scaling}
Recent works \cite{SUPyuInversePowerLaw2021,SUPcocconiScalingEntropyProduction2022,SUPyuStatespaceRenormalizationGroup2022,SUPyuDissipationLimitedResolutions2024} investigated how the energy dissipation rate in self-similar non-equilibrium reaction networks
depends on the coarse-graining scale $\lambda\equiv n/n_s$
(i.e.\ the total number of microstates $n$ relative to the number of mesoscopic states $n_s$ in a lumped system) in absence of a timescale separation and observed a power-law dependence of the Markovian estimate
$\dSM[\hGamma^s_t]\propto \lambda^{-\alpha}$
\cite{SUPyuInversePowerLaw2021}. These findings were rationalized in terms of an ``inverse cascade'' of energy dissipation on different scales. In the specific context of active flows generated by the microtubule-kinesin motor
system \cite{SUPDogic}, this amounts to the claim that the
most energy is spent to generate and maintain the flow at
smaller length scales and only a tiny amount at large
length scales \cite{SUPyuInversePowerLaw2021}. While the underlying calculations in Refs.~\cite{SUPyuInversePowerLaw2021,SUPcocconiScalingEntropyProduction2022,
  SUPyuStatespaceRenormalizationGroup2022,SUPyuDissipationLimitedResolutions2024} 
are technically sound within the inconsistent Markovian approximation, their interpretation \cite{SUPyuInversePowerLaw2021,SUPyuStatespaceRenormalizationGroup2022} is physically inconsistent, as we show in the main Letter by means of counterexamples in Fig.~2 that there are no implications between the power-law dependence of the Markovian estimate
$\dSM[\hGamma^s_t]\propto \lambda^{-\alpha}$ and the energy dissipated on different length scales.\\

	\section{Splitting probabilities in lumped cycle graph \texorpdfstring{$C_n$}{Cn}} \label{sec:splitting-Cn}  
	
	In this section, we determine the splitting probabilities in the lumped cycle graph via the Laplace transform. Consider the graph depicted in Fig.~1a in the Letter, that is, a cycle graph $C_n$ of $n$ states and with $n/\lambda \in \N$ many lumps of size $\lambda \in \N$. We focus, without loss of generality  (w.l.o.g.), on the first lump. For state $i \in [\lambda]$, we have the following system of $\lambda$ coupled differential equations
	\begin{align}
		\frac{dp_i}{dt} = -(\omega + \kappa)p_i + \1_{i>1} \omega p_{i-1} + \1_{i<n} 
		\kappa p_{i+1}, \label{eq:cycle-lumping-inner-odes}
	\end{align}
	with absorbing boundaries 
	\begin{align}
		\frac{dp_{\lambda+1}}{dt} = \omega p_\lambda, \qquad 	\frac{dp_{n}}{dt} = \kappa p_1. \label{eq:cycle-lumping-boundary-odes}
	\end{align}
	In Laplace space ($\tilde{p}(s) \equiv \int_{0}^{\infty} p(t) \exp(-st) dt$), we obtain
	\begin{align}
		\tridiagmatrix{s+\omega+\kappa}{-\kappa}{-\omega} \begin{pmatrix}
			\tilde{p}_1(s) \\
			\tilde{p_2}(s) \\
			\vdots \\
			\vdots \\
			\tilde{p}_\lambda(s)
		\end{pmatrix}
		= \begin{pmatrix}
			p_{1}(0)\\
			p_{2}(0)\\
			\vdots \\
			\vdots \\
			p_{\lambda}(0)
		\end{pmatrix}.  \label{eq:cycle-lumping-LSE}
	\end{align} 
	Denoting the tridiagonal matrix in Eq.~\eqref{eq:cycle-lumping-LSE} by $M(s)$ and by $p_{j \mid i}(t)$ the probability of having reached state $j$ by time $t$ given that we started in state $i$,  we have for the ratio of splitting probabilities	
 \begin{align}
		\frac{\Phi_{+ \mid +}}{\Phi_{- \mid -}} 
		&= \frac{\lim_{t \rightarrow \infty} p_{l+1 \mid 1}(t) }{\lim_{t \rightarrow \infty} p_{n \mid \lambda}(t) } \\
		&= \frac{\lim_{s \rightarrow 0} s \cdot \tilde{p}_{\lambda+1 \mid 1}(s) }{\lim_{s \rightarrow 0} s \cdot \tilde{p}_{n \mid \lambda}(s)} \tag{finite value theorem}\\
		&= \frac{\lim_{s \rightarrow 0} \omega \cdot  \tilde{p}_{\lambda \mid 1}(s) }{\lim_{s \rightarrow 0} \kappa \cdot  \tilde{p}_{1 \mid \lambda}(s)}  \tag{by Eq.~\eqref{eq:cycle-lumping-boundary-odes}}\\
		&= \frac{\omega\cdot \bra{\lambda}  \lim_{s \rightarrow 0}  M^{-1}(s)\ket{1}}{\kappa \cdot \bra{1} \lim_{s \rightarrow 0}  M^{-1}(s)\ket{\lambda}} \tag{by Eq.~\eqref{eq:cycle-lumping-LSE}} \\
		&= \left( \frac{\omega}{\kappa} \right) ^\lambda,
	\end{align}
	where the last step follows since we have for the tridiagonal Toeplitz matrix $M(0)$ that $\displaystyle{\frac{\left[M(0)^{-1} \right]_{\lambda,1}}{\left[M(0)^{-1} \right]_{1,\lambda}} = \left( \frac{\omega}{\kappa}\right)^{\lambda-1}}$.

	\subsection{Note on the waiting time distribution} \label{sec:splitting-Cn:wtd}
		
	Let us denote the waiting time distribution from the current lump in direction $u$ given that the preceding transition was in direction $v$ ($u,v \in \{+,-\}$) with $\Psi_{u \mid v}(t)$. By definition, we have the identity
	\begin{align}
		\lim_{t \rightarrow \infty} \Psi_{u \mid v}(t) = \Phi_{u \mid v}(t).
	\end{align}
	From Eqs.~(\ref{eq:cycle-lumping-inner-odes}) together with (\ref{eq:cycle-lumping-boundary-odes}) we can determine the Laplace-transformed waiting-time distribution as
	\begin{align}
		\tilde{\Psi}_{+ \mid +}(s) 
		&= \tilde{p}_{l+1 |1}(s)  \\
		&= \frac{\omega}{s} \bra{\lambda} M^{-1}(s) \ket{1}.
	\end{align}
	For instance, with $\lambda=3$, we obtain 
	\begin{align}
		\tilde{\Psi}_{+ \mid +}(s)  &= \frac{\omega^3}{\poly(s)},
	\end{align}
	with $\poly(s)= s (s+\kappa + \omega) \left((s+ \kappa)^2+2 s \omega +\omega^2\right)$, which has roots 
	\begin{align} \label{eq:cycle-lumping:s-roots-start}
		s_1 &= 0,\; s_2= -(\kappa+\omega), \\ s_3 &= -(\kappa+\omega)-\sqrt{2 \kappa\omega}, \; s_4 = -(\kappa+\omega)+\sqrt{2\kappa\omega}. \label{eq:cycle-lumping:s-roots-end}
	\end{align}
	Using Cauchy's residue theorem, we can invert $\tilde{\Psi}_{+ \mid +}(s)$ back to time domain
	\begin{align}
		\Psi_{+ \mid +}(t) 
		&= \omega^3 \sum_{i=1}^{4} \frac{1}{\frac{d}{ds} \poly(s)|_{s=s_i}} \exp(s_i \cdot t). \label{eq:cycle-lumping:Psi-t}
	\end{align}
	It only remains to plug the roots $s_i$ in Eqs.~\eqref{eq:cycle-lumping:s-roots-start}-\eqref{eq:cycle-lumping:s-roots-end} into Eq.~\eqref{eq:cycle-lumping:Psi-t}. The result is a tedious linear combination of four exponentials (of which one is constant). The result is $\Psi_{+ \mid +}(t) $ for $\omega = 3, \; \kappa = 2$ is shown in Fig.~\ref{fig::cycle-lumping:psi-plus-plus-wtd} (continuous blue line) along with results from simulations (orange dots). The waiting-time distribution is clearly non-exponential, directly contradicting the assumption of Markovian dynamics on the lumped state space that is made in \cite{SUPyuInversePowerLaw2021} implicitly in using the Markovian dissipation estimate.

  \begin{figure}[ht!]
		\centering
		\includegraphics{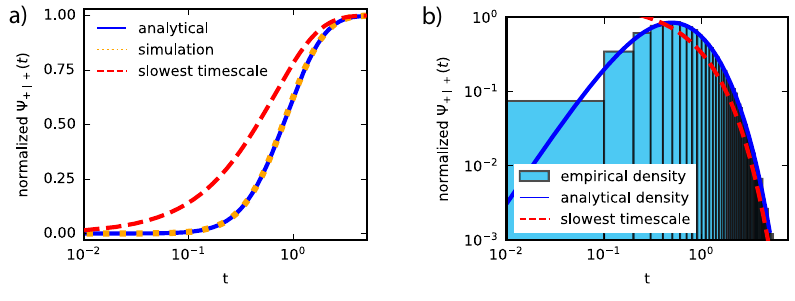}
		\caption{Highly non-exponential waiting time density $\Psi_{+ \mid +}(t)$ for the lumped cycle graph with lump size $\lambda=3$ and parameters  $\omega=3,\; \kappa = 2$ normalized by the splitting probability: (a) Distribution functions where the analytical result (blue line) matches the simulation (dotted orange line). While the slowest timescale is exponential (red dashed line), the waiting time density is not. (b) Probability density functions in log-log scale. The exponential, slowest time scale can be seen as a cut-off (dashed red) of the non-exponential waiting time density.}
		\label{fig::cycle-lumping:psi-plus-plus-wtd}
	\end{figure}

	For the calculation of the waiting time distribution in the reverse direction,  $\Psi_{- \mid -}(t)$, we proceed analogously. In particular, we note that the polynomial is the same, hence its roots and derivative are equal, and the only difference is the prefactor
	\begin{align}
		\Psi_{- \mid -}(t) 
		&= \kappa^3 \sum_{i=1}^{4} \displaystyle{\frac{1}{\frac{d}{ds} \poly(s)|_{s=s_i}}} \exp(s_i \cdot t).
	\end{align}
	We hence have that the conditional waiting time distribution is equal in both directions
	\begin{align}
		\frac{\Psi_{+ \mid +}(t) }{\Phi_{+ \mid +}} = 	\frac{\Psi_{- \mid -}(t) }{\Phi_{- \mid -}}, \quad \forall t \in \R^+.
	\end{align}
	Thus, in particular, their Kullback-Leibler divergence vanishes $\displaystyle{D_{\rm KL}\left[\left.	\frac{\Psi_{+ \mid +}(t) }{\Phi_{+ \mid +}} \right\| 	\frac{\Psi_{- \mid -}(t) }{\Phi_{- \mid -}} \right] = 0}$, so that waiting times do \emph{not} contribute to the entropy production rate, which is in turn fully captured by the splitting probabilities \cite{SUPmartinezInferringBrokenDetailed2019}.

	\section{Tree graph}
	We now consider the lumped tree of (edge) depth $d$ with lumps of depth $l$. Let the underlying microscopic graph $G_T$ be constructed as follows.

    \subsection{Construction of tree graph}
    Consider a tree graph of depth $d$, i.e., vertices are on levels ${U_0, \ldots, U_d}$, where level $i$ contains ${|U_i| = 2^i}$ vertices. For $i<d$, each vertex ${v\in U_i}$ has two offsprings ${x,y \in U_{i+1}}$ with rates ${L_{xv}=L_{yv}=\omega/2^{i+1}}$ in $+$ direction and ${L_{vx}= L_{vy}= \kappa/2^{i+1}}$ in $-$ direction.~Vertices in the bottom level $v\in U_d$ have edges of rate $\omega/2^d$ towards the root, and rate $\kappa/2^d$ from the root to each $v$.
    We refer to Fig.~1a in the Letter. The steady state is the uniform distribution as can be readily seen from the generator $L$.
	
	\subsection{Microscopic dynamics}
	The entropy production rate of the microscopic system can be computed with a cycle decomposition. Let  $T$ be the spanning tree obtained from $G_T$ with all edges between levels $U_d$ and $U_0$ removed. Let $\mathcal C$ be the cycle basis induced by $T$ \cite{SUPschnakenbergNetworkTheoryMicroscopic1976}. For any $c \in \mathcal C$, the cycle's current is determined by the flux of the chord -- i.e. the edge between $U_d$ and $U_0$, yielding
	\begin{align}
		J_c = \frac{\omega-\kappa}{n 2^d},
	\end{align}
	where the factor $n^{-1}$ stems from the steady state and is determined through depth $d$.
	
	The affinity of $c$ is 
	\begin{align}
		A_c = (d+1) \ln\left(\frac{\omega}{\kappa}\right).
	\end{align}
	Since the dimension is $|\mathcal C|=2^d$, we get an entropy production rate
	\begin{align}
		\dot S = \sum_{c \in \mathcal C}\dot S(c) =  \sum_{c \in \mathcal C} J_c A_c = \frac{d+1}{2^{d+1}-1}(\omega-\kappa)\ln\left(\frac{\omega}{\kappa}\right), \label{eq:tree:Sdot-micro}
	\end{align}
	where in the last step we used that the number of vertices is $n(d) = 2^{d+1}-1$.

	\subsection{Mistaken dynamics}
	We now analyze what happens if we mistakenly assume Markov dynamics of the lumped state space with lump depth $l$.
	Since the steady-state distribution is uniform over all vertices, a lump of depth $l$ has a steady-state probability 
	\begin{align}
		\Prob{\triangle}= \left( 2^{l+1}-1\right) \frac{1}{n}.
	\end{align}
	Suppose the triangle is rooted in level $U_i$. According to Eq.~(4) in~\cite{SUPyuInversePowerLaw2021} the upwards ``effective rate'' from the triangle is
	\begin{align}
		k_{\triangle_i \uparrow} = \frac{\kappa}{2^i \cdot  \left( 2^{l+1}-1\right)},
	\end{align}
	while the rate to a neighboring lump downward reads
	\begin{align}
		k_{\triangle_i \downarrow} = \frac{\omega}{2^{i+l+1} \cdot  \left( 2^{l+1}-1\right)}.
	\end{align}	
	For an edge between two tree lumps rooted at levels $U_{i-l-1}$ and $U_i$, we have
	\begin{align}
		&\dot S_{\text{single edge }\triangle_{i-l-1} \leftrightarrow \triangle_{i}} \\
		=&  \Prob{\triangle} \left( k_{\triangle_{i-l-1} \downarrow} - k_{\triangle_{i} \uparrow} \right)\ln \left(\frac{k_{\triangle_{i-l-1} \downarrow}}{k_{\triangle_{i} \uparrow}}\right)\\
		=& \frac{1}{n 2^i} \left(\omega - \kappa \right) \ln \left(\frac{\omega}{\kappa}\right).
	\end{align}
	There are $2^i$ such edges, which contribute 
	\begin{align}
		\dot S_{\text{all edges on level }\triangle_{i-l-1} \leftrightarrow \triangle_{i}} =  \frac{1}{n} \left(\omega - \kappa \right) \ln \left(\frac{\omega}{\kappa}\right).
	\end{align}
	There are $\displaystyle{\frac{d+1}{l+1}}$ such transitions between levels, so that we get with $n \equiv n(d) = 2^{d+1}-1$ an entropy production rate
	\begin{align}
		\dot S_M &= \frac{d+1}{2^{d+1}-1} \frac{1}{l+1} (w - k)   \ln\left(\frac{\omega}{\kappa}\right)\\
		&=  \frac{d+1}{2^{d+1}-1} \frac{\ln(2)}{\ln(\lambda+1)} (w - k)   \ln\left(\frac{\omega}{\kappa}\right) \propto \frac{1}{\ln(\lambda+1)},
	\end{align} 
	where  in the last step we used that $\lambda \equiv n_0/n_s = 2^{l+1}-1$.
	This result is illustrated for the case $d=23$ (and thus $n=16777215$) in Fig.~2b in the letter.

	\subsection{Mesosystem}
	We now derive the entropy production rate of the lumped mesosystem when memory is accounted for. Via Laplace transforms, we obtain the splitting probabilities. We defer the derivation to the more general setting in Section~\ref{sec:diamond-tree-details}. For the special setting of the tree graph, we obtain using Eq.~\ref{eq:splittings-tree-lump-in-tree-diamond} for the ratio of splitting probabilities
	\begin{align}
 \frac{\Phi_{+ \mid +}}{\Phi_{- \mid -}} = \left(\frac{ \omega}{\kappa} \right)^{l+1}.
\end{align} 
There are $\frac{d+1}{l+1}$ lump levels. The net flux between each level is $J=\frac{\omega-\kappa}{n}$, yielding a $2^{\rm nd}$ order estimate of the entropy production rate
\begin{align}
	\dot S_2^{\rm est} = J \ln\left(\frac{\Phi_{+ \mid +}}{\Phi_{- \mid -}}\right) = (d+1) \frac{\omega -\kappa}{2^{d+1}-1} \ln\left(\frac{ \omega}{\kappa} \right),
\end{align}
which is exact by Eq.~\eqref{eq:tree:Sdot-micro}.

\section{Sierpinski-type graph}
In this section, we first calculate the dissipation rate on the fractal Sierpinski-type graph of depth $d=3$ on each length scale and then show a comparison of the estimator for the entropy production with the analytic solution up to depth $d=6$ ($n=3072$ vertices).

\subsection{Decomposition in cycle basis}
We decompose the Sierpinski-type~graph of depth $d=3$ in a cycle basis. To do so, we consider the spanning tree depicted as in Fig.~\ref{fig:sierpinski-spanningtree}. Recall from the Letter that we have polygons driven by rate $\omega$ at different depths: At the finest scale (depth $1$ subgraphs), the triangles are driven with rate $\omega$. At the next scale (depth $2$) the hexagons are driven by $\omega$. Finally, the outermost regular polygon, the nonagon (depth $3$), is also driven by $\omega$. See Fig.~1c in the Letter for an illustration of the driving. All other edges (and the backward edges of the driven edges) have a transition rate $1$.

\begin{figure}[hbt]
  \includegraphics[width=0.5\linewidth]{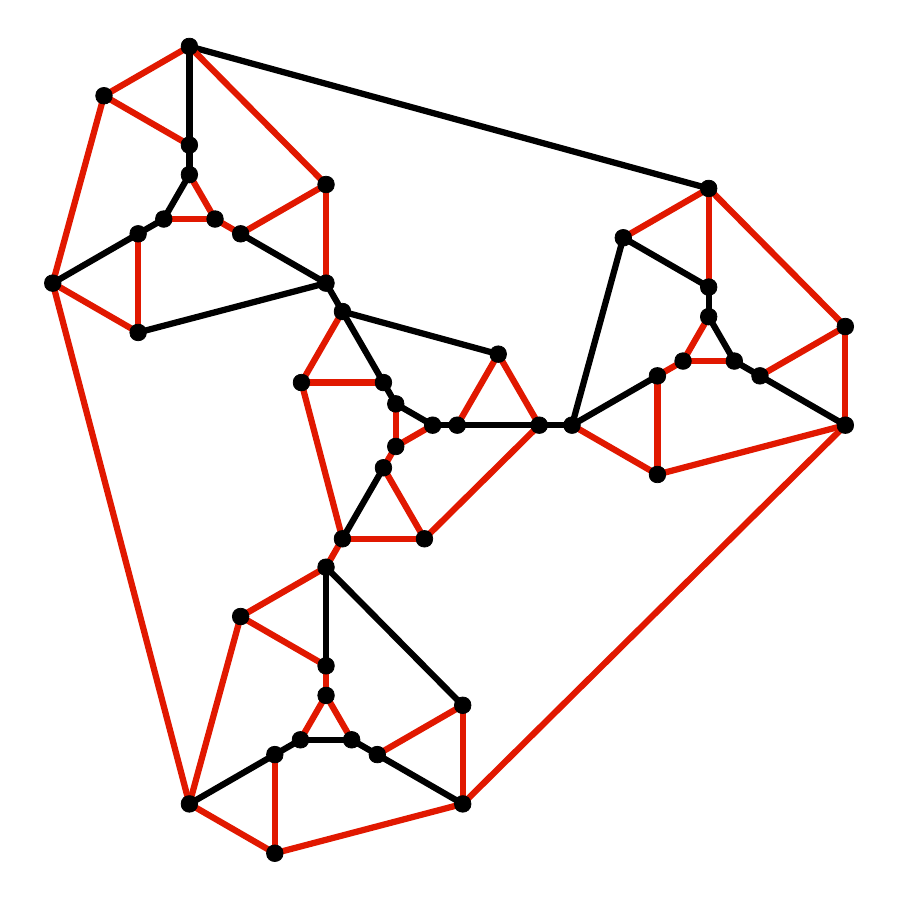}
  \caption{Spanning tree for Sierpinski-type graph of depth $d=3$ inducing the considered cycle decomposition.}
  \label{fig:sierpinski-spanningtree}
\end{figure}

The elements of the induced cycle basis with non-zero current are depicted in Fig.~\ref{fig:sierpinski-treedecomp}.  These basis elements coincide with the driven cycles at depth $1$ (blue), depth $2$ (green), and depth $3$ (orange). While the inner subgraph $D$ at depth 2 recomposes perfectly into its driven cycles, each outer subgraph at depth 2 has in addition two corrective cycles, which will be hidden at coarse~grainings of scale $\lambda \geq 12$ (depth $\geq 2$).

\begin{figure}[hbt]
  \includegraphics[width=0.5\linewidth]{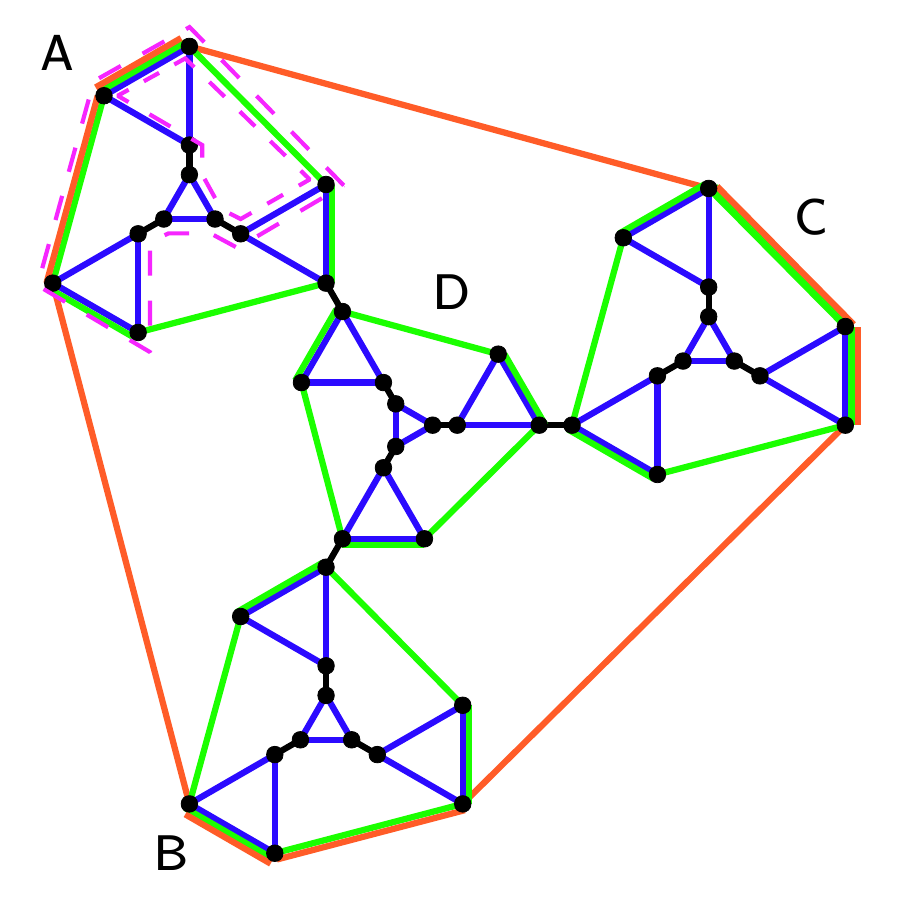}
  \caption{Basis cycles with non-zero current. These basis elements coincide with the driven cycles of depth $1$ (blue), depth $2$ (green), and depth $3$ (orange) as well as two corrective cycles in each of the outer subgraphs A, B, and C. These correction cycles are indicated by dashed lines in subgraphs $A$.}
  \label{fig:sierpinski-treedecomp}
\end{figure} 

A driven regular cycle at depth $l$ has an affinity of $A_l = 3l \log{\omega}$. When driven with $\omega=2$, the contribution of each scale is shown in Table.~\ref{tab:cycle-basis-contribution}.

\begin{table}[hbt]
  \caption{Contribution of cycle basis elements at different scales.}
  \begin{tabular}{l|cc}
    Basis cycles & Contribution to $\dot S$ \\
    \hline
    Depth 1& 0.59 \\
    Depth 2& 0.17\\
    Depth 3& 0.07\\
    Corrective cycles & -0.01
    \end{tabular}
  \label{tab:cycle-basis-contribution}
\end{table}

\subsection{Comparison of stochastic estimator to analytic solution}

We compare the entropy production rate obtained by the stochastic estimator for the microscopic graph to the analytic entropy production rate obtained by solving for the nullspace of the estimator, at a driving of $\omega=2$. Note that, remarkably, although the graph becomes very large (at depth $d=6$ we have $n=3072$ many vertices), the matrix is sparse enough to find the steady state analytically at that size. 
The results summarized in Table~\ref{tab:comparison-our-est-to-anlytic} confirm that the estimator is indeed very precise.
\begin{table}[hbt]
  \caption{Estimated entropy production rate $\dot S^{\rm est}_1$ vs.\ analytic solution $\dot S$ for Sierpinski-type graphs of different depth. The estimation is based on trajectories of length $10^8$ steps, after skipping $10^8$ initial "equilibration" steps.}
  \begin{tabular}{l|ccc}
    Depth of Sierpinski-type graph &$n$ & $\dot S^{\rm est}_1$ & $\dot S$ \\ \hline 
    2 & 12   & 0.8011 & 0.8010\\
    3 & 48   & 0.8255 & 0.8259\\
    4 & 192  & 0.8325 & 0.8323\\
    5 & 768  & 0.8349 & 0.8341\\
    6 & 3072 & 0.8349 & 0.8346\\
    \end{tabular}
  \label{tab:comparison-our-est-to-anlytic}
\end{table}

\section{Brusselator}
For the Brusselator we choose the parameters as in \cite{SUPyuInversePowerLaw2021}. Note that the state space can be reduced from $500 \times 500$ states to e.g.\ $450 \times 450$ states due to the location of the limit cycle in the steady state. This can be verified numerically, by plotting the average time spent in a given state in the steady state. We show this statistic in Fig.~\ref{fig:brusselator-histogram} for a trajectory of length $5*10^8$ after skipping an initial phase of $10^9$ steps ensuring convergence to the steady state, where in particular the right and top region is $0$ (black). Indeed, summing the steady-state probabilities for $x>450, y>450$, we get $2.45e-6$.

In the Sierpinski-type graph, we were able to exploit the sparsity of the generator to find a basis for the null space. However, in the case of the Brusselator, this corresponds to finding the basis of the null space of a $202500 \times 202500$ matrix. We are not aware of any computational solution providing reliable results without significant errors for such matrix sizes, therefore we resort to stochastic simulations.
\begin{figure}[hbt]
  \includegraphics[width=0.5\linewidth]{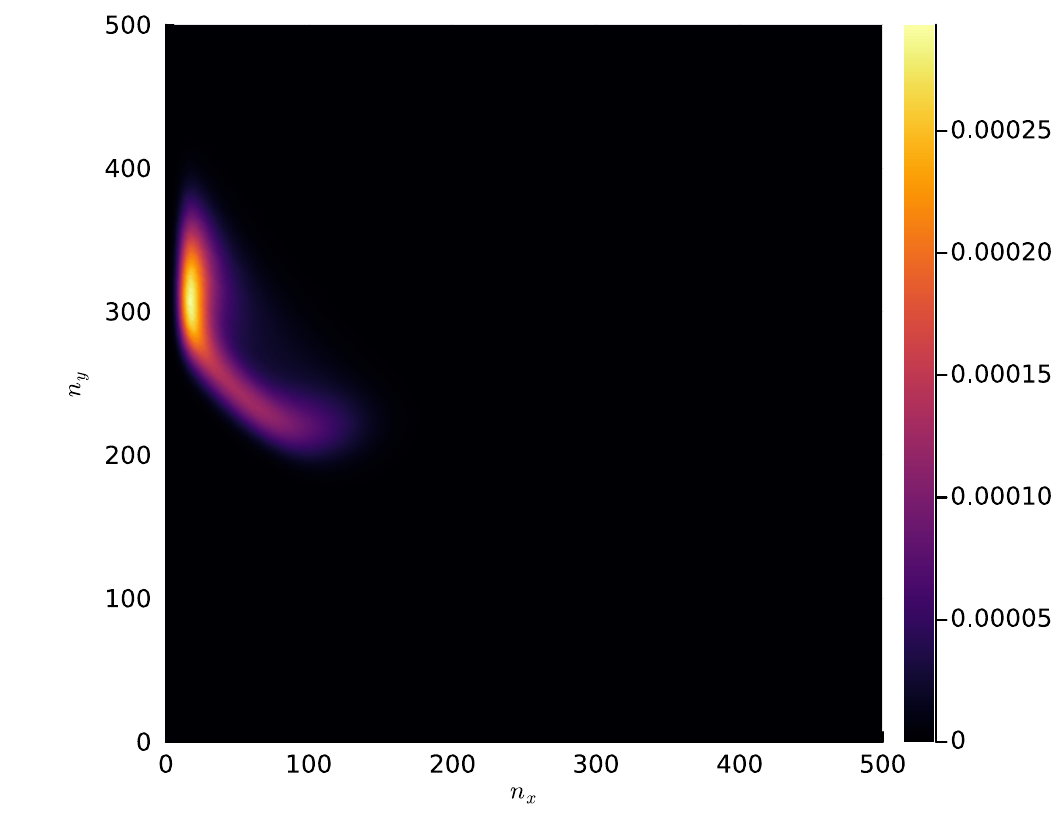}
  \caption{Average time spent in a given state in the steady state of the Brusselator. The results were inferred from a trajectory of length $5*10^8$ upon skipping $10^9$ steps to reach the steady state.}
  \label{fig:brusselator-histogram}
\end{figure}

\section{Tree diamond graph}\label{sec:diamond-tree-details}
	Here we consider a depth-$d$ tree~diamond~graph as described in Appendix~C and Fig.~5a in the Letter.
	
	\subsection{Construction}
	The tree diamond is constructed from two perfect binary trees of edge depth $d$, the top one rooted at the top, the bottom one rooted at the bottom, merging them to have the same set of leaves, and connecting the two roots with a bidirectional edge. Fig.~5a in the Letter shows a tree diamond of depth $d=5$.  Hence, the tree diamond has
 \begin{align}
     n\equiv n(d) =3\cdot 2^{d}-2 \label{eq:tree-diamond:n}
 \end{align}
vertices. In this way, we obtain edges on levels $U_0, \ldots, U_{2d}$:  Levels $U_0, \ldots U_{d-1}$ are the levels of the upper tree, level $U_d$ are the child nodes of both trees and levels $U_{d+1}, \ldots, U_{2d}$ are the levels of the bottom tree. Hence, the very top level $U_0$ consists of only one vertex, as does the very bottom level $U_{2d}$. The rates along edges between levels of the top tree halve each step and double between levels of the bottom tree.
  
  When walking on the tree, we go in \emph{positive} direction, if we go from a level $U_i$ to $U_{i+1 \mod 2d+1}$.
Hence, the rates of the edges between levels $U_{i}$ and $U_{i+1 \mod 2d}$ sum to $\omega$ in $+$ direction and $\kappa$ in $-$ direction. From $0 =  L p$ we see that the steady state is the uniform distribution over all vertices.

 For any $l\in \N$ with $d+1 = 0 \mod l+1$,  we perform a coarse-graining of depth $l$ as follows. We start from the top root, lumping trees of depth $l$ as we go downwards towards the leaves. We do the same from the bottom root upwards. Those up- and down-facing trees in the center, which share the same set of leaves, are merged pairwise into one lump, which is again a smaller diamond tree. A lumping of depth $l=1$ and $l=2$ is shown in Fig.~5a in the Letter in shaded blue and dashed red, respectively.

	\subsection{Microscopic dynamics} \label{sec:tree-diamond-microsopic-entropy-production-rate}
	We provide the following two different ways of estimating the \emph{microscopic} entropy production rate $\dot S$: One via the direct formula for Markovian systems which was used by Yu et al.\ \cite{SUPyuInversePowerLaw2021} for \emph{both} Markovian and non-Markovian systems. The second derivation is based on the graph decomposition in its cycle basis and hence is more instructive for the understanding of dissipative cycles. 
	
	\subsubsection{\texorpdfstring{$\dot S$}{dS} via edges}
	Let us consider an edge in the top tree, i.e.\ from level $U_i$ to level $U_{i+1}$ with $i\in \{0, \ldots, d-1\}$.
	 This edge contributes to the entropy production rate as 
	\begin{align}
		\dot S_{\text{single edge }U_i \leftrightarrow U_{i+1}} = \frac{1}{n} \frac{1}{2^{i+1}} (\omega - \kappa )\ln \left(\frac{\omega}{\kappa}\right),
	\end{align}
	where the first factor comes from the uniform steady-state probability distribution and the second from the transition rates. 
	There are $2^{i+1}$ such edges, so that 
	\begin{align}
		\dot S_{\text{level }U_i \leftrightarrow U_{i+1}} = \frac{1}{n} (\omega - \kappa )\ln \left(\frac{\omega}{\kappa}\right).
	\end{align}
	Since there are $2d$ such levels (edges in the bottom half contribute equally by symmetry) as well as one edge connecting the bottom to the top, we have an overall entropy production rate of
	\begin{align}
		\dot S = \frac{2d+1}{n(d)}(\omega - \kappa)\ln \left(\frac{\omega}{\kappa}\right),\label{eq:diamond-Sdot-micro-appendix}
	\end{align}
which is stated in the Letter. 
 
	\subsubsection{Cycle decomposition}
Let our diamond tree be the graph $G=(V,E)$. $G$ has a cycle basis $\mathcal C = \{c_1, \ldots, c_{2^d}\}$ of size $|\mathcal C| = 2^d$, with each basis cycle $c_i$ being the unique cycle of length $2d+1$ going through a unique node in level $L_d$. To see this, take as spanning tree $T$ the graph $G$ with all edges between level $L_d$ and $L_{d+1}$ removed. Adding any single edge between $L_d$ and $L_{d+1}$ will create a cycle. We call the added edge a \emph{chord} and identify uniquely each cycle $c_i$ with its chord $h_i \in E$. 
Four of these cycles are highlighted at the central levels in Fig.~5a (inset) of the Letter.
	
	We focus on a cycle $c_i$. This cycle's flux is determined by the flux of its chord $h_i$:
	\begin{align}
		J_{c_i} = \frac{\omega - \kappa}{n2^d},
	\end{align}
	where the factor $n^{-1}$ stems from the steady state and $2^{-d}$ from the transition rates.
	The affinity of $c_i$ is
	\begin{align}
		A_{c_i}=(2d+1) \ln\left(\frac{\omega}{\kappa}\right),
	\end{align}
	since the cycle has length $|c_i| = 2d+1$.
	Thus, the cycle's entropy production rate is \cite{SUPschnakenbergNetworkTheoryMicroscopic1976}
	\begin{align}
		\dot S_{c_i} = J_{c_i} \cdot A_{c_i} = \frac{2d+1}{n 2^d} (\omega-\kappa) \ln\left(\frac{\omega}{\kappa}\right).
	\end{align}
	Since there are $2^d$ basis cycles, we have an overall entropy production rate of 
	\begin{align}
		\dot S  = \sum_{c_i}  J_{c_i} \cdot A_{c_i} = \frac{2d+1}{n(d)} (\omega-\kappa) \ln\left(\frac{\omega}{\kappa}\right),
	\end{align}
	where we indicated with $n(d)$ that $n$ is determined by $d$, cf.\ Eq.~\eqref{eq:tree-diamond:n}.

	\subsection{Mesosystem}
	We now consider a coarse-graining of the tree diamond graph yielding a $2^{\rm nd}$-order semi-Markov process. Hence, we need to condition the previous, current, and next mesoscopic states to have fully defined waiting times. We hence introduce 
	$\Psi_{x \mid u , y}(t)$ for $x,y \in \{+,-\},\; u \in V$ as the distribution of transitioning from $u$ in $x$ direction by time $t$ given that we transitioned to $u$ in $y$ direction at time $t=0$. The splitting probability is the then limit
	$\Phi_{x \mid u , y}\equiv \lim_{t \rightarrow \infty}\Psi_{x \mid u , y}(t)$.
	
	We now determine $\Phi_{+ \mid u , +},\Phi_{- \mid u , -}$ for $u$ being the entry/exit point of both (a) tree and (b) tree-diamond lumps.
	
	\subsubsection{Splitting probability \texorpdfstring{$\Phi$}{Phi} for tree lump}
	Consider a tree of (edge) depth $l$, which is w.l.o.g.\ rooted at its top vertex. We label the vertex levels \emph{within that tree} by $W_i,\; i\in \{0,\ldots, l\}$ so that level $W_i$ has $|W_i| = 2^i$ vertices.
	By symmetry, it suffices to consider $P_{W_i}(t)$, the probability of being in \emph{any} of the vertices of level $W_i$. 
	Let $(\lambda \omega, \lambda \kappa)$ for $\lambda \in \{2^{-k} \mid k \in \N_0 \}$ be the weight of the edge entering the tree's root (i.e.\ $W_0$) from above. The choice of $k$ depends on the level $U_j$ of the diamond tree in which the considered tree is rooted.
 
	We can describe the evolution by the following system of ordinary differential equations. For $i \in \{1, \ldots, l-1\}$, we have 
	\begin{align}
		\frac{dP_{W_i}(t)}{dt} =\lambda \left(  \omega P_{W_{i-1}}(t) + \kappa P_{W_{i+1}}(t)- (\kappa + \omega) P_{W_i}(t) \right),
		\label{eq:diamondtree:tree:odeLi}
	\end{align}
	whereas for the boundaries $i\in\{0,l\}$, we have 
	\begin{align}
		\frac{dP_{W_0}(t)}{dt} &=\lambda \left( \kappa P_{W_{1}}(t)- (\kappa + \omega) P_{W_0}(t) \right)	\label{eq:diamondtree:tree:odeL0}\\
		\frac{dP_{W_l}(t)}{dt} &=\lambda \left(\omega P_{W_{l-1}}(t)- (\kappa + \omega) P_{W_l}(t) \right). 	\label{eq:diamondtree:tree:odeLd}
	\end{align}
	Since we consider first-passage times, we place an absorbing state $S$ at the lower-labeled end, and an absorbing state $D$ at the higher-labeled end with
	\begin{align}
		\frac{dP_{S}(t)}{dt} &= \lambda \kappa P_{W_{0}}(t)\\
		\frac{dP_{D}(t)}{dt} &= \lambda \omega P_{W_{d}}(t). \label{eq:diamondtree:tree:odeLDRAIN}
	\end{align}
	Since these two states do \emph{not} influence the probability distribution of the non-absorbing states, we first solve the system of non-absorbing states and then for the absorbing states.
	The initial distribution is either $\1_{W_0}$ (in the $+ \mid +$ setting) or  $\1_{W_l}$ (in the $- \mid -$ setting).
	
To solve the system of $l+1$ differential equations, we Laplace transform ($\tilde{P}_{W_i}(s)\equiv \int_{0}^{\infty} P_{W_i}(t) \exp(-st)dt$). Solving for the Laplace-transformed distributions reduces to solving the following tridiagonal linear system of equations
\begin{align}
\tridiagmatrix{s+\lambda k+\lambda w}{-\lambda k}{-\lambda w} 
\begin{pmatrix}
	\tilde{P}_{W_0}(s)\\
	\\
	\vdots \\
	\\
	\tilde{P}_{W_l}(s)
\end{pmatrix} = 
\begin{pmatrix}
	a\\
	0\\
	\vdots \\
	0	\\
	1-a
\end{pmatrix},
\label{eq:diamondtree:tree:LSE}
\end{align}
where $a \equiv 1$ iff we start at the root (in the $+ \mid +$ setting) and $a \equiv 0$ else (in the $- \mid -$ setting).

For any $l$, we can solve the above equation to obtain $\tilde{P}_{W_i}(s)$ explicitly. It remains to perform an inverse Laplace transform, for which we use Cauchy's residue method. We show the result here for $l=2$ in the $+ \mid +$ setting, the "program" being analogous for every other value of $l \in \N$ and directional conditions.
Solving Eq.~\eqref{eq:diamondtree:tree:LSE} yields 
\begin{align}
	\tilde{P}_{W_2}(s) = \frac{\lambda^2 \omega^2}{(s+ \lambda \kappa + \lambda \omega)\left[ (s+ \lambda \kappa + \lambda \omega)^2 -2 \kappa \lambda^2 \omega \right]}.
\end{align}
The roots of the denominator are 
\begin{align}
	s_1 \equiv -\lambda (\kappa+ \omega) \\
	s_{2/3} \equiv  -\lambda (\kappa+ \omega) \pm  \lambda \sqrt{2\kappa \omega},
\end{align}
yielding residue values of $-\frac{\omega}{2\kappa}, \frac{\omega}{4 \kappa}, \frac{\omega}{4\kappa}$ (the first derivative of the denominator is non-zero for all of $s_1, s_2, s_3$). Thus

\begin{align}
	P_{W_2}(t) = \frac{\omega}{2\kappa} \left\{-\exp\left[-\lambda (\kappa+ \omega)  t \right]+ \frac{1}{2}\exp\left[-\lambda (\kappa+ \omega) t-  \lambda \sqrt{2\kappa \omega} t \right] + \frac{1}{2}\exp\left[-\lambda (\kappa+ \omega) t+  \lambda \sqrt{2\kappa \omega} t \right]  \right\},
\end{align}
and hence by Eq.~\eqref{eq:diamondtree:tree:odeLDRAIN}
\begin{align}
P_{W_D}(t) =  \frac{\lambda \omega^2}{2\kappa} \left\{-\exp\left[-\lambda (\kappa+ \omega)  t \right]+ \frac{1}{2}\exp\left[-\lambda (\kappa+ \omega) t-  \lambda \sqrt{2\kappa \omega} t \right] + \frac{1}{2}\exp\left[-\lambda (\kappa+ \omega) t+  \lambda \sqrt{2\kappa \omega} t \right]  \right\}.
\end{align}
As we argued before, this first-passage time corresponds to the distribution of transitioning in a positive direction, given that we transitioned at time $t=0$ in a positive direction to a tree lump. More precisely, this means for $\triangle$ being a tree mesoscopic state rooted att its lowest microstate
\begin{align}
	\Psi_{+ \mid \triangle , +}(t) = P_{W_D}(t) = \frac{\lambda \omega^2}{2\kappa} \left\{-\exp\left[-\lambda (\kappa+ \omega)  t \right]+ \frac{1}{2}\exp\left[-\lambda (\kappa+ \omega) t-  \lambda \sqrt{2\kappa \omega} t \right] + \frac{1}{2}\exp\left[-\lambda (\kappa+ \omega) t+  \lambda \sqrt{2\kappa \omega} t \right]  \right\}.
\end{align}
We may proceed analogously with the initial condition $a=1$ to obtain a solution for $P_{W_0}(t)$ and
\begin{align}
	\Psi_{- \mid \triangle , -}(t) = \frac{\lambda \kappa^2}{2\omega} \left\{-\exp\left[-\lambda (\kappa+ \omega)  t \right]+ \frac{1}{2}\exp\left[-\lambda (\kappa+ \omega) t-  \lambda \sqrt{2\kappa \omega} t \right] + \frac{1}{2}\exp\left[-\lambda (\kappa+ \omega) t+  \lambda \sqrt{2\kappa \omega} t \right]  \right\}.
\end{align}
since the roots $s_i$ have the same values as in the positive direction case~\footnote{Even stronger, inverting the matrix from Eq.~\eqref{eq:diamondtree:tree:LSE} yields the same denominator for each entry, hence we get the same roots, and hence the same powers of the exponentials in the waiting time distribution.}. Note that $s_i<0$ for all $\lambda, \kappa, \omega > 0, i \in [3]$, so that the distributions indeed converge for $t \rightarrow \infty$.

In the exemplary case above, we chose $l=2$. For general $l$, the ratio of splitting probabilities yields
\begin{align}
 \frac{\Phi_{+ \mid \triangle, +}}{\Phi_{- \mid \triangle, -}} = \frac{\lim_{t \rightarrow \infty} 	\Psi_{+ \mid \triangle , +}(t) }{\lim_{t \rightarrow \infty} 	\Psi_{- \mid \triangle , -}(t) } = \left(\frac{ \omega}{\kappa} \right)^{l+1}. \label{eq:splittings-tree-lump-in-tree-diamond}
\end{align}
For a triangle rooted at its highest-labelled microstate ($\triangledown$), we obtain by symmetry
\begin{align}
	\frac{\Phi_{+ \mid \triangledown, +}}{\Phi_{- \mid \triangledown, -}} = \frac{\lim_{t \rightarrow \infty} 	\Psi_{+ \mid \triangledown , +}(t) }{\lim_{t \rightarrow \infty} 	\Psi_{- \mid \triangledown , -}(t) } = \left(\frac{\kappa}{\omega} \right)^{-(l+1)} = \frac{\Phi_{+ \mid \triangle, +}}{\Phi_{- \mid \triangle, -}}.
\end{align}
We corroborate the analytical results with computer experiments in Fig.~\ref{fig:tree-affinities}, and find that they are in excellent agreement.
\begin{figure}[ht]
	\centering
	\includegraphics{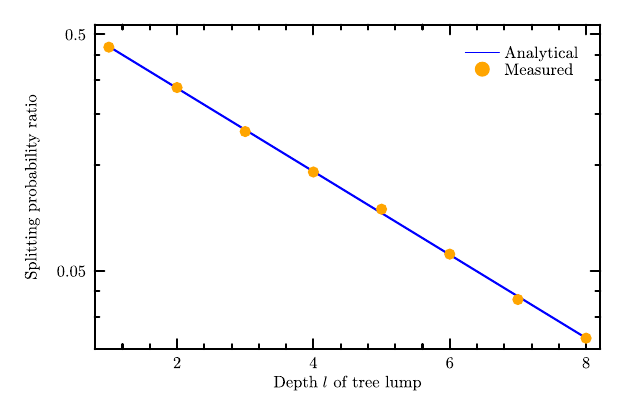}
	\caption{Splitting probability ratio for the tree lumps of depth $l\in[9]$ on a lin-log plot, tested here with $\omega=2, \; \kappa = 3$. We see that our results match with the experiments. Experiments were performed with trajectories of length $10^5$ states per tree size.}
	\label{fig:tree-affinities}
\end{figure}

	\subsubsection{Splitting probability \texorpdfstring{$\Phi$}{Phi} for tree diamond lump}
	
	We say that a tree diamond has depth $l$ if both its top and bottom tree have (edge-)~depth $l$.
	Analogously to the tree lump case, we obtain for the ratio of splitting probabilities
	\begin{align}
			\frac{\Phi_{+ \mid \diamondsuit, +}}{\Phi_{- \mid \diamondsuit, -}} = \left(\frac{\omega}{\kappa}\right)^{2l+1},
	\end{align}
	which we compare in Fig.~\ref{fig:diamond-affinities} to simulation results.
	
	\begin{figure}[ht]
		\centering
		\includegraphics{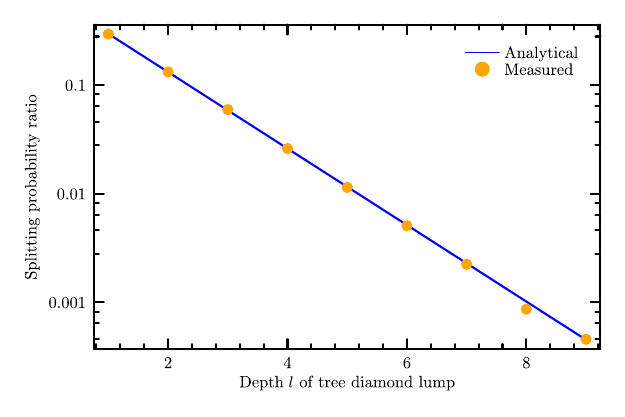}
		\caption{Splitting probability ratio for the tree diamond lump of depth $l\in[9]$ in a lin-log plot, tested here with $\omega=2, \; \kappa = 3$. The analytical results agree fully with computer experiments, which were performed with trajectories of length $5 \times 10^5$ states per tree diamond size.}
		\label{fig:diamond-affinities}
	\end{figure}
	
	\label{sec:tree-diamond-mesosystem-entropy-production-rate}
	Performing a lumping of depth $l$, we have $\frac{d-l}{l+1}$ many top-up triangle ($\triangle$) levels, one tree diamond ($\diamondsuit$) level,  and $\frac{d-l}{l+1}$ many bottom-up triangle ($\triangledown$) levels. If we group each level together and describe with $J = \frac{\omega-\kappa}{n}$ the net flux between each level (which is constant across levels), we have the entropy production rate
	\begin{align}
		J^{-1} \dS_2^{\rm est} &=\frac{d-l}{l+1} \ln\left(\frac{\Phi_{+ \mid \triangle, +}}{\Phi_{- \mid \triangle, -}}\right)+\frac{d-l}{l+1} \ln\left(\frac{\Phi_{+ \mid \triangledown, +}}{\Phi_{- \mid \triangledown, -}} \right) + \ln \left(\frac{\Phi_{+ \mid \diamondsuit, +}}{\Phi_{- \mid \diamondsuit, -}}  \right)\\
		&=\frac{d-l}{l+1} \ln\left(
		\left(\frac{\omega}{\kappa} \right)^{l+1}
		\right)+\frac{d-l}{l+1} \ln\left(
		\left(\frac{ \kappa}{\omega} \right)^{-(l+1)}
		\right) + \ln \left(
		\left(\frac{\omega}{\kappa}\right)^{2l+1}
		\right)\\
		&= (2d+1) \ln \left(
		\frac{\omega}{\kappa}\right),
	\end{align}
	which is exactly the microscopic entropy production rate, cf.\ Eq.~\eqref{eq:diamond-Sdot-micro-appendix}.
	
	\subsection{Average lump size}
	To get the average lump size, i.e.\ the average number of vertices per lump, for a given tree~diamond~graph of depth $d$ with lump depth $l$, we first find the number of tree diamond lumps and tree lumps.
	For the former, we note that every diamond lump is rooted at level $d-l$, hence the number of tree~diamond~lumps is
	\begin{align}
		n_{\diamondsuit} = 2^{d-l}.
	\end{align}
	For the latter, we observe that $2(2^{d-l}-1)$ many vertices belong to tree lumps, of which each has a size of $2^{l+1}-1$, yielding
	\begin{align}
		n_{\triangle} = \frac{2(2^{d-l}-1)}{2^{l+1}-1}
	\end{align}
	many triangle lumps.
	Thus, the average size per lump is 
	\begin{align}
		\lambda_{l,d}= \frac{n_{vtcs}}{n_\diamondsuit+n_\triangle} = \frac{(2^{l+1}-1)(3\cdot 2^d -2)}{2(2^d-1+2^{d-l-1})} \stackrel{d \gg l \gg 1}{\simeq} 3 \cdot 2^l.
	\end{align}

	\subsection{Mistaken dynamics}\label{sec:tree-diamond-mistaken-entropy-production-rate}
	We now explore what happens if the mesoscopic dynamics are \emph{mistakenly} assumed to be Markovian with effective rates. As before, we enumerate the vertex levels of the tree diamond graph by $0, \ldots, 2d$.
 
	Since the steady-state distribution is uniform over all vertices, a lumped tree of depth $l$ has probability 
	\begin{align}
		\Prob{\triangle}= \left( 2^{l+1}-1\right) \frac{1}{n}.
	\end{align}
	Suppose the triangle is rooted in level $U_i$, with w.l.o.g\ $i\in \{0, \ldots, d-2l-1\}$ \footnote{The last level in the upper half of the macroscopic tree diamond that roots a tree lump is level $d-2l-1$.}.
	The rate from the triangle upwards, following Eq.~(4) in \cite{SUPyuInversePowerLaw2021}, is 
	\begin{align}
		k_{\triangle_i \uparrow} = \frac{\kappa}{2^i \cdot  \left( 2^{l+1}-1\right)},
	\end{align}
	and the rate to a single lump downward is
	\begin{align}
		k_{\triangle_i \downarrow} = \frac{\omega}{2^{i+l+1} \cdot  \left( 2^{l+1}-1\right)}.
	\end{align}	
	For $i\in \{l+1, \ldots, d-2l-1\}$ and an edge between two tree lumps rooted at levels $U_{i-l-1}$ and $U_i$, we have
	\begin{align}
		&\dot S_{\text{single edge }\triangle_{i-l-1} \leftrightarrow \triangle_{i}} \\
		=&  \Prob{\triangle} \left( k_{\triangle_{i-l-1} \downarrow} - k_{\triangle_{i} \uparrow} \right)\ln \left(\frac{k_{\triangle_{i-l-1} \downarrow}}{k_{\triangle_{i} \uparrow}}\right)\\
		=& \frac{1}{n 2^i} \left(\omega - \kappa \right) \ln \left(\frac{\omega}{\kappa}\right).
	\end{align}
	There are $2^i$ such edges, such that we have 
	\begin{align}
		\dot S_{\text{all edges on level }\triangle_{i-l-1} \leftrightarrow \triangle_{i}} =  \frac{1}{n} \left(\omega - \kappa \right) \ln \left(\frac{\omega}{\kappa}\right).
	\end{align}
		For a tree diamond lump of depth $l$, we have analogously 
	\begin{align}
		\Prob{\diamondsuit}= \left( 3\cdot 2^{l}-2\right) \frac{1}{n},
	\end{align}
	with rates for a tree diamond lump rooted at level $i=d-l$ according to \cite{SUPyuInversePowerLaw2021}
	\begin{align}
		k_{\diamondsuit_i \uparrow} = \frac{\kappa}{2^i \cdot  \left( 3\cdot 2^{l}-2\right)}\\
		k_{\diamondsuit_i \downarrow} = \frac{\omega}{2^i \cdot  \left( 3\cdot 2^{l}-2\right)}.
	\end{align}
	Thus, for all edges from the triangles to diamonds in the upper half, we have 	
	\begin{align}
		\dot S_{\text{all edges on level }\triangle_{d-2l-1} \leftrightarrow \diamondsuit_{d-l}} &=  \frac{1}{n}  \left( \omega -\kappa \right) 
		\ln \left(\frac{\omega \left(3 \cdot 2^l -2 \right)}{\kappa \left(2^{l+1}-1\right)}\right),\\
		\dot S_{\text{all edges on level }\diamondsuit_{d-l} \leftrightarrow \triangle_{d+l+1}  } &=  \frac{1}{n}  \left( \omega -\kappa \right) 
		\ln \left(\frac{\omega \left(2^{l+1}-1\right) }{\kappa \left(3 \cdot 2^l -2 \right)}\right).
	\end{align}
	Finally, for the single edge connecting to the next macroscopic diamond tree, we have 
	\begin{align}
		\dot S_{\text{macroscopic edge}} =  \frac{1}{n} ( \omega - \kappa) \ln \left(\frac{\omega}{\kappa}\right).
	\end{align}	
	We have $2 \cdot \frac{d-l}{l+1}-2$ many edge-levels between triangle levels, $2$ edge-levels between triangles and tree-diamond lumps, and one macroscopic edge, leading to an overall \emph{mistaken entropy production rate} 
	\begin{align}
		\dot S_M &= \frac{1}{n(d)} \left(\omega - \kappa \right) \left( 2 \cdot \frac{d-l}{l+1}+1 \right)  \ln \left(\frac{\omega}{\kappa}\right) \neq \dot S = \dS_2^{\rm est},
	\end{align}
which is reported in the Letter.

\section{Interpreting scalings under consistent time reversal}
\label{sec:SI:interpreting-scalings}
An apparent power-law dependence of $\dSM$ on the coarse-graining scale
$\lambda$ generally has \emph{no} relation with the
microscopic dissipation mechanisms. However, meaningful information about the underlying dynamics may still be inferred from coarse-grained data under
consistent time reversal. More specifically, if we can find an order $n$ such that $\dS^{\rm est}_{n+k}=\dS^{\rm est}_{n}$ for any $k\ge 1$. If for such $n,k$ we find that $\dS^{\rm est}_{n+k}\propto \lambda^{-\alpha}$ for some $\alpha>0$, one \emph{concludes} that the coarse graining hides dissipative cycles (this is \emph{not} true for a scaling of $\dSM$). If in addition, we have information supporting a self-similar network structure, we may also assume that most energy is dissipated on the smallest and less energy on larger scales.

It is important to emphasize that generally \emph{no} such conclusion can be drawn if we observe a coarse-graining scale independence, i.e.\ ${\dS^{\rm
  est}_{n+k}\propto\lambda^0}$ for $\lambda>1$. Only if we observe independence extending to $\lambda=1$ we may conclude that no dissipative cycles become
hidden by the coarse graining. Extrapolations $\lambda\to 1$ generally yield erroneous conclusions about the microscopic dissipation time scales (for a  
detailed discussion, see Appendix~B).

\section{6-state model}
In the 6-state model depicted in Fig.~4a in the Letter, transitions marked with a $\rightarrow$ happen at rate $\omega$ in arrow-direction, and at rate $\kappa$ in opposite direction. As indicated in the figure, along the curved arrow from state $v_3$ to state $v_1$, the transition happens at rate $6 \omega$ in arrow-direction, and at rate $\kappa$ in opposite direction. Transition along the $\leftrightarrow$ between $v_3$ and $v_4$ happens at rate $1$ in both directions. For the plot in Fig.~4b, we chose $\omega = 2$, $\kappa = 3$.

For the computation time, we note that this runs even for very high-order estimators on a normal desktop machine. On a Mac Mini computer, calculating all estimators $\dS_k^{est}$ (that is, order $1, \ldots, 10$) takes on for the red line 15 seconds and for the blue line 25 minutes. Note that this is \emph{not} on a cluster and \emph{not} on a GPU, where the algorithm would run even faster. We developed, in addition, a GPU implementation for larger systems.

\section{Sufficient condition for semi-Markov process having at most order \texorpdfstring{$2$}{2}}

If in the lumping every pair of lumps is connected by at most one microscopic edge, the lumped dynamics are at most $2^{\rm nd}$-order semi-Markovian. 

This holds due to the following observation: Suppose vertex $v$ is in lump $L_v$, and vertex $u$ is in an adjacent lump $L_u$, connected by the microscopic edge $e=(v,u)$. Suppose the lumped dynamics transitions at time $t$ from $L_v$ to $L_u$. From knowing this sequence of two lumps and due to uniqueness of $e$ (assumption), we thus know that at time $t$, the current microscopic state is $u$. Hence, the memory is of order (at most) $2$. Further, since transitioning from lump $L_u$ to the next lump may lead via other vertices in $L_u$, the dynamics maybe be \emph{semi}-Markovian.

We can observe this in the Cycle graph (Example~1), the Tree graph (Example~2), as well as the Sierpinski-type graph (Example~3). The condition is, however, not met for the Brusselator (Example~4).

We thank the Reviewers for their constructive observation on this.

\section{Notion of self-similarity}
Self-similarity is described by Ref.~\cite{SUPfalconerFractals} as ``each set comprises several smaller similar copies of itself''. From this notion, multiple definitions of self-similarity of graphs have evolved \cite{SUPsongSelfsimilarityComplexNetworks2005, SUPkronGrowthSelfSimilarGraphs2004, SUPmalozemovPurePointSpectrum1995,SUPmalozemovSelfSimilarityOperatorsDynamics2003}. The definition in \cite{SUPsongSelfsimilarityComplexNetworks2005} focuses on an extension of the box-covering idea: For a given graph $G=(V,E)$ and integer $l \in \N$, we cover $G$ by subgraphs $G_1, \ldots, G_N$ of diameter (with respect to the edge distance) \emph{strictly less than} $l$ such that $V(G) = V(G_1) \cup \ldots \cup V(G_N)$. Let $N_l(G)$ denote the minimal such $N$. We have self-similarity by the box-counting method if there is a power law dependence between $N_l$ and $l$. That is, if there exists a $d\in \N$ such that for any reasonable $l$ we have 
$$ N_l \propto l^{-d}$$
the graph is said to have fractal dimension $d$ by the box counting method \cite{SUPsongSelfsimilarityComplexNetworks2005}.

The cycle graph is with this regard a trivial example with $ N_l \propto l^{-1}$, i.e.\ dimension $d=1$. The brusselator and its graphical representation is chosen as in \cite{SUPyuInversePowerLaw2021}, where the box-counting notion as described above is used to describe self-similarity. We refer to the discussion of self-similarity therein. 
For the Sierpinski-style graph the self-similarity is evident from the following observation: Starting with a simple triangle graph, say $G^{(0)}$, we add one vertex on each edge to obtain a cycle graph (isomorphic to $C_6$) and add six further inner vertices to obtain a graph as depicted in the inlet of Fig.~1c in the main text. Call this graph $G^{(1)}$. To obtain an even finer $G^{(2)}$, we apply this construction recursively on each of the four triangle-inducing subgraphs of $G^{(1)}$. Thus, the graph is by construction self-similar, similar to the constructions in \cite{SUPmalozemovSelfSimilarityOperatorsDynamics2003,SUPkronGrowthSelfSimilarGraphs2004}. An analogous construction follows for the tree graph, which consists of many copies of smaller trees. Note that in the case of the tree graph, we added an edge from each leaf to the root, to make the graph physically interesting (this allows for non-zero entropy production in the steady state).

\section{Comment on Scaling Law under additional time-scale separation Assumption}

In Appendix~D of the letter, we perform the Gedanken-Experiment of assuming in addition to Ref.~\cite{SUPyuInversePowerLaw2021} time-scale separation. This rectifies the Markovian assumption for scales of the time-scale separation and smaller. We have shown that a scaling of the (observed) entropy production rate is impossible. Here, we give the weights of the associated numerical experiment in Fig.~\ref{fig:c-36}.

\definecolor{pinkCustom}{HTML}{E80CC1}
\definecolor{intenseBlue}{HTML}{00D6CB}
\definecolor{yellowCustom}{HTML}{EBAF06}

\newcommand{\ColorOf}[1]{%
  \pgfmathsetmacro{\val}{#1}%
  \ifdim \val pt > 9999pt
    \def\edgecol{intenseBlue}%
  \else
    \ifdim \val pt > 999pt
      \ifdim \val pt < 10000pt
        \def\edgecol{pinkCustom}%
      \else
        \def\edgecol{black}%
      \fi
    \else
      \def\edgecol{yellowCustom}%
    \fi
  \fi
}

\newcommand{\CycleGraph}[3]{%
  \def\n{#1}%
  \def\forwardWeights{#2}%
  \def\backwardWeights{#3}%
  \begin{tikzpicture}[>=Stealth, every node/.style={font=\small}]
    \def\radius{8cm} 

    \foreach \i in {0,...,\numexpr\n-1\relax}{
      \node[circle, draw, minimum size=5mm, inner sep=0pt,fill=gray!30]
        (v\i) at ({360/\n * \i+90}:\radius) {\the\numexpr\i+1\relax};
    }
    \foreach \i in {0,...,\numexpr\n-1\relax}{
      \pgfmathtruncatemacro{\j}{mod(\i+1,\n)} 

      \pgfmathparse{\forwardWeights[\i]}\let\fw\pgfmathresult
      \pgfmathparse{\backwardWeights[\i]}\let\bw\pgfmathresult

      \ColorOf{\fw}
      \draw[->, bend left=48, draw=\edgecol]
        (v\i) to node[midway, fill=white, inner sep=1pt, text=\edgecol] {\fw} (v\j);

      \ColorOf{\bw}
      \draw[->, bend left=48, draw=\edgecol]
        (v\j) to node[midway, fill=white, inner sep=1pt, text=\edgecol] {\bw} (v\i);
    }
  \end{tikzpicture}%
}

\begin{figure}
\centering
\CycleGraph{36}{{
        1001, 1001, 10001, 1001, 1001, 2, 1001, 1001, 10001, 1001, 1001, 2, 1001, 1001, 10001, 1001, 1001, 2, 1001, 1001, 10001, 1001, 1001, 2, 1001, 1001, 10001, 1001, 1001, 2, 1001, 1001, 10001, 1001, 1001, 2
    }}{{
        1000, 1000, 10000, 1000, 1000, 1, 1000, 1000, 10000, 1000, 1000, 1, 1000, 1000, 10000, 1000, 1000, 1, 1000, 1000, 10000, 1000, 1000, 1, 1000, 1000, 10000, 1000, 1000, 1, 1000, 1000, 10000, 1000, 1000, 1
    }}
\caption{
Cycle graph $C_{36}$ with directional weights: We have slow edges (blue), fast edges (dark green), and very fast edges (light green). At the lumping of size $6$ (first lump are vertices $1, \ldots , 6$, etc.), we observe a time scale separation.}
\label{fig:c-36}
\end{figure}

\section{Notion of dissipative cycles}

We want to highlight the difference between a cycle and a \emph{dissipative} cycle: \emph{Not} every cycle is a dissipative cycle. For instance, in Fig.~5 of the letter, cycles get hidden, but the dissipative ones actually do \emph{not}.
The term ``dissipative cycle getting hidden'' means that a dissipative cycle would disappear entirely inside a lump.\vspace{0.2cm}\\ 
We can determine the set of dissipative cycles via a spanning tree \cite{SUPschnakenbergNetworkTheoryMicroscopic1976}: Let our graph be $G$. We draw below a spanning tree $T$ for the graph of Fig.~5 in the letter in green. The set of \emph{all} dissipative cycles is spanned by the set of cycles created when closing a single edge $e \in E(G)\setminus E(T)$. We depict in Fig.~\ref{fig:dissipative-cycle} one such cycle in orange (middle figure) and pink (right figure). Those are the cycles indicated in the inset of Fig.~5 in the letter.

\begin{figure}[hbt]
  \includegraphics{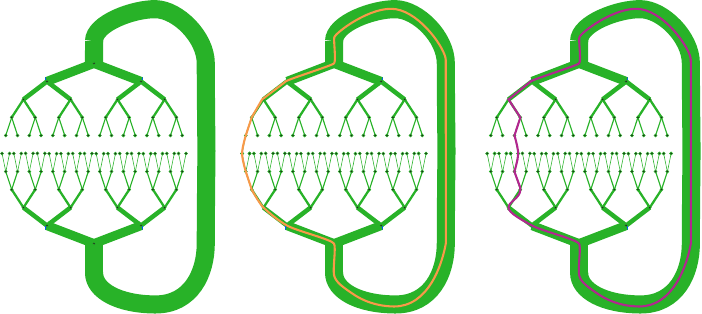}
  \caption{
  The spanning tree (left) of the tree-diamond graph and two dissipative cycles obtained via the spanning tree.
  }
  \label{fig:dissipative-cycle}
\end{figure}
In the diamond-tree graph, no lumping (except for the trivial lumping, where all vertices form one single big lump) will hide any dissipative cycles. For every lumping and every spanning cycle, at least one edge of that spanning cycle will transition between two distinct lumps. Indeed, spanning cycles which are too close to be distinguished under a current lumping will appear as a single, stronger cycle, but these are \emph{not} ``hidden''; The steady state current through the two microscopic cycles will be summed to the current through the mesoscopic cycle.\vspace{0.2cm}\\ 
If there were cycles hidden by the lumping---as it is the case in the Sierpinski-style graph---then there would exist a cycle which fully disappears inside a single lump.
We can thus conclude that \emph{no} dissipative cycles get hidden in the tree-diamond graph.

\section{Technicalities of the sampling procedure}

\subsection{Exploiting the symmetry of the graph}
In the cases of the tree graph and tree diamond graph, we use the graph symmetry to increase the statistical precision: all lumps on a given level have (1) equal steady state probability, (2) equal transition rates, predecessors, and successors (important for the memory of the stochastic process), and (3) no edge connects them. Thus, after simulating the microscopic dynamics, we may in the statistical analysis w.l.o.g.\ treat all lumps on a given level as equal, which increases the statistics per lump. Note that this is are very particular properties of the problems' symmetry and may \emph{not} be used in general, e.g.\ not for the Brusselator.
\subsection{Choice of parameters}
For the results shown in the Letter, we chose $w=2$, $\kappa=1$ for both the ring and tree graph. For the tree diamond graph, we chose $w=2, \kappa =3$.

For the Sierpinski-type graph (Fig.~3a in the Letter), trajectories were drawn after $N_{\rm skip}=10^9$ initial steps and are of length $N_{\rm steps}=5 \times 10^8$. The plot shows the average of $108$ trajectories, with almost vanishing error bars. We took $\omega =2$ for the driving.

For the Brusselator, (Fig.~3b in the Letter), trajectories were drawn after $N_{\rm skip}=10^9$ initial steps and are of length $N_{\rm steps}=10^9$. The plot shows the average of $25$ trajectories with almost vanishing error bars. We note that the state space is very large with many states having steady-state probability almost $0$ but non-vanishing transition rates. In the microscopic setting, $\lambda=1$, this may lead to undersampling of the steady state.
Similarly, order $k\geq 4$ would need even longer trajectories.

For the tree-diamond graph (Fig.~5b), we used $\omega=2$, $\kappa =3$, a depth of $d=23$, and the trajectories were drawn after $N_{\rm skip}=10^8$  and recorded for $N_{\rm skip}=10^9$ steps, repeated 10 times.

In all cases, the fitted lines were least-squares-fitted to the measurements on all scales except for the smallest and largest scale \cite{SUPyuInversePowerLaw2021}, which clearly deviate in the tree, diamond tree, Sierpinski-type graph, and Brusselator, i.e.\ where we do \emph{not} have a power law of the entropy production rate in the scale.

For the 6-state-model (Appendix~A in the Letter), we chose $\omega =2, \kappa = 3$ and ran for each line 10 simulations. For equilibration, we used $N_{skip}= 10^8$ steps.

\subsection{Subsampling for higher orders}

For higher-order estimators, we need exponentially longer trajectories, since statistics of tuples of length $k+1$ for order $k$ estimators need to be drawn and sampled sufficiently. We see in Fig.~\ref{fig:undersampling}a that the order $k=4$ estimator struggles for the coarsest scales, however \emph{not} due to high fluctuations among the $100$ trajectory measurements, but instead as a systematic under-sampling error. For $5$ times longer trajectories (Fig.~\ref{fig:undersampling}b) we see significant enhancement on the second coarsest scale. Increasing the trajectory length by a factor of $10$ (Fig.~\ref{fig:undersampling}c) does not yield significant improvement, underlining the non-linear requirement of sample statistics in the order $k$.
\begin{figure}[hbt]
  \includegraphics{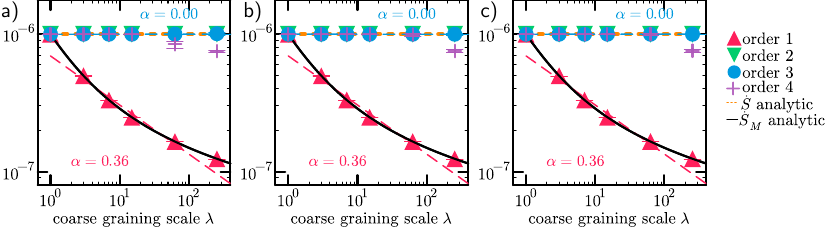}
  \caption{Entropy production rate on the tree graph as a function of the coarse-graining scale for different sample path lengths: (a) For paths of length $10^8$ (b) of length $5\cdot 10^8$ (c) of length $10^9$. All trajectories were recorded after initial $N_{\rm skip}=10^8$ many steps. Depicted are averages of $100$ such trajectories with standard deviation indicated through the error bars.}
  \label{fig:undersampling}
\end{figure}

\end{document}